\documentclass[11pt,letterpaper]{article}

\usepackage[margin=1in]{geometry}
\usepackage[T1]{fontenc}
\usepackage{lmodern}
\usepackage{amsmath,amssymb,amsthm,mathtools}
\usepackage{graphicx,xcolor,booktabs,array}
\usepackage{microtype,needspace}
\usepackage{tikz}
\usetikzlibrary{arrows.meta,calc,positioning,fit}
\usepackage[font=small,labelfont=bf]{caption}
\usepackage[colorlinks=true,allcolors=blue!55!black]{hyperref}
\hypersetup{
  pdftitle={The Intrinsic Cost of Quantum Syndrome Extraction},
  pdfauthor={Chia-Tung Chu},
  pdfsubject={Exact stabilizer measurements, intrinsic cut costs, and space-depth tradeoffs}
}
\numberwithin{equation}{section}
\newtheorem{theorem}{Theorem}[section]
\newtheorem{proposition}[theorem]{Proposition}
\newtheorem{corollary}[theorem]{Corollary}
\newtheorem{lemma}[theorem]{Lemma}
\theoremstyle{definition}
\newtheorem{definition}[theorem]{Definition}
\theoremstyle{remark}
\newtheorem{remark}[theorem]{Remark}

\newcommand{\cS}{\mathcal{S}}

\newcommand{\hw}{\mathrm{hw}}
\newcommand{\depth}{\operatorname{depth}}
\newcommand{\rank}{\operatorname{rank}}

\title{\textbf{The Intrinsic Cost of\\Quantum Syndrome Extraction}}
\author{Chia-Tung Chu\\[0.7em]
  \small Pritzker School of Molecular Engineering, The University of Chicago,\\
  \small Chicago, IL 60637, USA\\[0.3em]
  \small Chicago Quantum Institute, Chicago, IL 60637, USA}
\date{September 30, 2026}

\begin{document}
\maketitle
\begin{abstract}
How much time is fundamentally required to extract the syndrome of a stabilizer code on hardware with a given connectivity graph?
Previous work by Delfosse--Beverland--Tremblay (2021) and Baspin--Fawzi--Shayeghi (2023) established bounds connecting syndrome extraction to locality constraints on quantum hardware.
Here, we address this question by first formulating an intrinsic measure of the interaction required for syndrome extraction directly from the stabilizer group, independent of the generating set used to describe the code.
We then show that this measure yields hardware-dependent lower bounds on exact syndrome-extraction depth for arbitrary stabilizer codes.
Guided by this measure, we also construct annular surface-code examples with constant-depth local syndrome extraction that provide a counterexample to Corollary 1 of DBT21.
Furthermore, for a fixed quantum Tanner family, this measure strengthens the depth bound explicitly stated in BFS23 and, together with our matching construction, establishes an asymptotically tight space--depth tradeoff.
In theory, our results connect the information required by syndrome extraction to the communication capacity of a device; while in practice, they provide a way to assess unavoidable circuit costs before committing to a hardware layout.
\end{abstract}
\thispagestyle{plain}
\raggedbottom
\clearpage
\tableofcontents
\clearpage

\section{Introduction}
\label{sec:introduction}

Syndrome extraction is the basic operation of quantum error correction.
The checks of a stabilizer code are measured, the outcomes are decoded, and the cycle repeats for as long as the computation lasts.
On a physical device, each check is measured by a circuit built from the two-qubit interactions that the hardware provides, and the depth of the circuit that measures all checks determines how often the code can be corrected.
For codes whose checks are geometrically local, such as the surface code, this depth is a constant.
The codes that promise low overhead, however, are not of this kind.
Quantum low-density parity-check (qLDPC) codes can encode logical qubits with constant rate and growing distance \cite{panteleevAsymptoticallyGoodQuantum2022,leverrierQuantumTannerCodes2022}, which makes them attractive for low-overhead fault-tolerant quantum computation \cite{gottesmanFaulttolerantQuantumComputation2014,fawziConstantOverheadQuantum2018}.
In two dimensions, bounded qubit density and bounded-range generators force $kd^2=O(n)$ \cite{bravyiTradeoffsReliableQuantum2010}, so such a family cannot have geometrically local checks.
Its extraction circuits must therefore implement nonlocal checks with local operations, and they can do so by routing data with SWAP gates \cite[Section~VII]{baconSparseQuantumCodes2017b}, by using networks of additional ancillas \cite{delfosseBoundsStabilizerMeasurement2021}, or by consuming Bell pairs shared between modules \cite{chandraDistributedQuantumError2026}.
Each of these options costs quantum depth, and how much it costs depends on the code, on the connectivity of the hardware, and on the circuit chosen.

Two kinds of lower bounds on this depth are known.
Delfosse, Beverland, and Tremblay (DBT) count the listed checks whose support crosses a hardware cut, and they conclude that a family of qLDPC codes whose contracted Tanner graphs expand requires extraction depth $\Omega(n/\sqrt N)$ on a two-dimensional device with $N$ qubits \cite{delfosseBoundsStabilizerMeasurement2021}.
Baspin, Fawzi, and Shayeghi (BFS) instead follow the entanglement that any extraction circuit must create across correctable regions, and they obtain depth lower bounds in terms of the code parameters $[[n,k,d]]$ and $N$ \cite{baspinLowerBoundOverhead2023}.
Both bounds are stated for every circuit that measures the code, but the first takes as input a chosen check list and the second takes the global parameters of the code, and neither input determines the interaction that a particular hardware cut requires.

The chosen check list does not determine it because a circuit is free to measure any other generating set of the same stabilizer group.
An invertible recombination of the generators changes the list but not the measurement, since the syndrome sectors remain the same subspaces and the new outcomes are parities of the old ones, which classical processing recovers at no quantum cost.
Figure~\ref{fig:basis-paradox} shows the simplest instance, in which four checks $Z_{\ell_i}Z_r$ all cross a cut while an equivalent generating set crosses it once.
A bound that counts listed crossings can therefore be evaded by changing generators, and the usual restrictions to bounded weight and bounded incidence do not prevent this.
In Section~\ref{sec:presentation-expansion} we recombine the checks of a surface code on a square annulus into an independent generating list of bounded weight and incidence whose contracted Tanner graph expands, and we show that the original constant-depth circuit still measures the new list, in depth $13$, after its outcomes are relabeled.
This family satisfies the hypotheses of the local-expander corollary of DBT while $n/\sqrt N$ diverges, so that corollary, and the listed-crossing lower bound behind it, do not hold as stated.\footnote{We compare with \href{https://arxiv.org/abs/2109.14599v1}{the first arXiv version} of DBT in its own measurement model, and Section~\ref{sec:presentation-expansion} gives the precise statements.
The extraction constructions of that paper are unaffected.}

The code parameters do not determine it either, because they do not distinguish one region of the code from another.
Consider a good qLDPC code enlarged by $\Theta(n)$ added qubits, each fixed in $|0\rangle$ by a weight-one $Z$ check.
The enlarged code still has constant rate, linear distance, and bounded check weight and incidence, yet the added qubits share no stabilizer constraint with the rest of the code, and a hardware region containing only them can be measured without any operation crossing its boundary.
A bound in terms of $n$, $k$, $d$, and $N$ cannot see this difference, because it constrains the device as a whole rather than the region behind a given hardware bottleneck.

In summary, the existing bounds are expressed through a chosen presentation or through global code parameters, and the extraction cost of a particular cut can be lower than the first suggests and higher than the second detects.
This leads to our central question.
\begin{center}
\emph{How much depth does exact syndrome extraction of a stabilizer code require on hardware with a given connectivity graph?}
\end{center}
Here, exact extraction means that for every reported syndrome $s$ the circuit implements $\rho\mapsto\Pi_s\rho\Pi_s$ on every input, including inputs entangled with a reference, where $\Pi_s$ is the projector onto the syndrome sector $s$, and depth counts layers of quantum operations while classical processing and communication are free.
Section~\ref{sec:models} gives the precise model.

\begin{figure}[t]
\centering
\begin{tikzpicture}[x=1cm,y=1cm,font=\small,
  data/.style={circle,draw=blue!65!black,fill=blue!9,minimum size=4.5mm,inner sep=0pt},
  anc/.style={rectangle,draw=black!65,fill=white,minimum size=4mm,inner sep=0pt},
  cross/.style={red!70!black,densely dashed,line width=.85pt},
  localcheck/.style={green!40!black,line width=1.3pt},
  hardware/.style={black!55,line width=.8pt},
  cut/.style={black!60,densely dotted,line width=.8pt},
  title/.style={font=\small\bfseries,align=center},
  note/.style={align=center}]
\node[title] at (3.0,3.35) {(a) Listed checks};
\node[title] at (10.9,3.35) {(b) Equivalent generators};
\foreach \i/\y in {1/2.65,2/2.05,3/1.45,4/.85}{
  \coordinate (a\i) at (.9,\y);
  \coordinate (b\i) at (8.7,\y);
  \draw[cross] (a\i) -- (5.5,1.75);
}
\draw[localcheck] (b1) -- (b2) -- (b3) -- (b4);
\draw[cross] (b1) -- (13.3,1.75);
\draw[cut] (3.3,.55) -- (3.3,2.95);
\draw[cut] (11.1,.55) -- (11.1,2.95);
\foreach \i in {1,2,3,4}{
  \node[data] at (a\i) {$\ell_\i$};
  \node[data] at (b\i) {$\ell_\i$};
}
\node[data] at (5.5,1.75) {$r$};
\node[data] at (13.3,1.75) {$r$};
\node[fill=white,inner sep=1pt] at (3.3,2.97) {$U\mid U^c$};
\node[fill=white,inner sep=1pt] at (11.1,2.97) {$U\mid U^c$};
\node at (7.0,1.75) {$\Longrightarrow$};
\node[note] at (3.0,.15) {$S_i=Z_{\ell_i}Z_r$\\four crossing checks};
\node[note] at (10.9,.15) {$T_1=S_1,\quad T_i=S_{i-1}S_i$\\one crossing check: $\chi_U=1$};
\draw[black!20] (0,-.55) -- (14.3,-.55);
\node[title,anchor=west] at (0,-.95) {(c) A hardware cut};
\filldraw[fill=blue!5,draw=blue!45,rounded corners=4pt] (.3,-3.45) rectangle (6.35,-1.35);
\node[blue!65!black,anchor=north west] at (.45,-1.38) {$L$};
\draw[hardware] (1.2,-1.95) -- (3.2,-1.95) -- (5.2,-1.95) -- (10.2,-1.95);
\draw[hardware] (1.2,-3.0) -- (3.2,-3.0) -- (5.2,-3.0) -- (10.2,-3.0);
\draw[hardware] (1.2,-1.95) -- (1.2,-3.0);
\draw[hardware] (3.2,-1.95) -- (3.2,-3.0);
\draw[hardware] (5.2,-1.95) -- (5.2,-3.0);
\draw[hardware] (10.2,-1.95) -- (10.2,-3.0);
\node[data] at (1.2,-1.95) {$\ell_1$};
\node[data] at (3.2,-1.95) {$\ell_2$};
\node[data] at (1.2,-3.0) {$\ell_3$};
\node[data] at (3.2,-3.0) {$\ell_4$};
\node[anc] at (5.2,-1.95) {};
\node[anc] at (5.2,-3.0) {};
\node[data] at (10.2,-1.95) {$r$};
\node[anc] at (10.2,-3.0) {};
\draw[cut] (7.55,-3.45) -- (7.55,-1.35);
\node[fill=white,inner sep=2pt] at (7.55,-2.48) {$b=2$};
\node[note] at (3.3,-3.88) {$U=Q\cap L=\{\ell_1,\ldots,\ell_4\}$};
\node[data,minimum size=4mm] at (12.0,-2.05) {};
\node[anchor=west] at (12.35,-2.05) {data};
\node[anc] at (12.0,-2.85) {};
\node[anchor=west] at (12.35,-2.85) {ancilla};
\end{tikzpicture}
\caption{Equivalent checks and a physical boundary.
In the top panels, the four independent checks $S_i=Z_{\ell_i}Z_r$ are replaced by $T_1=S_1$ and $T_i=S_{i-1}S_i$ for $i=2,3,4$.
Three of the new checks lie entirely in $U=\{\ell_1,\ldots,\ell_4\}$, leaving the intrinsic crossing count $\chi_U=1$.
The lines in these panels describe check support.
The bottom panel instead shows hardware edges, where the vertex region $L$ contains data and ancillas while $U=Q\cap L$ contains only its data.
Only quantum operations across the physical boundary contribute to the cut cost, and changing generators changes neither that boundary nor the measurement task.}
\label{fig:basis-paradox}
\end{figure}

\subsection{Results}

\begin{table}[t]
\centering
\caption{Summary of results.
All statements concern exact, noiseless extraction.
The depth statements use the adaptive model of Section~\ref{sec:models}, the Bell-pair count uses the bipartite model with free local processing, and the fixed family is the quantum Tanner family of Theorem~\ref{thm:intro-uniform}.}
\label{tab:intro-results}
\begin{tabular}{@{}>{\raggedright\arraybackslash}p{0.17\textwidth}p{0.55\textwidth}>{\raggedright\arraybackslash}p{0.2\textwidth}@{}}
\toprule
Result & Statement & Where \\
\midrule
Exact cut cost & Every exact circuit uses at least $\chi_U/2$ crossing operations on every nonzero branch, and with free local processing $\chi_U$ shared Bell pairs are necessary and sufficient. & Theorems~\ref{thm:adaptive-instrument-cut-bound} and~\ref{thm:single-cut-resource} \\
\addlinespace
Uniform small-region cost & For the fixed family, $\chi_U\ge2\tau|U|$ whenever $|U|\le\eta n/2$, so every small bottleneck of every placement costs depth at least $\tau u/b$. & Theorem~\ref{thm:quantum-tanner-target-window}, Corollary~\ref{cor:quantum-tanner-bottleneck} \\
\addlinespace
Matching grid depth & Any bounded-weight, bounded-incidence presentation admits exact extraction in depth $O(\max\{1,n/\sqrt N\})$ on a chosen placement, and for the fixed family $ND_{\min}^2=\Theta(n^2)$ for $n\le N\le cn^2$. & Theorem~\ref{thm:adaptive-grid-width-upper}, Corollary~\ref{cor:fixed-family-grid-width-tight} \\
\addlinespace
Presentation obstruction & Annular surface codes with expanding bounded-weight, bounded-incidence presentations admit depth-$13$ extraction, so Theorem~1 and Corollary~1 of DBT v1 fail as stated. & Theorem~\ref{thm:annular-counterexample}, Corollary~\ref{cor:dbt-local-expander} \\
\bottomrule
\end{tabular}
\end{table}

In this work, we answer the above question through a single quantity determined by the stabilizer group, which we call the intrinsic cut cost.
For a set $U$ of data qubits, the intrinsic cut cost $\chi_U$ is the minimum number of generators with support on both sides of the cut between $U$ and its complement, over all generating sets of the stabilizer group.
We show that $\chi_U$ is exactly the interaction that the cut requires, that for one fixed family of quantum Tanner codes it grows linearly with $|U|$ on every sufficiently small region, so that every hardware bottleneck of every placement forces depth proportional to the ratio of the data behind it to the hardware edges leaving it, and that an adaptive compiler attains the resulting bound on square grids of every width, which determines the optimal space--depth tradeoff for that family.
Table~\ref{tab:intro-results} summarizes these results together with the presentation obstruction of Section~\ref{sec:presentation-expansion}.

To state them precisely, let $\cS$ be the stabilizer group of an $[[n,k,d]]$ qubit stabilizer code, regarded as a binary vector space of dimension $n-k$, and for a set $U$ of data qubits let $\cS_U$ be the subspace of stabilizers supported entirely in $U$.
The hardware is a graph whose $N$ vertices are qubit sites, of which a fixed set $Q$ of $n$ sites holds the data, and whose edges are the pairs of sites on which a two-qubit operation may act.
An extraction circuit consists of layers of disjoint one- and two-qubit instruments on these vertices and edges, so that gates, measurements, and resets are all counted, and later layers may depend on earlier outcomes.
Its ancillas start in product states, and every data qubit is returned to its own site at the end.
For a set $L$ of hardware vertices, the data inside it form the set $U=Q\cap L$, and the interaction available to it in one layer is bounded by the number of hardware edges leaving $L$.
The intrinsic cut cost of $U$ can be written as
\begin{equation}
\chi_U=\dim\cS-\dim\cS_U-\dim\cS_{U^c}=\rank H_U+\rank H_{U^c}-\rank H,
\label{eq:intro-chi-definition}
\end{equation}
where $H$ is any check matrix of $\cS$ and $H_U$ consists of its columns in $U$.
The first expression counts the constraints that remain once every constraint measurable within one side has been removed, and the second evaluates it with three binary ranks.
Lemma~\ref{lem:minimum-crossing-generators} shows that this dimension is the minimum crossing-generator count defined above, and Lemma~\ref{lem:correctable-cut-entropy} shows that it equals the mutual information between $U$ and $U^c$ in the maximally mixed code state.

We first show that $\chi_U$ is exactly the cost of a cut.
To this end, we pair every data qubit with a private reference qubit on its own side of the cut, so that the input carries no entanglement across the cut, and we observe that the conditional output for syndrome $s$ is proportional to the vectorized projector $|\Pi_s\rangle\!\rangle$, whose Schmidt spectrum across the cut is flat with rank $2^{\chi_U}$ (Lemma~\ref{lem:projector-choi-schmidt}).
Along any branch of an adaptive circuit, operations within one side cannot increase this rank, and a crossing two-qubit operation multiplies it by at most four, or by at most two for a CNOT, because these are the operator Schmidt ranks of the corresponding Kraus operators \cite{nielsenQuantumDynamicsPhysical2003}.
Conversely, some generating set has exactly $\chi_U$ crossing generators, and each of them can be measured with one shared Bell pair by the standard remote parity gadget \cite{beckmanCausalLocalizableQuantum2001,limTradeoffInformationGain2025}.
Entangled reference inputs have been used before to lower-bound the entanglement cost of nonlocal measurements \cite{bandyopadhyayEntanglementCostNonlocal2009}, and the Bell-pair bound below also follows from the general theorems of Lim, Hhan, and Kwon \cite{limTradeoffInformationGain2025} and of Akibue, Miyazaki, and Osaka \cite{akibueOptimizingEntanglementManipulation2026}, as Appendix~\ref{subsec:single-cut-prior-work} shows.
What the present argument adds is the stabilizer invariant that these results share when the task is syndrome extraction, a way to compute it without searching over generating sets, and a count on every nonzero branch that charges each crossing primitive by its own operator Schmidt rank.

\begin{theorem}[Exact cost of a cut, Theorems~\ref{thm:adaptive-instrument-cut-bound} and~\ref{thm:single-cut-resource} and Remark~\ref{rem:adaptive-sharp-constant}]
\label{thm:intro-cut-cost}
Let $L$ be a set of hardware vertices with $b$ boundary edges, and let $U=Q\cap L$.
\begin{enumerate}
\item\label{it:intro-cut-lower}
Every exact extraction circuit uses at least $\chi_U/2$ crossing two-qubit operations on every nonzero branch, and at least $\chi_U$ if every crossing operation is a CNOT.
Its depth is therefore at least $\chi_U/(2b)$ when $b>0$, and exact extraction is impossible when $b=0<\chi_U$.
\item\label{it:intro-cut-upper}
With free local processing and sufficient local storage on each side of the cut, exact extraction is possible with $\chi_U$ shared Bell pairs, with $\chi_U$ crossing CNOTs, or with $\lceil\chi_U/2\rceil$ arbitrary crossing two-qubit gates, and it is impossible with fewer.
\end{enumerate}
\end{theorem}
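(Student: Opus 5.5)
The plan is to prove both parts through the operator Schmidt rank across the bipartition $L\mid L^c$, using the rank identity \eqref{eq:intro-chi-definition}, the minimum-crossing-generator characterization (Lemma~\ref{lem:minimum-crossing-generators}) and the flat Schmidt spectrum of the vectorized projector (Lemma~\ref{lem:projector-choi-schmidt}).

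\emph{Lower bound.} Pair each data qubit in $U$ with a private reference qubit placed in $L$, and each data qubit in $U^c$ with one placed in $L^c$, each pair in a Bell state. The initial global state is then a product across $L\mid L^c$ once the product-state ancillas are included. Fix a nonzero branch of the adaptive circuit, that is, a sequence of outcomes with nonzero probability. The unnormalized conditional state is obtained by applying the corresponding sequence of Kraus operators. Exactness forces the reduced state on data plus references to be proportional to $|\Pi_s\rangle\!\rangle$, possibly tensored with some ancilla state. For this step I would note that the map $\rho\mapsto\Pi_s\rho\Pi_s$ is pure on this branch, so the ancillas must decouple; hence the Schmidt rank of the full branch state is at least that of $|\Pi_s\rangle\!\rangle$, namely $2^{\chi_U}$.

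Each Kraus operator of a layer factorizes over the disjoint instruments. Instruments acting inside $L$ or inside $L^c$ have operator Schmidt rank $1$ across the cut. A crossing two-qubit Kraus operator has operator Schmidt rank at most $4$, and a CNOT has rank exactly $2$. Schmidt rank is submultiplicative under applying operators of bounded operator Schmidt rank. With $m$ crossing operations on the branch this gives $2^{\chi_U}\le 4^m$, so $m\ge\chi_U/2$, and with CNOTs $2^{\chi_U}\le 2^m$, so $m\ge\chi_U$. A layer contains at most $b$ crossing operations, because its instruments are disjoint and each uses a boundary edge. Hence $\text{depth}\ge\chi_U/(2b)$, and when $b=0<\chi_U$ no branch can reach rank $2^{\chi_U}>1$. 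For the Bell-pair count in part~\ref{it:intro-cut-upper}, shared Bell pairs are included in the initial state, so each raises the initial Schmidt rank by a factor $2$ while local processing adds nothing. Fewer than $\chi_U$ pairs therefore cannot suffice, and the same counting handles fewer than $\chi_U$ CNOTs or fewer than $\lceil\chi_U/2\rceil$ arbitrary crossing gates.

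\emph{Upper bound.} By Lemma~\ref{lem:minimum-crossing-generators}, choose generators consisting of a basis of $\cS_U$, a basis of $\cS_{U^c}$, and $\chi_U$ crossing generators $g_j=g_j^U\otimes g_j^{U^c}$. Measure the first two families locally. These commute with the $g_j$, so the order of measurement is irrelevant. For each $g_j$, use the remote parity gadget: with one shared Bell pair, apply the controlled-$g_j^U$ from one half of the pair in $L$ and the controlled-$g_j^{U^c}$ from the other half in $L^c$, then measure both halves in the $X$ basis. The product of the two outcomes is the eigenvalue of $g_j$, and the two-sided projection is exact, after a Pauli correction if needed. Replacing the Bell pair by a CNOT between a fresh pair of ancillas gives the CNOT version. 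For $\lceil\chi_U/2\rceil$ general gates, one two-qubit gate such as $\mathrm{CNOT}\cdot\mathrm{SWAP}$-type entangling gates of Schmidt rank $4$ can prepare two Bell pairs across the cut from product ancillas.

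\emph{Main obstacle.} I expect the hardest step to be making the lower bound rigorous on every branch with adaptive, non-unitary instruments and ancillas. This means showing that exactness forces the branch state to have Schmidt rank at least $2^{\chi_U}$ even though the ancilla outputs are unconstrained. I would handle it by observing that purity of the target Kraus map forces the branch Kraus operator to equal $\Pi_s\otimes|\phi\rangle$ up to scale, and then applying operator Schmidt rank monotonicity directly to that operator.
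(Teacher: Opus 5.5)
Your lower bound and your Bell-pair and CNOT constructions follow the paper's argument. The ingredients are the same:
\begin{itemize}
\item local data--reference Bell pairs;
\item the rank-one Choi identity, which forces every nonzero trajectory to be proportional to $\Pi_s$ (keeping the ancilla factor $|\phi\rangle$ rather than contracting it, as the paper does, is harmless because it can only raise the Schmidt rank);
\item a factor of at most $4$, or $2$ for a CNOT, per crossing Kraus operator;
\item one remote-parity gadget per crossing generator of a cut-adapted basis.
\end{itemize}
One point of rigor: take a branch to be a complete Kraus trajectory. That means refining the unobserved Kraus indices of every instrument, including resets and discards, and purifying mixed product ancillas locally. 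A branch fixed only by observed outcomes is in general a mixture and has no Schmidt rank. Its crossing count equals that of each of its refinements, so nothing is lost by refining.

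The step that does not go through as written is the attaining protocol with $\lceil\chi_U/2\rceil$ arbitrary crossing gates. A two-qubit gate acting on two ancillas that start in a product pure state outputs a two-qubit pure state. Its Schmidt rank across the cut is at most $2$, so it yields at most one Bell pair, whatever the gate's operator Schmidt rank. Operator Schmidt rank $4$ is also not the right criterion: a partial SWAP $\cos\theta\,I+i\sin\theta\,\mathrm{SWAP}$ with small $\theta\neq0$ has rank $4$ but creates far less than two ebits.

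The missing ingredient is the local storage allowed in the hypothesis. Alice prepares $|\Phi\rangle_{aa'}$ and Bob prepares $|\Phi\rangle_{bb'}$ locally. A crossing SWAP on $a,b$ then produces $|\Phi\rangle_{ba'}\otimes|\Phi\rangle_{ab'}$, which is two shared Bell pairs. Your $\mathrm{CNOT}\cdot\mathrm{SWAP}$ also gives a flat Schmidt-rank-$4$ state on this input. Use $\lfloor\chi_U/2\rfloor$ such gates, plus one CNOT from $|+\rangle$ to $|0\rangle$ when $\chi_U$ is odd. This supplies the $\chi_U$ pairs your gadgets consume, and it is exactly the paper's construction.
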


Part~\ref{it:intro-cut-lower} requires no assumption on the distance, the rate, or the gate set, and it survives approximation, since an instrument whose output has squared fidelity $F$ with the ideal output on the reference input still has depth at least $(\chi_U-\log_2(1/F))/(2b)$ (Corollary~\ref{cor:approximate-instrument-cut-bound}).
Part~\ref{it:intro-cut-upper} shows that the bound is tight for the crossing resource.
It does not bound the elapsed depth of a real device, where moving states to a port and processing them locally take additional layers, and we return to this distinction in the discussion below.

Theorem~\ref{thm:intro-cut-cost} reduces the extraction depth of a device to the size of $\chi_U$ on the regions that its hardware can isolate.
For this reduction to give a useful bound, $\chi_U$ must be large on every region that a bottleneck might select, and the enlarged code above shows that rate and distance do not guarantee this.
We therefore prove it for one fixed family of quantum Tanner codes under the full hypotheses of the construction of Leverrier and Z\'emor \cite{leverrierQuantumTannerCodes2022}.
The difficulty is that $\cS_U$ contains not only the checks inside $U$ but also products of checks whose factors outside $U$ cancel.
Using the reduction theorem of that construction together with expansion, we give every such stabilizer a unique short description in terms of local check components, and bounding the dimension of all such descriptions yields the following volume law.

\begin{theorem}[Uniform small-region cost, Theorem~\ref{thm:quantum-tanner-target-window}]
\label{thm:intro-uniform}
Fix a quantum Tanner family of Leverrier and Z\'emor with sufficiently large fixed local degree, under the full hypotheses of their construction.
There are constants $\eta,\tau>0$ such that every sufficiently large member and every set $U$ of at most $\eta n/2$ data qubits satisfy $\chi_U\ge2\tau|U|$, and every pure code state has entanglement entropy at least $\tau|U|$ across $U$.
\end{theorem}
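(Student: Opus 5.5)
We need $\chi_U=\dim\cS-\dim\cS_U-\dim\cS_{U^c}\ge 2\tau|U|$ for every $|U|\le\eta n/2$. The plan is to treat the $X$ and $Z$ parts separately. For a CSS code $\cS=\cS^X\oplus\cS^Z$, and each of $\cS_U$, $\cS_{U^c}$ splits the same way, so $\chi_U=\chi_U^X+\chi_U^Z$. By the $X/Z$ symmetry of the Leverrier--Z\'emor construction it suffices to bound one type, say $\chi^X_U\ge\tau|U|$.

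Write $\cS^X$ as the image of the check map $\partial\colon\mathbb F_2^{C_0}\to\mathbb F_2^{Q}$, where $C_0$ indexes the local generators: one local codeword of $(C_A\otimes C_B)^\perp$ per vertex of $V_0$. Then
\[
\chi_U^X=\dim\cS^X-\dim\cS^X_U-\dim\cS^X_{U^c}\ \ge\ \dim\cS^X-\dim(\cS^X_U+\cS^X_{U^c}).
\]
Since $|U|$ is small, I would first show $\dim\cS^X_U$ is small, namely $\dim\cS^X_U\le\epsilon|U|$ with $\epsilon$ as small as needed by taking the local degree large. I would then show that the generators whose support meets $U$ contribute roughly $c|U|$ independent directions modulo $\cS^X_{U^c}$.

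For the first bound, take $s\in\cS^X_U$ and choose a preimage $x\in\mathbb F_2^{C_0}$ of minimal weight. The Leverrier--Z\'emor reduction theorem (robustness of the local product code plus expansion of the Cayley complex) says that a stabilizer of weight below $\eta n$ has a representative $x$ with $|x|\le\kappa|s|$. It also says $x$ is supported on vertices adjacent to $\operatorname{supp}s$, up to the local-component description. This gives an injective encoding of $\cS^X_U$ into combinations of local check components at vertices in $N(U)$, with at most $O(|U|/\Delta)$ active vertices. This is the ``unique short description.'' Counting dimensions, each active vertex carries a local space of dimension $\dim(C_A\otimes C_B)^\perp$ restricted to its few edges in $U$. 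Using robustness, a vertex whose local component lies inside $U$ must meet $U$ in at least $\delta\Delta^2$ edges, which forces $\dim\cS^X_U\le\epsilon|U|$.

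For the second bound, consider the restriction map $\cS^X\to\mathbb F_2^U$. Its kernel is $\cS^X_{U^c}$, so
\[
\dim\cS^X-\dim\cS^X_{U^c}=\rank H^X_U.
\]
It remains to lower-bound $\rank H^X_U$ by $(\tau+\epsilon)|U|$. Dually, $\rank H^X_U=|U|-\dim\{y\in\mathbb F_2^U:H^X y=0\}$. The vectors $y$ are $Z$-type logicals or $Z$-stabilizers supported in $U$. Since $|U|<d$ when $\eta$ is small, they are $Z$-stabilizers in $U$, so
\[
\rank H^X_U=|U|-\dim\cS^Z_U\ge(1-\epsilon)|U|.
\]
Combining the two bounds gives $\chi^X_U\ge(1-2\epsilon)|U|$, hence $\tau$ close to $1/2$ per type. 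Adding the $X$ and $Z$ contributions yields $\chi_U\ge 2\tau|U|$.

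The entropy claim follows from Lemma~\ref{lem:correctable-cut-entropy}. Since $U$ is correctable ($|U|<d$), every pure code state has the same reduced state on $U$, so its entanglement entropy equals $\dim\cS-\dim\cS_U-\dim\cS_{U^c}$ corrected by the logical term. That term vanishes for correctable $U$, so the entropy equals $|U|-\dim\cS_U$. This quantity is $\ge\tau|U|$ by the bounds above.

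The main obstacle is the bound $\dim\cS_U\le\epsilon|U|$. It needs the reduction theorem in a form that gives short representatives supported near $U$, uniformly for all sets up to size $\eta n$ and not just for minimum-weight codewords. It also needs an injective local-component accounting that does not overcount overlapping vertices. I would try first to run the Leverrier--Z\'emor decoder-style peeling argument and track the dimension of its outputs rather than just their weight.
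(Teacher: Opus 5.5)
\textbf{There is a genuine gap: the dimension bound your proof rests on is false, and the mechanism you sketch could not prove even the weaker true bound.}

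Your reduction is correct, and it is the paper's starting point. For $|U|<d$, your two identities give $\chi^X_U=\chi^Z_U=|U|-\dim\cS_U$, so $\chi_U=2(|U|-\dim\cS_U)$. Everything therefore rests on an upper bound for $\dim\cS_U=\dim\cS^X_U+\dim\cS^Z_U$, and that is where the proposal fails.

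The target $\dim\cS^X_U\le\epsilon|U|$, with $\epsilon\to0$ as the degree grows, does not hold. Take $U$ to be the $\Delta^2$ squares incident to a single vertex $v\in V_0$; this $U$ is admissible once $\Delta^2\le\eta n/2$. All local $X$ generators at $v$ lie in $\cS^X_U$, so $\dim\cS^X_U\ge\dim(C_A\otimes C_B)=r(\Delta-r)$. The hypotheses force:
\begin{itemize}
\item $r\ge\delta\Delta-1$, by the Singleton bound for $C_A^\perp$, whose distance is at least $\delta\Delta$;
\item $\Delta-r\ge(1-\rho)\Delta>\Delta/2$.
\end{itemize}
Hence $\dim\cS^X_U/|U|\ge\delta/2-1/(2\Delta)$ for every degree. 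Your conclusion that $\chi^X_U\ge(1-2\epsilon)|U|$ with $\epsilon$ small, i.e.\ $\dim\cS_U=o(|U|)$, is therefore false. The theorem only holds with some small fixed $\tau$.

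The sketched count of active vertices times local dimension does not work either. With your $O(|U|/\Delta)$ active vertices and up to $q=r(\Delta-r)=\Theta(\Delta^2)$ dimensions each, it bounds $\dim\cS^X_U$ by $O(\Delta)|U|$. Knowing that an active component meets $U$ in many squares does not cap the dimension it contributes. Four ingredients are missing.
\begin{enumerate}
\item[(i)] \emph{Unique, linear short representatives.} A minimal-weight preimage is not a linear choice. The paper shows that a nonzero relation in $\ker\Phi_i$ uses at least $\sigma_iM$ vertices. If $3T_i<\sigma_iM$, where $T_i$ bounds the number of components of a reduced representative, the representative is unique and linear. Linearity then bounds the union of all supports by $|R_i|\le2T_i$; without it, ``the dimension of all descriptions'' has no bounded vertex set.
\item[(ii)] \emph{A Singleton bound at each vertex of $R_i$, on the coordinates still available.} These are internal squares, plus boundary squares that lie in $U$, since a lone boundary coefficient outside $U$ cannot cancel. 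This gives
\[
m_i\le2e_i(R_i)+|B_i|-(\ell_i-1)|R_i|,
\]
with the internal-edge count $e_i(R_i)$ controlled by diagonal mixing.
\item[(iii)] \emph{Control of the $X/Z$ boundary overlap.} Each $|B_i|$ can be close to $|U|$, so this per-type bound only gives $\dim\cS^X_U\lesssim|U|$, not anything below $|U|/2$. A deficit appears only after adding the two types and bounding $|B_0\cap B_1|$, using total no-conjugacy and union-graph mixing. This yields $m_0+m_1\le|U|-gR$.
\item[(iv)] \emph{Conversion to a deficit proportional to $|U|$.} Using $m\le qR$ turns the previous bound into $\dim\cS_U\le\frac{q}{q+g}|U|$.
\end{enumerate}
Your split into separate per-type bounds, each below $|U|/2$, is not something this machinery delivers. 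The essential difficulty, the double counting of boundary squares between the two types, is not addressed in your plan.
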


\begin{corollary}[Quantum Tanner bottleneck bound, Corollary~\ref{cor:quantum-tanner-bottleneck}]
\label{cor:intro-bottleneck}
Place the data of such a code arbitrarily on any hardware graph.
A hardware region holding $0<u\le\eta n/2$ data qubits behind $b>0$ hardware edges requires extraction depth at least $\tau u/b$, and behind no edges exact extraction is impossible.
\end{corollary}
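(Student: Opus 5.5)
The plan is to combine the two results already stated: the per-branch cut bound of Theorem~\ref{thm:intro-cut-cost}, part~\ref{it:intro-cut-lower}, and the volume law of Theorem~\ref{thm:intro-uniform}. The corollary should then follow almost immediately, and the work lies in checking that the hypotheses of each apply to an arbitrary placement and an arbitrary hardware region.

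Fix a placement of the data on the hardware graph, which fixes the data set $Q$. Take a vertex set $L$ with $b$ boundary edges and $U=Q\cap L$, and set $u=|U|$ with $0<u\le\eta n/2$. The intrinsic cut cost $\chi_U$ depends only on the stabilizer group and on the subset $U$ of data qubits. It does not depend on which sites hold those qubits or on how the ancillas are arranged. Theorem~\ref{thm:intro-uniform} therefore applies verbatim and gives $\chi_U\ge 2\tau u$ for every sufficiently large member of the family. For the finitely many small members one can either restrict the statement to sufficiently large $n$, as the theorem does, or shrink $\tau$.

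Next apply part~\ref{it:intro-cut-lower} of Theorem~\ref{thm:intro-cut-cost} to the same $L$. Every exact extraction circuit uses at least $\chi_U/2$ crossing two-qubit operations on every nonzero branch. Each layer consists of disjoint instruments, and a crossing operation must act on a boundary edge, so at most $b$ crossing operations fit in one layer. The depth along that branch is therefore at least
\[
\frac{\chi_U}{2b}\ \ge\ \frac{2\tau u}{2b}\ =\ \frac{\tau u}{b}.
\]
If $b=0$, then $\chi_U\ge 2\tau u>0$ because $u>0$. The impossibility clause of part~\ref{it:intro-cut-lower}, which covers $b=0<\chi_U$, then shows that no exact circuit exists.

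I expect no real obstacle, since the substantive content sits in the two cited theorems. The main point of care is the step from operation count to depth. A circuit must have at least one nonzero branch, because every syndrome that occurs with positive probability gives one. Depth in the adaptive model is the maximum over branches, so a lower bound on every nonzero branch bounds the depth. I would also confirm that the circuit model allows ancillas on both sides of the cut without affecting $\chi_U$. This holds because the reference-input argument behind Theorem~\ref{thm:intro-cut-cost} uses only $U=Q\cap L$, while ancillas start in product states and so add no Schmidt rank across the cut.
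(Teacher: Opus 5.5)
Your proof is correct and follows the paper's: take $\chi_U\ge2\tau|U|$ from the uniform theorem, require at least $\chi_U/2$ crossing operations on every nonzero branch with at most $b$ of them per layer, and use the zero-boundary impossibility when $b=0<\chi_U$. Your aside about shrinking $\tau$ to cover small members is unnecessary and not guaranteed to work, since a small member could have a region with $\chi_U=0$; the statement, like the paper, concerns only sufficiently large members.
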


The corollary concerns each individual small region of each placement, whatever the geometry inside the region or the degree of the hardware.
In this respect it goes beyond the framework of BFS, which, combined with data-weighted separators, already gives the aggregate scaling $\Omega(n/\sqrt N)$ for good codes on bounded-degree planar hardware (Appendix~\ref{subsec:bfs-weighted-partitions}) but does not control a specific region, and beyond earlier entanglement results that bound averages over correctable partitions at positive rate \cite{kimConditionalIndependenceQuantum2013} or over selected small subsets of CSS codes \cite{zhaoGraphbasedApproachEntanglement2026}.
We prove the volume law for one family and do not claim it for all good qLDPC codes.

Corollary~\ref{cor:intro-bottleneck} lower-bounds the depth of every placement, but it does not say whether any placement attains it.
On a square grid with $N$ sites and $n/\sqrt N$ large, sweeping a vertical line across the grid selects $\Theta(n)$ data qubits behind $O(\sqrt N)$ hardware edges, so the corollary gives depth $\Omega(n/\sqrt N)$ for every placement of the fixed family, and when $n/\sqrt N$ is bounded the same order follows from the need for at least one quantum operation (Corollary~\ref{cor:intrinsic-patch}).
To match this bound, we develop an adaptive compiler that measures any bounded-weight, bounded-incidence presentation in this depth on a suitably chosen fixed placement.
The compiler measures each check by concentrating its parity onto one data qubit with CNOTs, measuring that qubit, and undoing the concentration, which leaves the check projector as the branch operator.
On a grid wide enough for one $4\times4$ tile per data qubit, it implements the required remote CNOTs by preparing Bell pairs inside the tiles and fusing them with one layer of simultaneous Bell measurements, so that $O(1+n/\sqrt N)$ batches of constant depth suffice, and on a narrower grid ordinary mesh permutations bring the gate pairs together in depth $O(\sqrt N)$.

\begin{theorem}[Adaptive extraction at every grid width, Theorem~\ref{thm:adaptive-grid-width-upper} and Corollary~\ref{cor:fixed-family-grid-width-tight}]
\label{thm:intro-grid}
Let a stabilizer presentation on $n$ data qubits have checks of weight at most $w$ and at most $\delta$ checks incident on each qubit.
For every full square grid with $N\ge n$ sites, there is a fixed placement of the data and an exact extraction circuit of depth $O_{w,\delta}(\max\{1,n/\sqrt N\})$, counting every quantum operation, including preparation, routing, measurement, reset, correction, and the return of every data qubit to its site.
For the family of Theorem~\ref{thm:intro-uniform}, the minimum depth over circuits and fixed placements is $\Theta(\max\{1,n/\sqrt N\})$ throughout $n\le N\le cn^2$ for any fixed $c>0$, so that $ND_{\min}^2=\Theta(n^2)$ with constants depending on $c$.
\end{theorem}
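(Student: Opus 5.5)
The proof has an upper bound, given by a compiler, and a lower bound for the fixed family, taken from Corollary~\ref{cor:intro-bottleneck}. For the upper bound I measure each check $g$ by a concentrate--measure--uncompute gadget. Pick a root data qubit $r_g\in\operatorname{supp}g$. After local Hadamard or $S$ conjugations that turn $g$ into a $Z$-type parity, apply CNOTs from the other support qubits onto $r_g$, measure $r_g$ in the $Z$ basis, and undo the CNOTs and local rotations. On each outcome branch the result is the projector $(I\pm g)/2$, so sequential application over all checks realizes $\Pi_s$ exactly. Because $g$ is a two-outcome projective measurement, the order among checks in a batch is harmless as long as the gates in a batch act on disjoint qubits.

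To schedule these gates I build the conflict graph on checks, joining two checks when their supports intersect. Its degree is at most $w\delta$, so it can be colored with $O(w\delta)$ colors. Within a color class each check contributes at most $w-1$ CNOTs, and by edge-coloring those can be split into $O(w)$ rounds of pairwise disjoint qubit pairs. The whole problem therefore reduces to a single primitive: on the grid, perform an arbitrary set of at most $n$ disjoint remote CNOTs between data qubits in depth $O(\max\{1,n/\sqrt N\})$.

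This primitive depends on the grid width.
\begin{itemize}
\item In the wide regime, $N\ge 16n$. Place each data qubit in its own $4\times4$ tile. Choose a routing of the $m\le n$ CNOT pairs so that every grid edge is used by $O(1+n/\sqrt N)$ paths; for example, route row then column with balanced loads, which is a permutation-routing congestion bound. Split the paths into $O(1+n/\sqrt N)$ batches of edge-disjoint paths. For each batch, prepare Bell pairs on all path edges in constant depth, fuse them along each path with one layer of simultaneous Bell measurements (entanglement swapping), and apply the Pauli frame corrections as single-qubit gates. This gives a long-range Bell pair, which the standard teleported-CNOT gadget consumes in constant depth plus a classically controlled Pauli correction.
\item In the narrow regime, $n\le N<16n$. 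Here I use classical mesh sorting or permutation routing with SWAP layers of depth $O(\sqrt N)$, which equals $O(n/\sqrt N)$ in this range. This brings each pair adjacent, applies the CNOTs, and routes everything back.
\end{itemize}
In both regimes the data ends on its original sites, and all classical feed-forward is free, so the total depth is $O_{w,\delta}(\max\{1,n/\sqrt N\})$.

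For the lower bound, apply Corollary~\ref{cor:intro-bottleneck} to a region cut out by a vertical line. Under any placement, sweep a vertical line across the $\sqrt N\times\sqrt N$ grid column by column. The number of data qubits on the left increases from $0$ to $n$ in steps of at most $\sqrt N$. When $n/\sqrt N$ is large, this yields a prefix region $L$ with $u=\Theta(\min\{\eta n/2,\,\cdot\})$, precisely $u\in[\eta n/4,\eta n/2]$, and boundary $b\le\sqrt N+O(1)$. The corollary then gives depth at least $\tau u/b=\Omega(n/\sqrt N)$. When $n/\sqrt N=O(1)$, including up to $N\le cn^2$, the trivial bound applies: some $\chi_U>0$ forces at least one operation. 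Together these give $D_{\min}=\Theta(\max\{1,n/\sqrt N\})$. Since $N\le cn^2$ gives $\max\{1,n/\sqrt N\}=\Theta_c(n/\sqrt N)$, it follows that $ND_{\min}^2=\Theta(n^2)$.

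I expect the main obstacle to be the wide-regime routing: making $O(1+n/\sqrt N)$ batches with constant-depth Bell-pair fusion rigorous. This needs a path system of congestion $O(n/\sqrt N+1)$ for arbitrary pairings on a grid whose data occupy only a $\sqrt{n}$-ish subarray placed inside the tiles. I would first try spreading the data uniformly over the full grid, one per tile on a $\sqrt{N/16}$-spaced lattice scaled so that the dimension-order routing load per row and column is $O(n/\sqrt N)$. Failing that, I would use a two-phase random-intermediate (Valiant) routing with derandomization by edge-coloring a bipartite multigraph.
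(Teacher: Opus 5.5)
\textbf{Verdict: there is a genuine gap, in the wide-regime batching step; the rest of your plan matches the paper.}

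Your architecture is the paper's. It uses the data-pivot concentrate--measure--uncompute gadget, a constant coloring into disjoint-support classes, and a reduction to $O_{w,\delta}(1)$ CNOT matchings. In the wide regime it puts one data qubit per $4\times4$ tile and uses Bell-pair fusion with teleported CNOTs; in the narrow regime it uses mesh permutations. Your vertical-sweep lower bound is also carried out correctly.

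The gap matters because batching is exactly what produces the interpolation between grid widths. A path system with edge congestion $C=O(1+n/\sqrt N)$ does not by itself split into $O(C)$ batches of edge-disjoint paths. The number of batches is the chromatic number of the conflict graph, whose vertices are paths and whose edges join paths sharing a tile-grid edge. Congestion only bounds the degree of that graph by roughly (path length)$\times C$. For routes of length $\Theta(\sqrt N)$ this is too large by a factor $\sqrt N$. In general, edge-disjoint path coloring can need many more colors than the load, so neither your congestion argument nor the Valiant fallback, as stated, gives the claimed count.

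The missing idea is to bound the conflict degree directly, by making edge-disjointness hold by construction. The paper does this as follows:
\begin{enumerate}
\item Take $q=\lfloor L/4\rfloor$ and tile only when $q^2\ge n$. Your threshold $N\ge16n$ does not guarantee $\lfloor\sqrt N/4\rfloor^2\ge n$.
\item Put at most $h=\lceil n/q\rceil$ terminals in each tile row.
\item Greedily color the pairs of a matching so that pairs in one batch have disjoint endpoint rows. Each pair shares a row with at most $2h-2$ others, so $2h-1=O(1+n/\sqrt N)$ batches suffice.
\item Inside a batch, give each cross-row pair its own routing column (at most $\lfloor q/2\rfloor$ are needed) and route it row--column--row. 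Then each tile row carries at most one horizontal segment and each tile column at most one vertical segment.
\item At most two routes meet in a tile, on disjoint ports. Their in-tile Bell pairs are prepared serially in constant depth, and one layer of Bell measurements on facing ports fuses them.
\end{enumerate}

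Your dimension-order idea can be rescued the same way. Place terminals so that each tile row and each tile column holds at most $h$ of them. A route then collides only with routes sourced in its row or ending in its column, so the conflict degree is at most $2h-2$.

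You also place Bell pairs on tile-grid edges. Then fusion happens inside tiles between non-adjacent ports, so you need the same argument (at most two routes per tile, on disjoint ports) to keep that step constant depth.
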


The two endpoints of this range were known separately, since with linear space mesh permutations implement a constant number of gate layers in depth $O(\sqrt n)$ \cite{baconSparseQuantumCodes2017b}, and with quadratic space the construction of DBT reaches constant depth for bounded-degree CSS presentations \cite{delfosseBoundsStabilizerMeasurement2021}.
The theorem interpolates between them, including the case $N=n$ in which no site is spare, and together with the lower bound it shows that the planar scaling, which the BFS framework also yields, is tight for this family.
We note that the tradeoff describes the best placement rather than every placement, because a prescribed placement that concentrates $u\le\eta n/2$ data qubits behind $b>0$ edges still pays depth $\tau u/b$ by Corollary~\ref{cor:intro-bottleneck}, however large the rest of the grid.

Finally, we return to the bounds that count listed checks.

\begin{theorem}[Expanding presentations with constant-depth extraction, Theorem~\ref{thm:annular-counterexample} and Corollary~\ref{cor:dbt-local-expander}]
\label{thm:intro-annulus}
For every $L\ge1$ there is a surface code on a square annulus with parameters $[[n_L,1,L+1]]$, where $n_L=16L^2+8L$, and an independent generating list of its stabilizer group with weight at most $132$ and qubit incidence at most $36$, whose contracted Tanner graph has at least $|A|/18$ edges leaving every set $A$ of at most $n_L/2$ qubits.
On the $(6L+1)\times(6L+1)$ square grid, with $N_L=(6L+1)^2$ sites, a nearest-neighbor Clifford circuit of depth at most $13$ implements the exact syndrome instrument of this list in the measurement model of DBT.
The family satisfies the hypotheses of Corollary~1 of \href{https://arxiv.org/abs/2109.14599v1}{DBT v1} while $n_L/\sqrt{N_L}\to\infty$, so the conclusion $\depth=\Omega(n_L/\sqrt{N_L})$ of that corollary does not hold.
\end{theorem}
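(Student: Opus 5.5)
The plan is to build the counterexample in three stages: the code and its recombined presentation, the expansion estimate, and the reuse of the standard extraction circuit.

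\emph{Code and presentation.} Take the planar surface code on a square annulus of linear size $\Theta(L)$. Choose the boundaries (smooth on one rim, rough on the other, or alternating segments) so that exactly one logical qubit survives and the shortest logical loop or path has length $L+1$. Count qubits to confirm $n_L=16L^2+8L$. Then place the data and the syndrome ancillas on a $(6L+1)\times(6L+1)$ grid in the usual interleaved pattern. The standard plaquette and star checks already admit a depth-$O(1)$ nearest-neighbor circuit, namely ancilla reset, four CNOT layers, and measurement.

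The key step is to recombine these local checks $S_j$ by an invertible $\mathbb{F}_2$ matrix $M$ into a new list $T=MS$ that has bounded weight and incidence but whose contracted Tanner graph expands. Because the recombination is invertible, the outcomes of $T$ are fixed parities of the outcomes of $S$. The original circuit therefore implements the exact syndrome instrument of $T$ after a classical relabeling, and its depth stays constant; I expect it comes to $13$ once the rim ancillas and the interleaving of the $X$ and $Z$ schedules are counted. This settles the depth claim in the DBT measurement model, since that model allows classical post-processing of outcomes.

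\emph{Expansion.} I would pick $M$ so that each $T_i$ is a product of a bounded number of local checks along a short path, with the paths following the edges of a fixed bounded-degree expander $G$ embedded into the annulus. Because the surface code is planar, a product of checks along a path carries support not just at the endpoints but along the whole path. The design goal is to let many such paths pass through each location while keeping incidence bounded. That is impossible with paths of length $\Theta(L)$ and bounded incidence at the same time, so I would instead use a \emph{triangular} recombination in the style of Figure~\ref{fig:basis-paradox}. Set $T_i=S_{i-1}S_i$ along a Hamiltonian-like ordering of the checks, together with a sparse set of additional telescoping products. Each product of two adjacent local checks has weight at most $2w$ and overlaps nearby checks, so the contracted Tanner graph (qubits adjacent when they share a check) gains edges between neighboring regions.

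Proving the bound of $|A|/18$ outgoing edges for every $|A|\le n_L/2$ is the hardest part. I would argue through a multiscale decomposition. Group the qubits into a quadtree of blocks inside the annulus, and in each block include generators that are products of the checks at its boundary. Each such generator has bounded weight because it is a product of constantly many local checks. This needs a careful choice so that the products of $2\times2$ block checks at every scale remain bounded-weight. I expect this is where weight $132$ and incidence $36$ arise, with each qubit appearing in $O(1)$ checks per scale but only a constant number of scales affecting it. Given such a choice, the expansion follows from an isoperimetric count: any set $A$ of at most half the qubits either has a large grid boundary, which already yields edges, or is concentrated in blocks whose generator edges leave $A$. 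That gives a bound of the form $|A|/18$ after explicit bookkeeping.

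\emph{Conclusion.} Check independence by a rank argument, using the fact that $M$ is triangular. Then verify the hypotheses of DBT v1 Corollary~1: bounded weight and incidence, an expanding contracted Tanner graph, and a 2D local device. Finally observe that $n_L/\sqrt{N_L}=\Theta(L)\to\infty$ while the depth is at most $13$.
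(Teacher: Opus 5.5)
Your outer frame matches the paper: an invertible unit-triangular $\mathbb F_2$ recombination of the annular checks, measured by the unchanged nearest-neighbor circuit with raw outcomes relabeled by parities, which gives depth $13$. The gap is the step that carries the theorem. You never produce an explicit bounded-weight, bounded-incidence list whose contracted Tanner graph expands, and the constructions you sketch cannot produce one.

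Take first generators that are geometrically local, as with $T_i=S_{i-1}S_i$ for adjacent checks. Their contracted Tanner graph joins only qubits at bounded distance. So the set $A$ of data on one half of the annulus has $O(L)$ leaving edges, while $|A|/18=\Theta(L^2)$. The same $A$ defeats your dichotomy between a large grid boundary and concentration in blocks.

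The quadtree variant also fails. Products of the checks bounding a block have weight growing with the block side; for instance, the product of the face checks inside a block is $Z$ on its boundary loop. Moreover, any bounded-incidence system of long-range links must put $\Omega(n_L)$ edges across every balanced cut. That is exactly the expander property you leave unspecified.

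The missing idea is that the multiplied checks need not be geometrically close, only support-disjoint. In a checkerboard coloring of the vertices, every edge has exactly one black endpoint. Hence the black vertex checks have pairwise disjoint supports of size at most four that partition all $n_L$ qubits. A product of any $33$ of them has weight at most $132$ with no cancellation, however far apart they lie.

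The paper does the following:
\begin{itemize}
\item It puts an abstract expander $E_m$ of maximum degree $32$ on the $m=4L(L+1)$ black vertices, with $|N_{E_m}(S)|\ge|S|/4$ whenever $|S|\le m/2$ (obtained from a union bound over random permutations).
\item It orders the black vertices and replaces each $A_j$ by $A_j\prod_{i<j,\,ij\in E_m}A_i$.
\item A qubit then lies in at most $33$ black products, one white check and two face checks, giving incidence $36$.
\item The contracted Tanner graph contains a clique on each black support and every edge between the supports of expander-adjacent black vertices.
\item A clique blow-up lemma converts the edge expansion $h=1/4$ of $E_m$ into $|\partial A|\ge h|A|/(4+2h)=|A|/18$ for every qubit set with $0<|A|\le n_L/2$, including sets that split blocks.
\end{itemize}
Your worry that bounded incidence forbids many long connections through a location arises only because you build connections by multiplying along paths. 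Disjoint supports make that unnecessary.

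Separately, you need to pin the code down. An annulus with one smooth and one rough rim encodes no logical qubit. The paper instead uses all edges as qubits, with $X$ checks at every vertex and $Z$ checks at every face, and deletes one white vertex check to make the list independent. This gives $k=1$ and $d=L+1$.
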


The construction applies the recombination of Figure~\ref{fig:basis-paradox} at scale.
The vertex checks at the black vertices of a checkerboard coloring have disjoint supports, so multiplying each of them by the black checks that precede it and are adjacent to it in a bounded-degree expander keeps weight and incidence bounded and makes the contracted Tanner graph expand, while the original local circuit measures the products by adding outcomes.
The stabilizer group is unchanged, so across a vertical line that splits the data in half $\chi_U=O(L)$, whereas the number of listed generators crossing that line grows linearly with $n_L$, and this gap is what separates the count of listed generators crossing the line in Theorem~1 of DBT v1 from the cost of the measurement.
Several annuli side by side extend the family to parameters $[[m(16L^2+8L),m,L+1]]$ with $kd^2=\Theta(n)$ (Corollary~\ref{prop:many-annuli}), which is consistent with the tradeoff of Bravyi, Poulin, and Terhal, so the examples say nothing against the regime of positive rate and growing distance.
What they show is that expansion of a chosen presentation cannot replace a bound on $\chi_U$.

In summary, $\chi_U$ is the exact cost of a cut, it is uniformly large on the small regions of one fixed qLDPC family, the planar depth bound that follows is tight for that family, and expansion of a presentation is no substitute for it.

\subsection{Discussion}

We first note that $\chi_U$ can be evaluated for a finite code before any circuit is designed.
For the $[[144,12,12]]$ gross code \cite{bravyiHighthresholdLowoverheadFaulttolerant2024} divided into two halves of its torus, the stabilizer space has dimension $132$ and each half supports $36$ independent stabilizers, so $\chi_U=132-36-36=60$.
Measuring every crossing check--data incidence with a remote CNOT would use $144$ Bell pairs, and measuring each crossing check as one parity would use $72$, whereas a basis of $72$ local and $60$ crossing original weight-six checks, with the omitted bits reconstructed classically, uses $60$ and cannot be improved (Proposition~\ref{prop:gross-sparse-single-cut-optimum} and Table~\ref{tab:gross-resource-comparison}).
When the data are distributed over processors connected by a tree, the edge budgets $b_e\ge\chi_{U_e}$ can moreover all be met at once, by preparing the vectorized syndrome projector with the pure-state tree preparation theorem of Yamasaki, Soeda, and Murao \cite[Theorem~5]{yamasakiGraphAssociatedEntanglement2017} (Proposition~\ref{prop:network-tree-resource}).

We emphasize that all of our results concern ideal, noiseless circuits, and that the lower bounds are stated for the most permissive such model.
Adaptivity, intermediate measurements of data, and unlimited classical communication are allowed, and Theorem~\ref{thm:intro-cut-cost} holds on every nonzero branch.
The constructions, on the other hand, are ideal-circuit upper bounds.
Theorem~\ref{thm:intro-grid} relies on long chains of Bell measurements whose errors would accumulate under noise, and part~\ref{it:intro-cut-upper} of Theorem~\ref{thm:intro-cut-cost} optimizes the crossing resource rather than the elapsed depth of local processing, so neither establishes fault tolerance, which we leave open together with the questions in Section~\ref{sec:discussion}.
Under noise, moreover, the redundant checks that our optimization discards can supply consistency information whose value depends on the noise model and the decoder, which raises the question of single-shot correction \cite{bombinSingleshotFaulttolerantQuantum2015,guSingleshotDecodingGood2024}, while compilers for a fixed check list \cite{zhangOptimalCompilationSyndrome2026} and distributed implementations with shared entanglement \cite{chandraDistributedQuantumError2026,shawNetworkedRealizationQuantum2026} address the complementary problem of implementing a chosen presentation well.

To place our bounds among the earlier ones, we note that the entanglement-growth method of BFS also applies to our reference input and that their partition framework with data-weighted separators yields the aggregate planar bound $\Omega(n/\sqrt N)$ for good codes without any uniform theorem, so neither the reference input nor the planar scaling by itself distinguishes the present argument, and Section~\ref{sec:static-2d-barrier} together with Appendix~\ref{subsec:bfs-weighted-partitions} separates the two conclusions and their assumptions.
What the present approach adds is the intrinsic quantity itself, the counts for each crossing primitive on each branch, the uniform small-region estimate, and the matching constructions.
Our comparison with DBT concerns Theorem~1 and Corollary~1 of the first arXiv version of their paper in that paper's measurement model, which permits classical parity relabeling, and Appendix~\ref{subsec:basis-change-counterexample} gives simpler examples and identifies the corresponding steps in the proof.
The identification of $\chi_U$ with mutual information descends from the stabilizer entropy calculations of Fattal et al.\ and of Audenaert and Plenio \cite{fattalEntanglementStabilizerFormalism2004,audenaertEntanglementMixedStabilizer2005}, and the conversions between crossing CNOT or SWAP gates and shared Bell pairs used in the upper bounds are those of Eisert et al.\ \cite{eisertOptimalLocalImplementation2000}.

The rest of the paper is organized as follows.
Section~\ref{sec:models} defines the measurement task and the circuit model.
Section~\ref{sec:stabilizer-cut-rank} introduces $\chi_U$, proves the cut bound and its attainability, and treats approximate instruments.
Section~\ref{sec:worked-example-tree} works out the gross code and tree networks, and Section~\ref{sec:presentation-expansion} constructs the annular examples.
Section~\ref{sec:separator-barriers} states and explains the uniform small-region theorem, Section~\ref{sec:static-2d-barrier} gives the grid lower bound, the compiler, and the tradeoff, and Section~\ref{sec:discussion} collects open questions.
The appendices supply the supporting algebra, the attaining protocols and tree resources, the annular constructions, the uniform Tanner theorem with its geometric bounds, and the grid compiler.

\section{Preliminaries and the measurement task}
\label{sec:models}

\subsection{Checks, syndromes, and preserved information}

Consider measuring $Z_1Z_2$ on two data qubits.
The two projectors are $(I+Z_1Z_2)/2$ and $(I-Z_1Z_2)/2$, onto the even and odd parity subspaces, so an input $\alpha|00\rangle+\beta|11\rangle$ has even parity with certainty and is left unchanged.
Measuring $Z_1$ and $Z_2$ separately and then multiplying their outcomes reports the same parity, but it also reveals which of $|00\rangle$ and $|11\rangle$ occurred and therefore destroys the coherence between them.
An ancilla-mediated parity measurement, in which an ancilla is prepared in $|0\rangle$, made the target of a CNOT from each data qubit, and then measured in the $Z$ basis, avoids this extra information.
This circuit records only parity and preserves the corresponding data subspace.

A stabilizer code extends this idea to a commuting collection of Hermitian Pauli operators on $n$ qubits.
A Pauli operator is a tensor product of $I,X,Y,Z$, with a sign, and its weight is the number of nonidentity factors.
We fix a commuting Pauli group that does not contain $-I$, and take its common $+1$ eigenspace as the code space.
If the group has $r$ independent generators, the code space has dimension $2^{n-r}$ and encodes $k=n-r$ logical qubits.
The distance $d$ is the minimum weight of a Pauli that preserves the code space but does not act as a scalar within it, and we use $[[n,k,d]]$ for these parameters.
A Calderbank--Shor--Steane (CSS) code admits generators that separate into $X$-only and $Z$-only checks.

To describe changes of generators, we work modulo phase, where Pauli multiplication is addition in $\mathbb F_2^{2n}$.
The two binary coordinates at a qubit specify its $X$ and $Z$ factors, and their sum represents $Y$ up to phase.
The resulting stabilizer space is denoted by $\cS$, with $\dim\cS=r$, and commutation is expressed by the symplectic product $\langle(x,z),(x',z')\rangle=x\cdot z'+z\cdot x'$ over $\mathbb F_2$.
For a physical measurement, however, we keep the signed Pauli representative, even though signs do not enter the binary dimensions and ranks.
All vector-space dimensions and matrix ranks below are binary, and all entropies and logarithms use base two.

Choose independent signed generators $P_1,\ldots,P_r$.
The syndrome $s\in\mathbb F_2^r$ records whether each eigenvalue is $+1$ or $-1$, and its eigenspace projector is
\begin{equation}
\Pi_s=\prod_{j=1}^r\frac{I+(-1)^{s_j}P_j}{2}.
\label{eq:syndrome-projector}
\end{equation}
Each sector has dimension $2^k$ and can contain arbitrary logical quantum information.
The ideal measurement sends an input density operator $\rho$ to the unnormalized conditional state $\Pi_s\rho\Pi_s$ and reports $s$.
The trace of this conditional state is the outcome probability, and dividing by that trace gives the post-measurement state.
The collection of these completely positive maps is the joint L\"uders instrument \cite{knillTheoryQuantumErrorcorrecting1997}, and an exact extraction must implement the full instrument on every input, including inputs entangled with an external reference.

Changing generators does not change the sector projectors.
For two independent bases, the syndrome bits are related by an invertible binary transformation, with a possible fixed sign offset determined by the chosen Pauli representatives, and a redundant list adds only determined parity functions of these bits.
Circuits may therefore choose a convenient generating set and reconstruct the requested syndrome classically, provided they also preserve the conditional quantum state.
This freedom concerns the ideal instrument, whereas adding redundant checks for protection against measurement noise is a different design problem \cite{campbellTheorySingleshotError2019}.

\subsection{The hardware and its cost}

We represent the device by a graph whose vertices are physical qubit sites and whose edges permit two-qubit operations.
A hardware region of this graph can contain both data and ancillas.
Only its data qubits enter the intrinsic cut cost, whereas all of its vertices constrain the circuit.

\begin{definition}[Hardware graph and boundary]
\label{def:hardware-boundary}
Let $G_{\hw}=(V_{\hw},E_{\hw})$ be a hardware graph of maximum degree $\Delta_{\hw}$, with fixed data sites $Q\subseteq V_{\hw}$.
Write $n=|Q|$ and $N=|V_{\hw}|$.
For a vertex region $L\subseteq V_{\hw}$, its data subset is $U=Q\cap L$ and its edge boundary is
\begin{equation}
\partial_{G_{\hw}}L
=
\{uv\in E_{\hw}:u\in L,\ v\notin L\}.
\end{equation}
The two-qubit operations in one layer have disjoint supports, so at most $|\partial_{G_{\hw}}L|$ cross this boundary.
\end{definition}

To cover measurements as well as gates, we describe each primitive as an instrument, which may have several classical outcomes and one or more Kraus operators for each outcome.
This language includes a unitary gate, a measurement, and a reset as special cases.
We allow this generality because a lower bound should not depend on a particular gate synthesis or ancilla readout scheme.

\begin{definition}[Exact adaptive hardware-local extraction]
\label{def:adaptive-local-extraction}
An exact extraction circuit for $\cS$ starts with product ancillas uncorrelated with the data.
It is a finite circuit in layers of disjoint quantum primitives, each a finite-outcome one-qubit instrument or a finite-outcome two-qubit instrument on a hardware edge.
Preparations, measurements, resets, and discards are allowed, with at most one qubit at each vertex.
Previous observed outcomes may determine later primitives and their locations.
Classical processing, randomness, memory, and communication on finite registers are free, but supply no uncounted quantum operation or shared quantum state.
The circuit reports a complete syndrome $s=f(O)$ as a function of its observed record $O$ and implements $\rho\mapsto\Pi_s\rho\Pi_s$ for each $s$.
Every data label returns to its original physical site at the output, and routing and restoration use counted operations.
The depth $\depth(C)$ is the largest number of quantum layers on a nonzero branch, including preparation, measurement, reset, and quantum correction.
\end{definition}

The product-ancilla assumption excludes a supply of free entanglement across a cut, and the fixed input and output sites ensure that moving a data state around a boundary cannot change the task without paying for the movement.
Classical communication is unrestricted even between distant vertices, so the model measures quantum depth separately from decoder runtime and classical signal delay.

For comparison with earlier results, we also use two restricted classes of circuits, which do not change the main lower bound.

\Needspace{9\baselineskip}
\begin{definition}[Clifford subclass for comparison]
\label{def:local-clifford-stabilizer-measurement}
The Clifford subclass uses $|0\rangle$ and $|+\rangle$ preparations, Clifford unitaries, Pauli measurements, and classically controlled Pauli corrections.
A Clifford unitary maps Pauli operators to Pauli operators under conjugation.
Each two-qubit Pauli measurement is a counted primitive on a hardware edge.
The DBT v1 model further excludes single-qubit data measurements and imposes its stated ancilla input and output conditions, and its classical parity controls and output parities may overlap \cite[Sections~2.1--2.2]{delfosseBoundsStabilizerMeasurement2021}.
\end{definition}

\begin{definition}[Unitary extraction with final readout]
\label{def:swap-only-compilation}
This subclass applies only hardware-local one- and two-qubit unitaries before a final layer of local measurements, with no intermediate measurement, reset, or adaptive quantum correction.
Product ancillas and the fixed data interface are as in Definition~\ref{def:adaptive-local-extraction}, and all routing gates are counted.
\end{definition}

The grid upper bound will use the broader adaptive model, including intermediate data measurements.
The annular counterexample, in contrast, can be implemented within the DBT subclass just specified.

\subsection{How much correctness must be tested?}

Although the definition asks for equality of quantum instruments, exactness can equivalently be tested on pure inputs in individual syndrome sectors.
It suffices to require the correct syndrome and preservation of every pure state in every sector, whereas testing a smaller promised set can be insufficient.

\begin{remark}[Sectorwise exactness and Pauli frames]
\label{rem:sectorwise-exactness}
Suppose the same physical circuit reports the correct syndrome with certainty and preserves every pure state in every syndrome sector.
Then each refined Kraus operator reporting $s$ is $K_\beta=c_\beta\Pi_s$, and the circuit implements the full ideal instrument, including on superpositions of sectors and reference-entangled inputs.
If correctness instead holds after removing a physical Pauli frame $P_o$ determined by the observed record $o$, then $P_o^\dagger K_\beta=c_\beta\Pi_s$ for every refinement of that record.
Such a correction factors into single-qubit operators across every hardware cut, so it does not change any crossing-resource lower bound.
A correction selected using a hidden Kraus index, or an arbitrary entangling correction, is not part of this allowance.
\end{remark}

For the first assertion, correct labeling forces a branch to vanish on all wrong sectors.
Preservation of every pure state in the correct sector then forces its restriction there to be a scalar multiple of the identity, since applying it to two basis states and their superposition makes the scalars equal.
Proposition~\ref{prop:sectorwise-instrument-rigidity} gives the full argument, including observed-record corrections.
Requiring only one known sector is weaker, however, because the identity circuit can report that fixed syndrome without any interaction.
Preserving selected basis states likewise does not ensure preservation of their superpositions.

\section{The intrinsic cost of a cut}
\label{sec:stabilizer-cut-rank}

Split the data into two sets, $U$ and $U^c$.
Some stabilizers can then be measured using only the qubits in $U$, and others using only $U^c$, while the remaining independent constraints require the two sides to participate in a common measurement.
We first count these constraints without choosing a generating set, and then show that the count is exactly the crossing resource needed when local processing is free.

\subsection{A generator-independent count}

Let $\cS_U$ be the subspace of stabilizers supported entirely in $U$, and define $\cS_{U^c}$ in the same way.
Their intersection is zero because their supports are disjoint.
We define
\begin{equation}
\chi_U(\cS)=\dim\cS-\dim\cS_U-\dim\cS_{U^c},
\label{eq:intro-chi}
\end{equation}
which is the dimension of the quotient $\cS/(\cS_U+\cS_{U^c})$.
The quotient identifies stabilizers that differ only by constraints measurable within the two sides, so $\chi_U$ counts the independent constraints that remain after all local ones have been removed.

\begin{lemma}[Minimum crossing-generator count]
\label{lem:minimum-crossing-generators}
For every data partition $Q=U\sqcup U^c$, $\chi_U(\cS)$ is the minimum number of generators with support on both sides, over all generating sets of $\cS$.
The same minimum is attained among bases.
\end{lemma}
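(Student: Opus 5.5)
The proof has two inequalities: every generating set has at least $\chi_U$ crossing generators, and some basis has exactly $\chi_U$. Both work in the binary stabilizer space $\cS$, where generating sets are spanning sets. A generator is \emph{local} if its support lies entirely in $U$ or entirely in $U^c$, and \emph{crossing} otherwise. Each local generator lies in $\cS_U$ or in $\cS_{U^c}$, and hence in $W:=\cS_U+\cS_{U^c}$. Since $\cS_U\cap\cS_{U^c}=0$, we have $\dim W=\dim\cS_U+\dim\cS_{U^c}$, so $\chi_U=\dim(\cS/W)$ as noted after \eqref{eq:intro-chi}.

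For the lower bound, let $g_1,\dots,g_m$ be any generating set with $c$ crossing elements. Consider the quotient map $\pi:\cS\to\cS/W$. Since the $g_i$ span $\cS$, their images span $\cS/W$. Every local $g_i$ maps to zero, so the images of the $c$ crossing generators already span $\cS/W$. Hence $c\ge\dim(\cS/W)=\chi_U$. This argument does not need independence, so it covers redundant generating lists as well as bases.

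For attainability, build a basis adapted to the filtration $0\subseteq W\subseteq\cS$. Choose a basis $a_1,\dots,a_p$ of $\cS_U$ and a basis $b_1,\dots,b_q$ of $\cS_{U^c}$. Their union is a basis of $W$ because the sum is direct. Extend it by vectors $c_1,\dots,c_{\chi_U}$ to a basis of $\cS$; these exist because $\dim\cS-\dim W=\chi_U$. The $a_i$ and $b_j$ are local, so at most $\chi_U$ elements of this basis cross. The lower bound, applied to this basis, shows at least $\chi_U$ cross. The count is therefore exactly $\chi_U$, which gives both the equality and the statement that the minimum is attained among bases. In fact no $c_k$ is local, since a local $c_k$ would lie in $W$ and contradict independence, but the counting already settles this.

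The only point needing care is the bookkeeping between signed Pauli generators and their phase-free binary images. The lemma concerns generating sets of the group $\cS$. A set of signed Paulis generates the stabilizer group exactly when its binary images span $\cS$, because $-I$ is excluded from the group, so each binary vector has a unique signed representative in it. Support, and so the crossing property, is read off from the binary vector alone. With this remark the signed and binary statements coincide, and no step above is a real obstacle: the argument is elementary linear algebra over $\mathbb F_2$.
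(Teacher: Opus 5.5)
Your proposal is correct and follows essentially the same argument as the paper. The lower bound comes from the crossing generators spanning $\cS/(\cS_U+\cS_{U^c})$. Attainability comes from extending concatenated bases of $\cS_U$ and $\cS_{U^c}$ to a basis of $\cS$, where each of the $\chi_U$ added vectors must cross.
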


\begin{proof}
The noncrossing generators lie in $\cS_U\oplus\cS_{U^c}$, so the crossing generators of any generating set must span the quotient $\cS/(\cS_U+\cS_{U^c})$, which requires at least $\chi_U$ of them.
Conversely, concatenate bases of $\cS_U$ and $\cS_{U^c}$ and extend them to a basis of $\cS$.
Exactly $\chi_U$ vectors must be added, and every added vector crosses the cut, because a vector supported on one side would already be in the span.
\end{proof}

In Figure~\ref{fig:basis-paradox}, the full space has dimension four, the subspace supported on the left has dimension three, and the subspace on the right has dimension zero, so $\chi_U=1$ although all four checks of the first presentation cross.
The minimum places no restriction on the weight or incidence of the new generators, and the basis that attains it may depend on the cut.
What it supplies is an intrinsic lower-bound quantity that does not assume that one sparse basis minimizes every cut simultaneously.

To evaluate this minimum, it suffices to use the listed checks, including redundant ones.
Restricting a stabilizer to $U$ forgets its action on $U^c$, and the kernel of this restriction is precisely $\cS_{U^c}$, so rank-nullity gives the required local dimensions and then the cut count.

\begin{lemma}[Rank formula]
\label{lem:cross-cut-rank-formula}
Let $H$ be any binary symplectic check matrix with row space $\cS$, allowing dependent rows.
Let $H_U$ contain both the $X$ and $Z$ column of each qubit in $U$, and let $H_{U^c}$ contain the remaining columns.
Then
\begin{equation}
\chi_U(\cS)
=
\rank(H_U)+\rank(H_{U^c})-\rank(H).
\label{eq:cross-cut-rank-formula}
\end{equation}
\end{lemma}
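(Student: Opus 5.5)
The plan is to compute $\dim\cS_U$ and $\dim\cS_{U^c}$ by rank--nullity applied to restriction maps, and then substitute into the definition \eqref{eq:intro-chi}. Since $H$ has row space $\cS$, we have $\dim\cS=\rank(H)$, which is the third term of \eqref{eq:cross-cut-rank-formula}. What remains is to show
\[
\dim\cS-\dim\cS_{U^c}=\rank(H_U),\qquad \dim\cS-\dim\cS_U=\rank(H_{U^c}).
\]
Adding these two identities and subtracting $\dim\cS=\rank(H)$ gives exactly $\rank(H_U)+\rank(H_{U^c})-\rank(H)=\dim\cS-\dim\cS_U-\dim\cS_{U^c}=\chi_U(\cS)$.

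For the first identity, I will consider the restriction map $r_U:\cS\to\mathbb F_2^{2|U|}$. It sends a binary symplectic vector to its $X$ and $Z$ coordinates on the qubits of $U$, and it is linear.

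First, its image is the row space of $H_U$. Every element of $\cS$ is a (possibly non-unique) binary combination of the rows of $H$, and restricting commutes with taking combinations. So the image of $r_U$ is spanned by the restricted rows, which are the rows of $H_U$. Dependent rows in $H$ cause no trouble, because only the span matters. Hence $\dim\operatorname{im}r_U=\rank(H_U)$.

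Second, its kernel is $\cS_{U^c}$. A stabilizer vector vanishes on both coordinates of every qubit of $U$ exactly when its support, meaning the qubits with a nonzero $X$ or $Z$ coordinate, lies in $U^c$.

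Rank--nullity then gives $\dim\cS=\rank(H_U)+\dim\cS_{U^c}$. The second identity follows by the same argument with the roles of $U$ and $U^c$ exchanged.

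No step here is a real obstacle. The only point that needs care is the convention that $H_U$ includes both the $X$ and $Z$ columns of each qubit in $U$. This is what makes the kernel of $r_U$ equal to the set of stabilizers supported in $U^c$, rather than a larger subspace that merely has zero $X$ part (or zero $Z$ part) on $U$. Signs of the Pauli representatives play no role, since every quantity involved is a binary dimension.
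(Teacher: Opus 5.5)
Your proposal is correct and is essentially the paper's proof. The paper applies rank--nullity to the restriction maps $\cS\to\mathbb F_2^{2|U|}$ and $\cS\to\mathbb F_2^{2|U^c|}$, identifies their images with the row spaces of $H_U$ and $H_{U^c}$ and their kernels with $\cS_{U^c}$ and $\cS_U$, and then substitutes into $\chi_U(\cS)=\dim\cS-\dim\cS_U-\dim\cS_{U^c}$ exactly as you do.
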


Indeed, $\dim\cS_U=\rank(H)-\rank(H_{U^c})$ and $\dim\cS_{U^c}=\rank(H)-\rank(H_U)$, and substitution into Eq.~\eqref{eq:intro-chi} proves the formula.
Three binary Gaussian eliminations therefore evaluate a specified cut, and no search over generating sets is needed.
Searching for the strongest cut of a large device is a separate optimization problem, but any evaluated cut gives a valid bound.

\subsection{Correlations and the measurement projector}

The same quantity measures correlations in the code space.
Let $\omega=\Pi_0/2^k$ be the maximally mixed code state, which treats every encoded state on equal footing, and for a density operator $\rho$ write $S(\rho)=-\operatorname{Tr}(\rho\log_2\rho)$.
The mutual information $I(U:U^c)_\rho=S(\rho_U)+S(\rho_{U^c})-S(\rho)$ then measures the total correlations between the two sides.

\Needspace{13\baselineskip}
\begin{lemma}[Code-state mutual information and correctable regions]
\label{lem:correctable-cut-entropy}
For the maximally mixed state $\omega$ of an $[[n,k,d]]$ stabilizer code,
\begin{equation}
I(U:U^c)_\omega=\chi_U(\cS)
\label{eq:cut-rank-code-state-mi}
\end{equation}
for every data region $U$.
If erasure of $U$ is correctable, in particular if $|U|<d$, then
\begin{equation}
\chi_U(\cS)
=2S(\omega_U)
=2\bigl(|U|-\dim\cS_U\bigr).
\label{eq:correctable-cut-entropy}
\end{equation}
Every pure code state $|\psi\rangle$ then has $\psi_U=\omega_U$, so its entanglement entropy across the cut is $S(\psi_U)=\chi_U(\cS)/2$.
\end{lemma}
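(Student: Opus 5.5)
The proof reduces all three claims to one entropy formula for reduced code states.

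\emph{Reduced-state formula.} Write
\[
\omega=\frac{1}{2^n}\sum_{P\in\cS}P,
\]
with $P$ running over the signed stabilizer group of order $2^{n-k}$. Tracing out $U^c$ kills every term that acts nontrivially there, because nonidentity Paulis are traceless. This gives
\[
\omega_U=\frac{1}{2^{|U|}}\sum_{P\in\cS_U}P|_U .
\]
The elements of $\cS_U$ restricted to $U$ form a commuting signed Pauli group on $|U|$ qubits without $-I$. Indeed, $-I_U$ would lift to $-I\in\cS$, since the element acts as identity on $U^c$. Hence $\omega_U$ is the normalized projector onto a stabilizer subspace of dimension $2^{|U|-\dim\cS_U}$. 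It follows that
\[
S(\omega_U)=|U|-\dim\cS_U ,
\]
and symmetrically for $U^c$. Also $S(\omega)=k=n-\dim\cS$.

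\emph{Mutual information.} Substituting these entropies,
\[
I(U:U^c)_\omega=|U|-\dim\cS_U+|U^c|-\dim\cS_{U^c}-n+\dim\cS=\chi_U(\cS),
\]
which is Eq.~\eqref{eq:cut-rank-code-state-mi}.

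\emph{Correctable regions.} Suppose erasure of $U$ is correctable. By the Knill--Laflamme condition, every Pauli supported in $U$ that commutes with $\cS$ lies in $\cS$ up to phase. In particular, for every code state $\psi$, the expectation $\operatorname{Tr}(\psi P)$ of any Pauli $P$ supported on $U$ is independent of $\psi$:
\begin{itemize}
\item If $P$ anticommutes with some stabilizer, the expectation is $0$.
\item If $P$ commutes with all of $\cS$, then $P\in\pm\cS_U$ and the expectation is its fixed sign.
\end{itemize}
Expanding $\psi_U$ in the Pauli basis then shows $\psi_U=\omega_U$. The case $|U|<d$ is covered because every such Pauli has weight below $d$.

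Now take a pure code state $\psi$. Purity gives $S(\psi_{U^c})=S(\psi_U)=S(\omega_U)$. Since $\psi_{U^c}$ does not depend on $\psi$ either, $S(\omega_{U^c})=S(\psi_{U^c})=S(\omega_U)$. Feeding this into the mutual-information identity gives
\[
\chi_U=S(\omega_U)+S(\omega_{U^c})-k .
\]
This equals $2S(\omega_U)$ only if the $k$ term is absorbed, which is the subtle step.

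\emph{The $k$ term.} I expect this step to be the main obstacle. I would resolve it with the counting identity from Fattal et al.: if $U$ is correctable then $\dim\cS_{U^c}=\dim\cS_U+|U^c|-|U|-k$. Rather than proving this by counting alone, I would derive it from the entropy of a purification. Purify $\omega$ with a $k$-qubit logical reference $R$, giving a pure state on $URU^c$. Correctability of $U$ means $I(U:R)=0$, so $S(UR)=S(U)+S(R)=S(U)+k$. Purity of the tripartite state gives $S(U^c)=S(UR)$, hence $S(\omega_{U^c})=S(\omega_U)+k$. Substituting,
\[
\chi_U=S(\omega_U)+S(\omega_U)+k-k=2S(\omega_U)=2\bigl(|U|-\dim\cS_U\bigr),
\]
which is Eq.~\eqref{eq:correctable-cut-entropy}. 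The final sentence of the lemma follows from $\psi_U=\omega_U$.

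This also corrects my earlier slip: for a pure code state I had written $S(\psi_{U^c})=S(\omega_{U^c})$. That is false in general, because $U^c$ need not be correctable and $\psi_{U^c}$ can depend on $\psi$. The argument should rely only on $\psi_U=\omega_U$ together with the reference-system identity above.
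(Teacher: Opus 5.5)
Your proof is correct, and the mutual-information identity and $\psi_U=\omega_U$ are obtained as in the paper. The factor of two, however, comes by a different route.

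The paper stays in binary symplectic algebra. Correctability gives $(p_U\cS)^{\perp_U}=\cS^\perp\cap\mathbb F_2^{2U}=\cS_U$, hence $\dim p_U\cS=2|U|-\dim\cS_U$. Since $\ker(p_U|_{\cS})=\cS_{U^c}$, it reads off $\chi_U=\dim p_U\cS-\dim\cS_U=2(|U|-\dim\cS_U)$ without ever computing $S(\omega_{U^c})$.

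You instead purify $\omega$ with a $k$-qubit reference. Decoupling, $I(U:R)=0$, together with purity on $URU^c$, gives $S(\omega_{U^c})=S(\omega_U)+k$, which cancels the $-k$.

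Your route shows that $I(U:U^c)_\omega=2S(\omega_U)$ for any code in which $U$ is correctable. The stabilizer structure enters only through $S(\omega_U)=|U|-\dim\cS_U$, and the paper itself uses this same decoupling identity in its appendix on the BFS partition method. The paper's route yields the sharper binary fact $\dim p_U\cS=2|U|-\dim\cS_U$ and needs only the centralizer condition.

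You should justify $I(U:R)=0$ rather than assert it. Your Pauli argument already gives $\Pi_0P\Pi_0=c_P\Pi_0$ for every Pauli $P$ supported on $U$. Hence $\operatorname{Tr}_{U^c}|\bar i\rangle\langle\bar j|=\delta_{ij}\omega_U$ and $\rho_{UR}=\omega_U\otimes I_R/2^k$.

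Also delete the retracted step asserting that $\psi_{U^c}$ is independent of $\psi$ and hence $S(\omega_{U^c})=S(\omega_U)$. It is false whenever $k>0$. Your final argument does not use it, but leaving it in makes the write-up read as inconsistent.
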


The first identity follows from the usual stabilizer entropy calculation \cite{fattalEntanglementStabilizerFormalism2004,audenaertEntanglementMixedStabilizer2005}.
Tracing out $U^c$ removes every Pauli term whose support reaches outside $U$, so the remaining state is a normalized projector with $2^{|U|-\dim\cS_U}$ equal nonzero eigenvalues and therefore entropy $|U|-\dim\cS_U$.
Adding the two marginal entropies and subtracting $k$ gives $\chi_U$.
Correctability means that the erased region carries no information about the encoded state, that is, its reduced state is the same for every code state.
Appendix~\ref{subsec:correctable-cut-entropy} proves the resulting factor-of-two identity.
For a general, noncorrectable region, the mutual information can include classical correlations and need not equal twice a pure code state's entanglement entropy.

Classical communication, however, can create correlations for free in our model, so the mutual information of $\omega$ is not by itself the circuit lower-bound witness.
To obtain a circuit lower bound, we instead examine the entanglement created by the measurement on a particular input.
Pair every data qubit with a private reference qubit in a Bell state, keeping each reference on the same side of the cut as its data partner, so that the input has no entanglement across the cut.
Conditioned on syndrome $s$, the ideal measurement then produces a pure state proportional to the vectorized projector $\Pi_s$.
Here, for an operator $M$, the unnormalized vectorization is $|M\rangle\!\rangle=(M\otimes I)\sum_z|z\rangle_Q|z\rangle_{Q'}$ in the computational basis.

A bipartite pure state has Schmidt rank $R$ if its Schmidt decomposition has $R$ nonzero terms, or equivalently if its reduced state has rank $R$, and when the Schmidt coefficients are equal it is a maximally entangled state on its two $R$-dimensional supports.
The syndrome projectors always give such a flat spectrum.

\begin{lemma}[Schmidt spectrum of a syndrome projector]
\label{lem:projector-choi-schmidt}
Let $\Pi_s$ be any syndrome projector and define $|\widehat\Pi_s\rangle=|\Pi_s\rangle\!\rangle/\sqrt{\operatorname{Tr}\Pi_s}$.
Across $(U,U')|(U^c,(U^c)')$, this vector has $2^{\chi_U(\cS)}$ nonzero Schmidt coefficients, all equal to $2^{-\chi_U(\cS)/2}$.
Its Schmidt rank is $2^{\chi_U(\cS)}$ and its entanglement entropy is $\chi_U(\cS)$ bits.
\end{lemma}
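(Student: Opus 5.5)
The plan is to reduce the Schmidt spectrum of the vectorized projector to the spectrum of a reduced state of the code, and then use the entropy computation behind Lemma~\ref{lem:correctable-cut-entropy}. The starting point is that $\Pi_s$ is $2^{-r}\sum_{P\in\cS}\pm P$, summing over the $2^r$ signed stabilizer elements. Here $r=\dim\cS$ and the signs are fixed by $s$. Vectorization is linear, and $|P\rangle\!\rangle=(P\otimes I)|\Phi\rangle$ with $|\Phi\rangle=\sum_z|z\rangle|z\rangle$. Moreover $|\Phi\rangle$ factors as $|\Phi_U\rangle\otimes|\Phi_{U^c}\rangle$ across $(U,U')|(U^c,(U^c)')$. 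So I would treat $|\widehat\Pi_s\rangle$ as a pure state on the doubled system and compute its reduced state on $UU'$.

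\textbf{Step one: identify the reduced state.} The normalized vector $|\widehat\Pi_s\rangle$ is the Choi-type purification of the normalized state $\Pi_s/\operatorname{Tr}\Pi_s=\Pi_s/2^k$, which is the maximally mixed state $\omega_s$ of sector $s$. The reduced state of $|\widehat\Pi_s\rangle$ on $QQ'$-halves is therefore a purification-style object. More usefully, I would compute the reduced state on $UU'$ directly. Write $\Pi_s=2^{-r}\sum_P \epsilon_P P_U\otimes P_{U^c}$ and apply the partial trace over $U^c(U^c)'$. The cross terms give $\langle\!\langle P_{U^c}|P'_{U^c}\rangle\!\rangle=\operatorname{Tr}(P_{U^c}^\dagger P'_{U^c})$. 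This vanishes unless $P$ and $P'$ agree on $U^c$, up to phase, that is unless $PP'\in\cS_U$.

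\textbf{Step two: exploit the group structure.} Group the stabilizer elements into cosets of $\cS_U$ according to their restriction to $U^c$. Two elements contribute jointly exactly when they lie in the same coset of $\cS_U$. The reduced operator on $UU'$ then becomes a sum over the cosets $\cS/\cS_U$ of projectors, namely the outer products of vectors $\sum_{P\in\text{coset}}\epsilon_P|P_U\rangle\!\rangle$. Distinct $P_U$ are orthogonal Paulis, so these coset vectors are orthogonal. The key point is which coset vectors vanish. Those differing by elements of $\cS_{U^c}$ have identical $U$-restriction up to sign, so the vectors are really indexed by $\cS/(\cS_U+\cS_{U^c})$. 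For each such class, I must check that the signed sums do not cancel. Since $\Pi_s\neq0$, the signs are a character consistent on $\cS_U$, so the inner sums over $\cS_U$ are nonzero.

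Each surviving class then contributes an orthogonal vector of the same norm, by symmetry under multiplication by stabilizer elements. The reduced state is therefore proportional to a projector of rank $|\cS/(\cS_U+\cS_{U^c})|=2^{\chi_U}$. Normalization then forces every eigenvalue to equal $2^{-\chi_U}$, so every Schmidt coefficient equals $2^{-\chi_U/2}$, and the entropy is $\chi_U$.

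\textbf{Main obstacle.} I expect the main obstacle to be the sign and phase bookkeeping in step two: showing that the coset sums neither cancel nor have unequal norms. As a cleaner alternative, I would first try a Clifford reduction. Choose the basis of Lemma~\ref{lem:minimum-crossing-generators}, consisting of bases of $\cS_U$ and $\cS_{U^c}$ plus $\chi_U$ crossing generators. Apply local unitaries on each side, which preserve Schmidt spectra, to bring the local stabilizers to single-qubit $Z$'s. The remaining crossing generators then form a standard form of $\chi_U$ Bell-type pairs. This is the Fattal et al.\ canonical form, and under it the vectorized projector visibly factors into product parts times $\chi_U$ maximally entangled pairs.
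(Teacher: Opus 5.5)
Your proposal is correct and is essentially the paper's argument: both group the stabilizer average by cosets of $\cS_U+\cS_{U^c}$ and rely on Hilbert--Schmidt orthogonality of distinct Paulis. The difference is packaging: the paper writes the Schmidt decomposition $\Pi_s=2^{-c}\sum_t(P_t\Pi_{U,s})\otimes(Q_t\Pi_{U^c,s})$ directly, whereas you compute the reduced state on $UU'$, whose coset vectors are proportional to the left factors $|P_t\Pi_{U,s}\rangle\!\rangle$.

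Three details need fixing. First, orthogonality holds only between distinct classes of $\cS/(\cS_U+\cS_{U^c})$: cosets of $\cS_U$ differing by an element of $\cS_{U^c}$ give the same vector up to phase, so the count drops by identification, not by vanishing. Second, the cancellation you flag as the main obstacle cannot occur. Within one coset of $\cS_U$ the restrictions to $U$ are pairwise distinct Paulis (as $\cS_U\cap\cS_{U^c}=0$), so every coset vector has squared norm exactly $2^{\dim\cS_U+|U|}$. Third, in your normal-form fallback the crossing part is $p$ pairs fixed by $XX$ and $ZZ$ plus $q$ pairs fixed by $ZZ$ alone, with $2p+q=\chi_U$, not $\chi_U$ Bell pairs.
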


To obtain the decomposition, expand $\Pi_s$ as a stabilizer average and group its terms by cosets of $\cS_U+\cS_{U^c}$.
Each of the $2^{\chi_U}$ cosets supplies one product of local operators, and distinct cosets give orthogonal factors of equal Hilbert--Schmidt norm on both sides.
After normalization these factors therefore form the Schmidt decomposition \cite{nielsenQuantumDynamicsPhysical2003}, and Appendix~\ref{subsec:adaptive-branch-proof} verifies the orthogonality and constants.
Unlike the pure-code-state entropy identity, this projector statement requires no correctability assumption and holds at every cut.

\subsection{Crossing operations and circuit depth}
\label{sec:submodular-cd}

The reference input turns exactness into a constraint on each branch of the circuit.
A branch must produce the projector state from Lemma~\ref{lem:projector-choi-schmidt}, while each crossing operation can increase its Schmidt rank by only a bounded factor.
This argument also handles adaptive circuits, because once all outcomes on a branch are fixed, each quantum operation is represented by a fixed Kraus operator.
Unobserved Kraus indices can be refined for this purpose within the proof, without giving the controller access to them.

For a hardware region $L$, write $g_L(\beta)$ for the number of crossing two-qubit operations on a complete Kraus trajectory $\beta$, including measurements and channels as well as gates.
To collect the resulting depth bounds, define
\begin{equation}
\kappa(\cS,G_{\hw})
=\max_{\substack{L\subseteq V_{\hw}\\|\partial_{G_{\hw}}L|>0}}
\frac{\chi_{Q\cap L}(\cS)}{|\partial_{G_{\hw}}L|},
\label{eq:main-kappa}
\end{equation}
with an empty maximum set to zero.
The numerator describes the data measurement, whereas the denominator describes the available hardware boundary.

\begin{theorem}[Adaptive extraction across a cut]
\label{thm:adaptive-instrument-cut-bound}
\label{thm:swap-only-submodular-cd}
\label{thm:stabilizer-cut-rank-functional}
Every exact circuit $C$ in Definition~\ref{def:adaptive-local-extraction} satisfies, for every hardware cut $L$ and every nonzero complete Kraus trajectory $\beta$,
\begin{equation}
\chi_{Q\cap L}(\cS)\le2g_L(\beta).
\label{eq:adaptive-trajectory-bound}
\end{equation}
Consequently,
\begin{equation}
\depth(C)\ge\frac12\kappa(\cS,G_{\hw}).
\label{eq:adaptive-cut-depth}
\end{equation}
If a cut has no boundary edge but positive $\chi_{Q\cap L}(\cS)$, exact extraction is impossible.
\end{theorem}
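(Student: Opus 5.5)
The proof is a Schmidt-rank budget along each branch, applied to the reference input. Fix a hardware region $L$ with data $U=Q\cap L$. Attach to every data qubit $q$ a private reference qubit $q'$ in a Bell pair $|\Phi\rangle_{qq'}$, and assign $q'$ to the side of the cut containing $q$. Every ancilla starts in a product state, so the global input is pure and has Schmidt rank $1$ across the bipartition $(L\cup U')\,|\,(L^c\cup (U^c)')$. Now fix a nonzero complete Kraus trajectory $\beta$, refining unobserved Kraus indices as described before the theorem. The branch then applies a fixed product $K_\beta=K_m\cdots K_1$ of one- and two-qubit Kraus operators, each tensored with the identity elsewhere, to the pure input, and the result is a pure (unnormalized) vector.

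\textbf{Step 1: the output has Schmidt rank at least $2^{\chi_U}$.} Exactness says the branch reports some $s$ and, by Remark~\ref{rem:sectorwise-exactness}/Proposition~\ref{prop:sectorwise-instrument-rigidity}, its operator restricted to the data is $c_\beta\Pi_s$, with $c_\beta\ne0$ because the branch is nonzero. More precisely, the map from data input to (data, ancilla) output is $\Pi_s\otimes|a_\beta\rangle$ up to scalar. Here I will need to check that the ancilla output factor is a fixed vector, independent of the input. Rigidity forces this: a Kraus operator of the form $A\mapsto \sum_j \Pi_s\otimes|a_j\rangle$ that realizes a multiple of $\Pi_s$ after the ancillas are traced out must have all $|a_j\rangle$ parallel. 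If an observed Pauli frame is present, it is a product of single-qubit operators, which do not change Schmidt rank. Data labels return to their original sites, so the data part of the output sits on $Q$ with the same side assignment as at the input. The output vector is therefore $c_\beta|\Pi_s\rangle\!\rangle\otimes|a_\beta\rangle$. The ancilla part might still be entangled across the cut, but that can only increase the rank. Since the rank of a tensor product multiplies, Lemma~\ref{lem:projector-choi-schmidt} gives Schmidt rank at least $2^{\chi_U}$.

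\textbf{Step 2: each crossing operation multiplies the rank by at most $4$.} Applying an operator $A\otimes B$ that is local to the bipartition cannot increase Schmidt rank. A two-qubit Kraus operator acting on a crossing edge has operator Schmidt decomposition $\sum_{i=1}^{r}A_i\otimes B_i$ with $r\le4$, since $\dim(\mathbb C^{2\times2})=4$. It therefore multiplies the rank by at most $r\le4$, and by at most $2$ for CNOT, since $\mathrm{CNOT}=|0\rangle\langle0|\otimes I+|1\rangle\langle1|\otimes X$. Starting from rank $1$, the final rank is at most $4^{g_L(\beta)}$. Combining with Step 1, $2^{\chi_U}\le 4^{g_L(\beta)}$, which gives $\chi_U\le 2g_L(\beta)$ as in Eq.~\eqref{eq:adaptive-trajectory-bound}. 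Discards and resets can be handled as Kraus operators $|0\rangle\langle j|$, which are local.

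\textbf{Step 3: depth and the boundaryless case.} By Definition~\ref{def:hardware-boundary}, each layer contains at most $b=|\partial L|$ crossing operations, so along any nonzero branch $g_L(\beta)\le b\cdot\depth(C)$. Hence $\depth(C)\ge\chi_U/(2b)$, and maximizing over $L$ with $b>0$ gives Eq.~\eqref{eq:adaptive-cut-depth}. If $b=0$, then $g_L(\beta)=0$ on every branch, so $\chi_U\le0$, and exact extraction is impossible whenever $\chi_U>0$. At least one nonzero branch exists because the probabilities sum to one.

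The main obstacle is Step 1: making rigorous that the ancilla and refined-trajectory output factorizes as data projector $\otimes$ fixed vector, with unit-norm-insensitive scaling, so that the Schmidt rank really is bounded below. I would prove this by applying the branch operator to sector basis states, which, using the sectorwise rigidity argument of Proposition~\ref{prop:sectorwise-instrument-rigidity}, forces a common ancilla vector.
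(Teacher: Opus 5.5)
Your proposal is essentially the paper's own proof: the same local data--reference Bell pairs, the same rank-one Choi (rigidity) argument forcing each nonzero trajectory to be proportional to $\Pi_s$, and the same bound of four on the operator Schmidt rank of each crossing Kraus operator. The paper refines the final ancilla trace into local computational-basis bras, which removes the obstacle you flag: writing $V_\beta=\sum_j K_{\beta,j}\otimes|j\rangle$ and applying rigidity to each $K_{\beta,j}$ gives $V_\beta=\Pi_s\otimes\sum_j c_{\beta,j}|j\rangle$ at once. The only slip is your assertion that the global input is pure, since product ancillas may be mixed; purify each one with a private purifier on its own side of the cut, as the paper does, which adds no crossing operations.
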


\begin{proof}
Prepare the local data--reference Bell pairs described above.
For syndrome $s$, exactness then requires the pure conditional output $|\widehat\Pi_s\rangle$, of Schmidt rank $2^{\chi_{Q\cap L}}$.
To compare individual branches, refine every instrument and final ancilla discard into individual Kraus outcomes.
The positive contributions of all trajectories reporting $s$ sum to a rank-one output, so every nonzero contribution is proportional to the same vector.

Along a fixed trajectory, operations within either side cannot increase Schmidt rank.
A crossing two-qubit Kraus operator is a sum of at most four products of local operators, since the one-qubit operator space has dimension four, and it therefore multiplies Schmidt rank by at most four.
Since the input has rank one across the cut, it follows that
\begin{equation}
2^{\chi_{Q\cap L}(\cS)}\le4^{g_L(\beta)}.
\end{equation}
At most $|\partial_{G_{\hw}}L|$ crossing operations occur in one layer, which gives the depth statement and the impossibility at an empty boundary.
Appendix~\ref{subsec:adaptive-branch-proof} includes the treatment of ancillary refinements and the projector decomposition.
\end{proof}

This bound does not assume code distance, rate, correctability, or Clifford gates, but its constant depends on which crossing primitive is counted as one operation.
The relevant algebraic quantity is the operator Schmidt rank, namely the least number of product operators needed to express that primitive's Kraus operator.
If those ranks are $r_1,\ldots,r_g$ on a trajectory, the same proof gives $\chi_U\le\sum_j\log_2 r_j$.

\begin{remark}[Primitive cost and sharp constant]
\label{rem:adaptive-sharp-constant}
A CNOT has operator Schmidt rank two, as shown by
\begin{equation}
\operatorname{CNOT}
=|0\rangle\langle0|\otimes I+|1\rangle\langle1|\otimes X.
\label{eq:cnot-schmidt-decomposition}
\end{equation}
If all crossing operations are CNOTs, then on every nonzero branch
\begin{equation}
g_L^{\rm CNOT}(\beta)\ge\chi_{Q\cap L}(\cS),
\qquad
T_{\rm CNOT}\ge\frac{\chi_{Q\cap L}(\cS)}{b_L},
\label{eq:cnot-cut-bound}
\end{equation}
where $b_L=|\partial_{G_{\hw}}L|>0$ and $T_{\rm CNOT}$ is the worst-case number of layers containing CNOTs.
Crossing CZ gates and individual projective two-qubit Pauli measurements also have Kraus operators of operator Schmidt rank at most two and obey the same crossing-count bound.
For arbitrary crossing unitaries mixed with individual Pauli measurements, one obtains
\begin{equation}
\chi_{Q\cap L}(\cS)
\le2g_L^{\rm unitary}(\beta)+g_L^{\rm Pauli\ measurement}(\beta).
\label{eq:weighted-crossing-primitives}
\end{equation}
A joint Bell-basis L\"uders measurement is a different primitive, since it measures $XX$ and $ZZ$ together and has Kraus operators of operator Schmidt rank four across the cut.
For two data qubits on opposite sides of one edge, this depth-one primitive attains $\chi_U/2=1$, so the general factor $1/2$ cannot be improved in this model.
\end{remark}

The boundary edge count may overestimate the operations available per layer when edges share a site.
If $\nu_L$ is the largest matching of boundary edges, then $g_L(\beta)\le\depth(C)\nu_L$ and, for $\nu_L>0$, the depth lower bound strengthens to $\chi_{Q\cap L}/(2\nu_L)$.
The simpler edge count is sufficient for the geometric results below.

\subsection{Protocols that attain the cut cost}
\label{subsec:single-cut-exact}

To determine whether the crossing bound is attainable, we now allow arbitrary operations within each side so that only the interaction across the cut is charged.
This separates the intrinsic resource needed by the measurement from the time needed to route and process qubits inside a device.

\begin{definition}[Bipartite extraction with free local processing]
\label{def:bipartite-extraction}
Alice holds $U$ and Bob holds $U^c$.
Each may apply arbitrary finite local instruments using unlimited local ancillas and memory, and classical communication, randomness, and processing are free.
The resource state is independent of the input data.
For each reported syndrome $s$, the protocol must implement $\rho\mapsto\Pi_s\rho\Pi_s$ and return every data qubit to its original owner's original register.
The crossing resources are either Bell pairs shared at the start, with no later crossing quantum operation, or counted two-qubit operations across the cut, with product ancillas initially.
When crossing links are specified, simultaneous crossing operations must use disjoint link endpoints.
\end{definition}

\begin{theorem}[Exact interaction cost of a cut]
\label{thm:single-cut-resource}
Let $\chi=\chi_U(\cS)$ in Definition~\ref{def:bipartite-extraction}.
\begin{enumerate}
\item\label{it:single-cut-bell}
With $m$ shared Bell pairs and no crossing operation, exact extraction is possible if and only if $m\ge\chi$.
\item\label{it:single-cut-cnot}
If the only crossing operations are CNOTs, every exact protocol uses at least $\chi$ on each nonzero branch, and some protocol uses exactly $\chi$.
\item\label{it:single-cut-general}
With arbitrary crossing two-qubit instruments, every exact protocol uses at least $\lceil\chi/2\rceil$ on each nonzero branch, and some protocol attains this count using unitary gates.
\item\label{it:single-cut-layers}
Suppose the available crossing links have maximum matching size $\nu\ge1$, and each party can move states freely between link qubits and sufficiently many local storage qubits.
The minimum number of layers containing crossing operations is $\lceil\chi/\nu\rceil$ for CNOTs and $\lceil\chi/(2\nu)\rceil$ for arbitrary two-qubit instruments.
\end{enumerate}
\end{theorem}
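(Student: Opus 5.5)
We prove the four parts by pairing the lower bound of Theorem~\ref{thm:adaptive-instrument-cut-bound} with explicit protocols. The lower bounds come from the Schmidt-rank argument already given. Feed in the local data--reference Bell pairs, so that by Lemma~\ref{lem:projector-choi-schmidt} every nonzero branch must output a vector of Schmidt rank $2^{\chi}$ across Alice$|$Bob. Local instruments, classical communication, and discards cannot raise Schmidt rank along a fixed Kraus trajectory. Hence $m$ shared Bell pairs, whose initial Schmidt rank is $2^m$, force $m\ge\chi$. Likewise, crossing CNOTs (operator Schmidt rank two) force at least $\chi$ of them, and arbitrary crossing two-qubit instruments (rank at most four) force at least $\lceil\chi/2\rceil$, as in Remark~\ref{rem:adaptive-sharp-constant}. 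For part~\ref{it:single-cut-layers}, a single layer contains at most $\nu$ crossing operations on disjoint link endpoints, which gives at least $\lceil\chi/\nu\rceil$ layers for CNOTs and $\lceil\chi/(2\nu)\rceil$ layers in general.

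For the upper bounds, Lemma~\ref{lem:minimum-crossing-generators} supplies a basis consisting of bases of $\cS_U$ and $\cS_{U^c}$ together with $\chi$ crossing generators $P_j=A_j\otimes B_j$. Each party measures its own local stabilizers. We measure each crossing generator with the remote parity gadget, in the following steps. Alice applies a local Clifford that maps $A_j$ to $Z$ on a fresh local ancilla register, after first copying its eigenvalue coherently into an ancilla. Concretely, she measures $A_j\otimes Z_a$ against her half $a$ of a Bell pair $|\Phi\rangle_{ab}$, and Bob does the same with $B_j\otimes X_b$-type operations. The net effect is the standard teleported-parity construction, which implements the L\"uders measurement of $A_j\otimes B_j$ using one Bell pair, one outcome bit from each side, and a classically controlled Pauli correction. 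The crossing generators commute with each other and with the local stabilizers, so the gadgets can be applied in sequence. The product of all the branch operators is then $\Pi_s$, up to Pauli frames that Remark~\ref{rem:sectorwise-exactness} allows. The requested syndrome of any listed presentation is recovered classically from these outcomes.

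The conversions to the other resources are as follows. A Bell pair can be made with one crossing CNOT from local $|+\rangle|0\rangle$, which gives $\chi$ CNOTs. Alternatively, one crossing unitary on two qubit pairs, namely SWAP$\otimes$SWAP acting on the halves of two locally prepared Bell pairs, distributes two Bell pairs. Since each party may hold two qubits per link endpoint, we realize it as a single two-qubit gate: a SWAP on each of two ququbit-free edges would count twice. Instead, in the Eisert et al.\ sense, one crossing SWAP distributes two ebits when the swapped qubits are halves of local Bell pairs. This gives $\lceil\chi/2\rceil$ crossing unitaries. For the layer statement, we schedule $\nu$ such gates per layer on a maximum matching of links and move the resulting ebits into storage, which achieves the ceilings.

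The main obstacle I expect is verifying exactness of the gadget on every branch, including the reference-entangled inputs. The gadget's Kraus operator, after the observed-record Pauli correction, must equal $c\,(I\pm A_j\otimes B_j)/2$ with no leftover dependence on the unobserved indices. I would check this by stabilizer tableau bookkeeping for a single gadget and then invoke Remark~\ref{rem:sectorwise-exactness} to pass from sectorwise correctness to the full instrument.
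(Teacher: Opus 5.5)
Your lower bounds and overall architecture match the paper's. The ingredients are:
\begin{itemize}
\item reference-input Schmidt rank, with $2^m$ from $m$ Bell pairs, a factor $2$ per crossing CNOT and $4$ per general primitive, and at most $\nu$ crossing operations per layer;
\item a cut-adapted basis from Lemma~\ref{lem:minimum-crossing-generators};
\item one Bell pair per crossing generator, with CNOT/SWAP preparation of those pairs batched along a maximum matching.
\end{itemize}
The gap is in the ingredient the upper bounds rest on: the parity gadget, as you specify it, does not measure $A_j\otimes B_j$. For the two local outcomes to combine into the eigenvalue of $A_j\otimes B_j$, the two ancilla Paulis must multiply to a stabilizer of $|\Phi\rangle_{ab}$, namely $X_aX_b$, $Z_aZ_b$ or $-Y_aY_b$. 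With Alice coupling through $A_j\otimes Z_a$ and Bob through $B_j\otimes X_b$, the product is $(A_j\otimes B_j)\,Z_aX_b$, and $Z_aX_b$ is not one of these. Concretely, suppose that after these two measurements Alice reads $a$ in the $Z$ basis and Bob reads $b$ in the $X$ basis. The data Kraus operator is then $\pm\tfrac18(I\pm A_j)(I\pm B_j)$, so the two local factors are measured separately. That destroys the intra-sector coherence an exact L\"uders instrument must preserve. You deferred the branch-by-branch check to a later tableau computation, so nothing in the proposal catches this.

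The fix is to use matching Pauli types, as the paper does. Apply controlled-$A_j$ from $a$ and controlled-$B_j$ from $b$, then read both ancillas in the $X$ basis. This amounts to measuring $A_jX_a$ and $B_jX_b$, whose product is $(A_j\otimes B_j)X_aX_b$. Contracting $\tfrac1{\sqrt2}(|00\rangle_{ab}\otimes I+|11\rangle_{ab}\otimes A_j\otimes B_j)|\psi\rangle$ with $\langle u|_X\langle v|_X$ gives $K_{u,v}=(I+(-1)^{u\oplus v}A_j\otimes B_j)/(2\sqrt2)$. This holds for every input vector, references included, and needs no correction. The $Z_a,Z_b$ variant also works, but leaves a byproduct $A_j$ on odd $X$-parity that Alice must undo.

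So the step you flagged as the main obstacle is a two-line computation. Composing the commuting gadgets gives $2^{-\chi/2}\Pi_s$ on each raw branch, and neither Pauli frames nor Remark~\ref{rem:sectorwise-exactness} is needed.

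Finally, drop the ``SWAP$\otimes$SWAP'' sentence, since that would cost two crossing gates. The identity you eventually state is the right one: a single crossing SWAP on $a,b$ turns $|\Phi\rangle_{aa'}|\Phi\rangle_{bb'}$ into $|\Phi\rangle_{ba'}|\Phi\rangle_{ab'}$. Hence $\lfloor\chi/2\rfloor$ SWAPs, plus one CNOT when $\chi$ is odd, attain $\lceil\chi/2\rceil$.
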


\begin{proof}
The reference-input proof requires final Schmidt rank $2^\chi$ on every nonzero branch.
Local processing cannot increase it, $m$ shared Bell pairs supply rank $2^m$, a crossing CNOT can at most double it, and a general crossing primitive can at most quadruple it.
These facts give all lower bounds, including the layer counts.

For the upper bounds, first take a basis with exactly $\chi$ crossing generators, as in Lemma~\ref{lem:minimum-crossing-generators}.
Measure its local generators locally.
For each crossing generator $P=P_A\otimes P_B$, share a Bell pair, apply controlled-$P_A$ and controlled-$P_B$ from its respective halves, and measure both halves in the $X$ basis.
The raw outcome pair $(u,v)$ applies
\begin{equation}
K_{u,v}=\frac{I+(-1)^{u+v}P}{2\sqrt2}
\label{eq:main-remote-parity-kraus}
\end{equation}
to the data.
The parity $u\oplus v$ is the check outcome, and summing its two raw branches gives the ideal parity instrument.
Since the generators commute, composing these gadgets yields $2^{-\chi/2}\Pi_s$ on each raw branch and the full syndrome instrument after grouping records.
Generator signs and redundant output labels are determined by the original signed group.

To realize the same protocol using crossing gates, note that one crossing CNOT prepares one shared Bell pair from $|+\rangle|0\rangle$, and that a SWAP between halves of two locally prepared Bell pairs creates two shared Bell pairs.
Thus $\chi$ CNOTs suffice, or $\lfloor\chi/2\rfloor$ SWAPs and one CNOT when $\chi$ is odd.
Preparing these resources in batches along a maximum matching and storing them locally attains the crossing-layer counts.
Appendix~\ref{app:single-cut-protocols} gives the full calculation and storage argument.
\end{proof}

The parity gadget is the familiar entanglement-assisted nondestructive measurement \cite{beckmanCausalLocalizableQuantum2001,limTradeoffInformationGain2025}, and the CNOT and SWAP resource conversions are established \cite{eisertOptimalLocalImplementation2000}.
Here, the cut-adapted basis specifies exactly how many copies are needed.
For example, a single $ZZ$ measurement has $\chi_U=1$, although its classical outcome could be obtained destructively without shared entanglement.
A complete Bell-basis instrument has $\chi_U=2$, whereas a measurement that need only report its Bell label can consume one Bell pair and destroy the input \cite{bandyopadhyayEntanglementCostNonlocal2009}.
Discarding the syndrome changes the task again, because shared randomness implements the resulting stabilizer dephasing channel with no crossing operation \cite{beckmanCausalLocalizableQuantum2001}.

The Bell-pair lower bound can also be recovered from prior results, and it does not require a new general entanglement-cost principle.
The bipartite stabilizer normal form, for example, produces $2^{\chi_U}$ orthogonal test states, maximally entangled on their active registers, with distinct syndromes, and Theorem~2 of Lim, Hhan, and Kwon bounds their perfect nondestructive discrimination cost by $\chi_U$ ebits \cite{limTradeoffInformationGain2025}.
The instrument theorem of Akibue, Miyazaki, and Osaka gives the same Schmidt-rank obstruction from a branch Choi operator \cite[Theorem~3(i)]{akibueOptimizingEntanglementManipulation2026}, and Appendix~\ref{subsec:single-cut-prior-work} gives both reductions.

Free local processing is essential to interpreting the upper bounds.
Moving stored states to a port and carrying out controlled Pauli operations take quantum time on a real device, whereas Theorem~\ref{thm:single-cut-resource} optimizes the crossing resource and its crossing layers, not the total elapsed depth.
A second limitation is that different cuts can require different adapted bases.
Section~\ref{sec:worked-example-tree} gives a sparse attaining example and explains the additional structure that permits simultaneous edge optimality on a tree.

\subsection{Approximate instruments}

The flat projector spectrum also gives a quantitative bound when the instrument is only approximate.
The bound uses the normalized reference input
\begin{equation}
|\Phi\rangle=2^{-n/2}\sum_z|z\rangle_Q|z\rangle_{Q'},
\qquad \Phi=|\Phi\rangle\langle\Phi|,
\label{eq:reference-bell-input}
\end{equation}
and includes the syndrome register in the output.
Let $\mathcal E$ be the implemented instrument channel and $\mathcal M_{\cS}(\rho)=\sum_s|s\rangle\langle s|\otimes\Pi_s\rho\Pi_s$ the ideal one.
Write $\rho=(\mathcal E\otimes\operatorname{id})(\Phi)$ and $\sigma=(\mathcal M_{\cS}\otimes\operatorname{id})(\Phi)$, and define their squared fidelity by
\begin{equation}
F=F(\rho,\sigma)
=\left(\operatorname{Tr}\sqrt{\sqrt\sigma\rho\sqrt\sigma}\right)^2.
\label{eq:approximate-test-fidelity}
\end{equation}

\Needspace{14\baselineskip}
\begin{corollary}[Approximate extraction]
\label{cor:approximate-instrument-cut-bound}
Let $C$ satisfy the physical resource assumptions of Definition~\ref{def:adaptive-local-extraction}, without requiring exactness, and implement a trace-preserving instrument on the same data and syndrome output spaces.
For $U=Q\cap L$, $b_L=|\partial_{G_{\hw}}L|>0$, and $F>0$,
\begin{equation}
\depth(C)\ge
\frac{\chi_U(\cS)-\log_2(1/F)}{2b_L}.
\label{eq:approximate-cut-depth}
\end{equation}
On the reference input, the crossing count also satisfies
\begin{equation}
\Pr_{\Phi}\!\left[g_L(\beta)\ge a_F\right]\ge F/2,
\qquad
 a_F=\frac{\chi_U(\cS)-\log_2(2/F)}2.
\label{eq:approximate-cut-tail}
\end{equation}
If $b_L=0$, necessarily $F\le2^{-\chi_U(\cS)}$.
\end{corollary}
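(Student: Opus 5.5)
We reduce the approximate statement to the exact branch analysis of Theorem~\ref{thm:adaptive-instrument-cut-bound} through an overlap estimate. First we refine the circuit on the reference input $\Phi$ into complete Kraus trajectories $\beta$. Each trajectory yields an unnormalized pure vector $|v_\beta\rangle$ on data, reference and surviving ancilla registers, together with a reported syndrome $s(\beta)$. Then $\rho=\sum_\beta |s(\beta)\rangle\langle s(\beta)|\otimes \operatorname{Tr}_{\rm anc}|v_\beta\rangle\langle v_\beta|$, with the ancillas that are discarded included in the refinement. The ideal output is $\sigma=\sum_s p_s |s\rangle\langle s|\otimes|\widehat\Pi_s\rangle\langle\widehat\Pi_s|$ with $p_s=2^{-(n-k)}$, because each $\operatorname{Tr}\Pi_s=2^k$ and $\langle\Phi|\Pi_s\otimes I|\Phi\rangle=2^{-(n-k)}$. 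Since $\sigma$ is a classical mixture of pure states, the fidelity reduces to $\sqrt F\le\sum_s\sqrt{p_s\langle\widehat\Pi_s|\rho_s|\widehat\Pi_s\rangle}$, where $\rho_s$ is the $s$-block of $\rho$.

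The key step bounds each overlap by Schmidt rank. Fix a trajectory with $g=g_L(\beta)$ crossing operations. Because the input $\Phi$ is a product across $(U,U')|(U^c,(U^c)')$, the argument of Theorem~\ref{thm:adaptive-instrument-cut-bound} shows that $|v_\beta\rangle$, viewed across the cut with ancillas assigned to their sides, has Schmidt rank at most $4^{g}$. Tracing ancillas gives a mixture of vectors of the same Schmidt rank bound on the data--reference systems. A vector $w$ of Schmidt rank $R$ satisfies $|\langle\widehat\Pi_s|w\rangle|^2\le R\,2^{-\chi}\|w\|^2$, since the flat Schmidt coefficients $2^{-\chi/2}$ of Lemma~\ref{lem:projector-choi-schmidt} give a maximal overlap equal to the sum of the top $R$ squared coefficients. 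Hence $\langle\widehat\Pi_s|\rho_s|\widehat\Pi_s\rangle\le\sum_{\beta:s(\beta)=s}\min(1,4^{g_L(\beta)}2^{-\chi})\,q_\beta$, where $q_\beta=\|v_\beta\|^2$ is the trajectory probability.

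For the depth bound, every branch has $g_L(\beta)\le b_L\depth(C)$, so $\langle\widehat\Pi_s|\rho_s|\widehat\Pi_s\rangle\le 4^{b_L\depth}2^{-\chi}\Pr[s]$. Cauchy--Schwarz then gives $\sqrt F\le\sqrt{4^{b_L\depth}2^{-\chi}}\sum_s\sqrt{p_s\Pr[s]}\le\sqrt{4^{b_L\depth}2^{-\chi}}$. This means $F\le 2^{2b_L\depth-\chi}$, which rearranges to \eqref{eq:approximate-cut-depth}. When $b_L=0$ we have $g=0$, so the same inequality yields $F\le2^{-\chi}$.

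For the tail bound, let $p=\Pr_\Phi[g_L(\beta)\ge a_F]$. We split the overlap sum into trajectories with $g<a_F$, whose factor is at most $4^{a_F}2^{-\chi}=F/2$, and the remaining ones, whose factor is at most $1$. Applying Cauchy--Schwarz to the combined sum gives $F\le\sum_s\bigl(\tfrac F2\Pr[s,g<a_F]+\Pr[s,g\ge a_F]\bigr)\le F/2+p$, and therefore $p\ge F/2$. The step we expect to be delicate is justifying the fidelity upper bound by the weighted sum of overlaps. Here $\sigma$'s block structure and the classical syndrome register make the fidelity a sum of blockwise fidelities, and the concavity estimate $F(\rho_s,p_s\phi)=p_s\langle\phi|\rho_s|\phi\rangle$ for a pure $\phi$ handles each block. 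Beyond that, the remaining care goes into treating ancilla discards so that the Schmidt-rank bound applies to mixed branch outputs.
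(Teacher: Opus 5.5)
Your proof is correct and follows essentially the same route as the paper: you use the block-diagonal reduction of the fidelity, the bound of $4^{g_L(\beta)}$ on the Schmidt rank of each refined trajectory, the flat-target overlap bound $\min\{1,R\,2^{-\chi_U}\}$ from Lemma~\ref{lem:projector-choi-schmidt}, and then either the uniform bound $g_L(\beta)\le b_L\depth(C)$ or a split of the trajectories at the threshold $a_F$. The only difference is organizational: the paper applies Cauchy--Schwarz once to obtain $F\le W=\sum_s q_sf_s$ and then proves the general tail inequality $p_\theta\ge(\alpha-\theta)/(1-\theta)$, which it reuses for the diamond-distance variant, whereas you apply Cauchy--Schwarz separately inside each of the two bounds.
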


A branch with $g$ crossing operations has Schmidt rank at most $4^g$, so its squared overlap with the flat rank-$2^{\chi_U}$ target is at most $\min\{1,2^{2g-\chi_U}\}$ \cite{terhalSchmidtNumberDensity2000}.
Averaging this overlap bound over branches gives the corollary, and Appendix~\ref{subsec:approximate-branch-proof} proves the fidelity and probability steps.
At fixed positive fidelity, approximation thus subtracts only a constant from the cut numerator.
The test is the same for every cut because only the grouping of reference qubits changes.
If instead $\tfrac12\|\mathcal E-\mathcal M_{\cS}\|_\diamond\le\varepsilon<1$, the bounds hold with $1-\varepsilon$ replacing $F$, because the ideal-output projector then has expectation at least $1-\varepsilon$, which the proof uses directly.
In both forms, this is a bound on the full quantum instrument, not on syndrome probabilities alone.

\section{A worked example and tree networks}
\label{sec:worked-example-tree}
\label{subsec:finite-code-examples}

The single-cut theorem allows arbitrary local processing, so its attaining basis need not retain the sparse checks used in an experimental circuit.
In the example below, however, the optimum can be reached by deleting dependent checks from the original weight-six list.
We then extend the resource question to several processors connected by a tree, where the individual edge optima can all be attained by one protocol.

\subsection{Sixty Bell pairs for a gross-code half-torus}

The gross code is the $[[144,12,12]]$ bivariate bicycle code of Bravyi et al.~\cite{bravyiHighthresholdLowoverheadFaulttolerant2024}.
Its 144 data qubits come in two species on a $12\times6$ torus, and its original presentation has 72 $X$ checks and 72 $Z$ checks, each of weight six.
We divide the code into two processors, with $U$ containing both species in the six columns $0\le i<6$ and $U^c$ containing both species in the remaining six columns.
Appendix~\ref{subsec:finite-code-calculations} fixes the matrix convention behind these coordinates and gives the exact row certificate.

The stabilizer space has dimension 132, and restriction to either half has rank 96.
There are therefore 36 independent stabilizers supported entirely in each half, and
\begin{equation}
\chi_U=132-36-36=96+96-132=60.
\label{eq:gross-half-torus-cut}
\end{equation}
With this value, the cut bound requires at least 60 crossing CNOTs if every crossing operation is a CNOT, and at least 60 shared pairs if the only nonlocal resource is supplied Bell pairs and all subsequent operations are local to the processors.
Theorem~\ref{thm:single-cut-resource} also gives an optimum of 30 arbitrary crossing two-qubit gates under free local processing.
These costs are for this specified half-torus, and they do not minimize over balanced data cuts.

\begin{proposition}[A sparse optimum for the gross-code half-torus]
\label{prop:gross-sparse-single-cut-optimum}
For the gross code and cut just specified, the 36 original checks supported in each half form a basis of its full supported stabilizer space.
A basis of the full stabilizer space consists of 72 local and 60 crossing original checks, all of weight six and with data-qubit incidence at most six.
With arbitrary local processing, sufficient local storage, and free classical communication, 60 shared Bell pairs implement the exact syndrome instrument, including all 144 original outcome bits and the data output at its original sites.
No protocol supplied only Bell pairs and product local work registers uses fewer pairs across this cut.
\end{proposition}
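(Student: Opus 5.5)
The proof combines a finite binary computation with the general results of Section~\ref{sec:stabilizer-cut-rank}. First I fix the bivariate bicycle matrices $A=x^3+y+y^2$ and $B=y^3+x+x^2$ on the $12\times6$ torus, the check matrices $H_X=[A\,|\,B]$ and $H_Z=[B^T\,|\,A^T]$, and the cut by columns $0\le i<6$. Since the code is CSS, $\cS$ splits into $X$ and $Z$ parts, and every rank below splits into an $X$ rank plus a $Z$ rank. Gaussian elimination over $\mathbb F_2$ then gives $\rank H=132$ and $\rank H_U=\rank H_{U^c}=96$. Lemma~\ref{lem:cross-cut-rank-formula} yields $\dim\cS_U=\dim\cS_{U^c}=36$ and $\chi_U=60$, confirming Eq.~\eqref{eq:gross-half-torus-cut}.

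For the first claim I list the original checks supported entirely in $U$. A weight-six check's support spans a bounded window of columns, so this list is an explicit set of checks whose column window lies inside $0\le i<6$. Computing its rank, I expect to find 36 independent checks, which is then a basis of $\cS_U$ because $\dim\cS_U=36$. The symmetric computation, or the translation symmetry $x\mapsto x^6$ exchanging the halves, handles $U^c$. Next I extend these 72 local checks greedily by original crossing checks. The full original list spans $\cS$, so greedy extension by original rows reaches $\dim\cS=132$, and it must add exactly $132-72=60$ checks, all crossing. Weight six holds for every original check, and incidence at most six follows because in the full presentation each qubit lies in three $X$ and three $Z$ checks, and a subset can only lower this. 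I will record the chosen rows as a certificate in Appendix~\ref{subsec:finite-code-calculations}.

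For the protocol I apply Theorem~\ref{thm:single-cut-resource}(\ref{it:single-cut-bell}) with this specific basis. Each side measures its 36 local checks nondestructively with local ancillas. Each of the 60 crossing checks $P=P_A\otimes P_B$ consumes one Bell pair through the remote parity gadget of Eq.~\eqref{eq:main-remote-parity-kraus}. The commuting gadgets compose to Kraus operators proportional to $\Pi_s$. The remaining $144-132=12$ original outcome bits are fixed linear combinations of measured basis outcomes, with signs fixed by the group, so they are reconstructed classically. The data never leave their registers.

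Optimality follows from the lower bound of Theorem~\ref{thm:single-cut-resource}(\ref{it:single-cut-bell}), namely that with $m$ pairs and only local operations the reference state has Schmidt rank at most $2^m$, while Lemma~\ref{lem:projector-choi-schmidt} requires $2^{60}$. The only real obstacle is the finite rank verification, above all confirming that the locally supported original checks already reach dimension 36, rather than needing products of crossing checks whose outside factors cancel. If that verification fails, the first claim is false as stated, but the 60-pair conclusion would still hold with a nonsparse basis.
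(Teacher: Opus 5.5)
Your proposal follows the paper's proof essentially step for step: ranks $66$ and $48$ per Pauli sector give $\chi_U=60$, the original local checks span $\cS_U$ and $\cS_{U^c}$, and sixty original crossing checks complete a basis. One remote-parity gadget per crossing check then attains the cost, and the Schmidt-rank bound of Theorem~\ref{thm:single-cut-resource} gives optimality.

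\textbf{The step you flag as the obstacle.} You leave the independence and spanning of the local checks as an expected computational outcome, but the paper proves it directly. The local checks in $U$ are exactly the $X$ rows with $i\in\{0,1,2\}$ and the $Z$ rows with $i\in\{3,4,5\}$. Restrict the $X$ rows to the first species at $i=3,4,5$, and the $Z$ rows to the second species at $i=0,1,2$. These restrictions retain only distinct $x^{3}$ (respectively $x^{-3}$) terms, so the $18+18$ rows are independent. By rank--nullity, $66-48=18$ per sector, so these rows span $\cS_U$, and the same argument applies to $U^c$.

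\textbf{Minor differences.} The paper fixes the sixty crossing checks by an explicit list of twelve deleted rows rather than by greedy extension. Your half-exchanging symmetry should be translation by $x^6$, not the substitution $x\mapsto x^6$.
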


\begin{proof}
In each half, 18 original checks of each Pauli type are independent and exhaust the supported space.
Retaining these 72 local checks, delete six crossing $X$ checks and six crossing $Z$ checks as specified in Appendix~\ref{subsec:finite-code-calculations}.
After this deletion, the remaining 132 checks are independent.
To attain the resource bound, measure each local check locally and each crossing check with the one-pair remote-parity gadget.
The omitted bits are parities of retained bits of the same Pauli type, with no sign offsets, so the circuit recovers the full original syndrome.
Commutativity of the checks ensures that their joint measurement preserves every state within each syndrome sector, including its correlations with a reference.
Optimality follows from Equation~\eqref{eq:gross-half-torus-cut} and Theorem~\ref{thm:single-cut-resource}.
\end{proof}

\begin{table}[t]
\centering
\caption{Bell pairs used by three circuits for the same ideal gross-code syndrome instrument at the named half-torus cut.
The first circuit assigns each original check ancilla to the side of its row coordinate $i$ and replaces every crossing check--data CNOT by a remote CNOT.
The second measures the original crossing checks as whole parities.
The third measures an independent subset and reconstructs the omitted bits.
All three allow arbitrary local processing and sufficient local storage, and the counts exclude local depth and do not compare noisy performance.}
\label{tab:gross-resource-comparison}
\begin{tabular}{@{}p{0.76\textwidth}r@{}}
\toprule
Circuit & Bell pairs \\
\midrule
Remote CNOT for each crossing check--data incidence & 144 \\
One remote parity gadget for each original crossing check & 72 \\
Sparse independent basis of Proposition~\ref{prop:gross-sparse-single-cut-optimum} & 60 \\
\bottomrule
\end{tabular}
\end{table}

Table~\ref{tab:gross-resource-comparison} shows two ways to remove avoidable communication.
The first reduction measures a crossing check as one parity, avoiding transmission of its separate check--data interactions.
The second removes dependent crossing checks and thereby reaches the intrinsic lower bound while retaining the original sparse checks.
Both changes produce different circuits with the same ideal measurement instrument.
Under noise, however, directly measuring redundant checks can provide additional consistency information, whose usefulness depends on the noise model and decoder.
Elapsed depth also depends on the time required for local gates, pair preparation, and storage, so the optimal crossing-resource count does not by itself establish optimal elapsed depth.

\subsection{Attaining all edge costs on a tree}

Suppose the data are distributed among several processors, and Bell pairs can be supplied only along the edges of a tree.
Deleting an edge $e$ divides the processors into two groups, and we let $U_e$ be the data held by either group.
The single-cut theorem requires at least $\chi_{U_e}$ pairs on that edge.
For a general network, these separate cut bounds need not describe an attainable resource allocation.
For a tree, however, each edge separates one subtree from the rest, and this structure permits an exact answer.

\begin{proposition}[Exact tree resources]
\label{prop:network-tree-resource}
Partition the data of an $[[n,k]]$ stabilizer code among the vertices of a tree.
Allow arbitrary quantum processing at each vertex, sufficient local memory, and free classical communication.
The initial nonlocal resources consist only of $b_e$ Bell pairs on each tree edge $e$, and all other work registers start in product states and independently of the input.
Exact deterministic syndrome extraction, with the quantum output returned to the original data sites, is possible if and only if
\begin{equation}
b_e\ge\chi_{U_e}(\cS)\qquad\text{for every edge }e.
\label{eq:tree-edge-optimum}
\end{equation}
In particular, all edge minima are simultaneously attained, and the total Bell-pair optimum is $\sum_e\chi_{U_e}(\cS)$.
\end{proposition}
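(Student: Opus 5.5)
Necessity follows edge by edge from Theorem~\ref{thm:single-cut-resource}\ref{it:single-cut-bell}. Fix a tree edge $e$ and merge all processors on each side of $e$ into one party. Any Bell pairs on the other edges then become local resources of one merged party. Those pairs are part of that party's local work registers, and by Definition~\ref{def:bipartite-extraction} local entanglement is free. A tree protocol therefore becomes a bipartite protocol across $U_e\mid U_e^c$ that uses only the $b_e$ pairs on $e$ as crossing resource. Applying the reference-input Schmidt-rank argument of Lemma~\ref{lem:projector-choi-schmidt} to this bipartite protocol gives $b_e\ge\chi_{U_e}(\cS)$ on every nonzero branch.

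For sufficiency, a single cut-adapted basis cannot serve all edges at once, so I would follow the route announced in the discussion: prepare the vectorized projector state directly. Concretely, the plan is to create, for every syndrome $s$, a resource equivalent to the normalized Choi state of the instrument. I would do this by teleporting through a state prepared on the tree. First, use the pure-state tree preparation theorem of Yamasaki, Soeda, and Murao \cite[Theorem~5]{yamasakiGraphAssociatedEntanglement2017}. It states that a pure state $|\psi\rangle$ distributed over a tree can be prepared by LOCC from $b_e$ Bell pairs per edge whenever the Schmidt rank across each edge is at most $2^{b_e}$ (for states with flat spectra, at least).

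The state I would prepare is $|\widehat\Pi_0\rangle$ on fresh registers $\tilde Q\tilde Q'$. Both copies of each data qubit's registers sit at that qubit's processor. By Lemma~\ref{lem:projector-choi-schmidt}, applied at each edge cut $U_e$, its Schmidt spectrum across $e$ is flat with rank $2^{\chi_{U_e}}$. So the preparation hypothesis holds with $b_e\ge\chi_{U_e}$. I expect verifying the exact hypotheses of their Theorem~5 to be the main obstacle: exactness rather than approximation, and deterministic LOCC. If that theorem requires more, I would prove the tree preparation by induction on leaves. At a leaf edge, the leaf prepares the state locally, using its Schmidt decomposition with flat spectrum. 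It then teleports the $2^{\chi_{U_e}}$-dimensional Schmidt support to its neighbour through $\chi_{U_e}$ Bell pairs, and the neighbour applies a local isometry. The flat spectrum makes the teleportation deterministic and exact after Pauli corrections.

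Given $|\widehat\Pi_0\rangle$, I would implement the instrument by gate teleportation. Each processor performs local Bell measurements between its data qubits and the corresponding $\tilde Q'$ registers and outputs on $\tilde Q$. Raw outcome $x$ applies $\Pi_0 P_x^{\mathsf T}$-type operators. These are Paulis $P_x$, and $\Pi_0P_x=P_x\Pi_{s(x)}$, so each branch is proportional to $P_x\Pi_{s(x)}$ with syndrome $s(x)$ computable classically from $x$ and the stabilizer commutation data. The Pauli frame $P_x$ factors into single-qubit operators, so each vertex undoes it locally, as in Remark~\ref{rem:sectorwise-exactness}. The local correction returns the output to the original sites, possibly with a local swap from $\tilde Q$. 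The result is the exact deterministic Lüders instrument. Summing $b_e=\chi_{U_e}$ over the edges gives the stated total optimum.
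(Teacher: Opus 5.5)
Your proposal is correct and follows essentially the paper's proof. Necessity comes from merging each side of an edge into one party and invoking the single-cut Bell-pair bound. Sufficiency comes from preparing the normalized vectorized zero-syndrome projector with $\chi_{U_e}$ pairs per edge, via Yamasaki--Soeda--Murao or recursive teleportation of compressed Schmidt-support registers, and then injecting the input by local Bell measurements with local Pauli corrections. One small correction: exact deterministic teleportation of the compressed register needs only Schmidt rank $2^{\chi_{U_e}}$ across each edge, not a flat spectrum.
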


The attaining resource is the normalized vectorized zero-syndrome projector,
\begin{equation}
|C\rangle=\frac{|\Pi_0\rangle\!\rangle}{\sqrt{2^k}},
\label{eq:tree-projector-resource}
\end{equation}
with both copies of a data qubit held by its processor.
Once this resource is available, local Bell measurements inject the actual input, and the measurement record specifies local Pauli corrections and the syndrome.
Conversely, running exact extraction on local data--reference Bell pairs prepares a vectorized syndrome projector, which becomes $|C\rangle$ after a known Pauli correction on both copies.
Appendix~\ref{app:network-tree-resource} verifies the normalization and every outcome branch.

Across edge $e$, Lemma~\ref{lem:projector-choi-schmidt} gives $|C\rangle$ Schmidt rank $2^{\chi_{U_e}}$.
The pure-state tree preparation theorem of Yamasaki, Soeda, and Murao then prepares this state with exactly $\chi_{U_e}$ Bell pairs on every edge simultaneously \cite[Theorem~5, arXiv:1705.00006v2]{yamasakiGraphAssociatedEntanglement2017}.
Proposition~\ref{prop:network-tree-resource} applies that established preparation theorem to syndrome extraction.
It permits data-free helper vertices with unrestricted local quantum processing, and it counts pairs supplied on the actual tree edges.
It makes no optimality claim, however, for networks with cycles or for elapsed circuit depth within the processors.

\section{Why presentation expansion is insufficient}
\label{sec:presentation-expansion}

The intrinsic cut cost is unchanged when we replace generators by invertible products of generators, whereas the graph of a chosen check list can change substantially under the same operation.
The relevant graph here is the \emph{contracted Tanner graph}, whose vertices are data qubits and in which two vertices are joined whenever a listed check acts on both.
Expansion of this graph means that every sufficiently small set of qubits has many graph edges leaving it.
Such edges suggest communication only if they represent constraints that cannot be eliminated by changing generators.

To exhibit this distinction, we construct a counterexample from an annular surface code.
Its usual checks admit a constant-depth nearest-neighbor measurement circuit.
We recombine these checks into a bounded-weight, bounded-incidence list whose contracted Tanner graph expands, while the same circuit still measures the new list after classical relabeling.
The construction explains why presentation expansion cannot replace an intrinsic lower bound on $\chi_U$, and it contradicts the local-expander conclusion in the precise version of DBT identified below.

\subsection{An expanding annular presentation}

The construction starts from the surface code on a square annulus.
Take the unit faces of the $3L\times3L$ square grid, remove the central $L\times L$ block of faces, and place data qubits on the edges of the remaining faces.
Each vertex check applies $X$ to its incident data qubits, while each face check applies $Z$ around its boundary.
Color the vertices like a checkerboard and remove one white vertex check.
The remaining checks form an independent generating list of a code with parameters $[[16L^2+8L,1,L+1]]$.
In this coloring every edge has exactly one black endpoint, so the black vertex checks have pairwise disjoint supports that together cover every data qubit.
Order the black vertices, choose a bounded-degree expander on them (Lemma~\ref{lem:expander-supply}), and replace each black vertex check by its product with the black checks that precede it and are adjacent to it in the expander.
Because the black checks have disjoint supports, no Pauli factor cancels in these products, and the change of basis is unit triangular and hence invertible.

\begin{theorem}[Expanding presentations with constant-depth extraction]
\label{thm:annular-counterexample}
For every integer $L\ge1$, the construction above gives a CSS stabilizer code with parameters $[[n_L,1,L+1]]$, where $n_L=16L^2+8L$, together with an independent generating list $\mathcal G_L$ of its stabilizer group, with the following properties.
\begin{enumerate}
\item\label{it:annulus-ldpc}
Every generator in $\mathcal G_L$ has weight at most $132$, and every qubit lies in the support of at most $36$ generators.
\item\label{it:annulus-expansion}
The contracted Tanner graph of $\mathcal G_L$, which joins two qubits whenever some generator in $\mathcal G_L$ acts on both, has at least $|A|/18$ edges leaving every nonempty set $A$ of at most $n_L/2$ qubits.
\item\label{it:annulus-circuit}
On the $(6L+1)\times(6L+1)$ square grid, with each data qubit at a fixed site and two-qubit gates only between nearest neighbors, a Clifford circuit of quantum depth at most $13$ implements the exact Lüders instrument of $\mathcal G_L$.
The circuit prepares ancillas in $|0\rangle$ or $|+\rangle$ and returns them to these states, never measures a data qubit individually, and reports each syndrome bit of $\mathcal G_L$ as a parity of measurement outcomes, as in the model of Sections~2.1--2.2 of Ref.~\cite{delfosseBoundsStabilizerMeasurement2021}.
\end{enumerate}
\end{theorem}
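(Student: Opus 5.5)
The proof will handle the three items separately, after a short verification of the code parameters. For the code itself, the annulus is the $3L\times3L$ face grid with the central $L\times L$ block removed, which leaves $8L^2$ faces. Counting edges of the remaining faces gives $n_L=16L^2+8L$: the full grid has $2\cdot3L(3L+1)$ edges, and we subtract the interior edges of the hole, $2L(L-1)$, so $18L^2+6L-2L^2+2L=16L^2+8L$. Vertex checks are $X$ stars and face checks are $Z$ plaquettes. There is a single global relation among the vertex checks, namely their product is the identity, since every edge has two endpoints. Removing one white vertex check therefore leaves an independent list. The face checks are independent because the hole breaks the plaquette relation. The standard annulus homology count then gives $k=1$, and the distance $L+1$ is the minimum of the shortest noncontractible cycle around the hole (length $\ge 4L$) and the shortest dual path from inner to outer boundary. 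The latter depends on which boundaries are rough; I would fix the boundary convention so that the relevant path has length $L+1$, in the sense of a dual chain crossing the width-$L$ ring of faces. None of this uses the recombined list, since the stabilizer group is unchanged. The recombination $B_v\mapsto B_v\prod_{u<v,\,u\sim_{\mathcal E} v}B_u$ is unit triangular, so $\mathcal G_L$ is independent and generates the same group.

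For item~\ref{it:annulus-ldpc}, I will invoke the expander supply of Lemma~\ref{lem:expander-supply} with a degree bound $\Delta_{\mathcal E}$; my guess is $\Delta_{\mathcal E}\le 32$, chosen so that the numbers work. A product of at most $1+\Delta_{\mathcal E}$ disjoint black stars, each of weight at most $4$, has weight at most $4(1+\Delta_{\mathcal E})$, which gives $132$ when $\Delta_{\mathcal E}=32$. The other generators are unchanged and have weight at most $4$. For incidence, each qubit lies in exactly one black star $B_v$, so it is covered by the modified generator of $v$ and by the modified generators of the at most $\Delta_{\mathcal E}$ expander neighbours of $v$. It also lies in at most $2$ white stars and at most $2$ plaquettes. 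The total is $1+\Delta_{\mathcal E}+4$, and I will tune the expander degree and the constants of Lemma~\ref{lem:expander-supply} so that this totals $36$ and $132$ consistently.

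For item~\ref{it:annulus-expansion}, I will compare the contracted Tanner graph with the expander. If $u\sim v$ in $\mathcal E$ with $u<v$, then the modified generator of $v$ contains both stars $B_u$ and $B_v$, so every qubit of $B_u$ is adjacent to every qubit of $B_v$. Each star is also internally a clique. Given $A$ with $|A|\le n_L/2$, I will classify black vertices by the fraction of their star lying in $A$. A star that is split between $A$ and $A^c$ contributes a leaving edge directly from its internal clique. If instead most of the mass of $A$ lies in stars entirely inside $A$, let $W$ be the set of black vertices whose stars are fully inside $A$. Then $|W|\lesssim |V_B|/2$ up to boundary effects, and edge expansion of $\mathcal E$ gives $\Omega(|W|)$ expander edges from $W$ to stars not fully in $A$. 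Each such expander edge yields at least one contracted edge leaving $A$. Charging each star at most $4$ qubits and accounting for the degree $\Delta_{\mathcal E}$ multiplicities yields the constant $1/18$. The main obstacle is exactly this step: the cap $|A|\le n_L/2$ must translate into a cap on $|W|$ that is small enough for the expander, even though stars have unequal weights (boundary stars have weight $2$ or $3$), and the constants must be tracked to reach $1/18$. I would first try the simple dichotomy that either at least $|A|/2$ qubits of $A$ lie in split stars, or else $W$ carries at least $|A|/2$ qubits, and then apply the expander's vertex-set expansion to $W$.

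For item~\ref{it:annulus-circuit}, I will embed the annulus in the $(6L+1)\times(6L+1)$ grid at doubled resolution. Lattice vertices go to even-even sites, data go to edge midpoints, and plaquette ancillas go to face centres. The standard surface-code schedule then measures every vertex and face check with nearest-neighbour CNOTs, using $|0\rangle$ or $|+\rangle$ preparation, four CNOT layers, basis changes and measurement, and a reset. I expect the count to stay within $13$ layers. X and Z rounds are sequenced or interleaved in the usual order that avoids hook conflicts, and ancillas are returned by reset. No data qubit is measured. Each syndrome bit of $\mathcal G_L$ is then the XOR of the outcomes of its constituent black stars, which is exactly the classical parity relabeling allowed in Sections~2.1--2.2 of DBT. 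Exactness follows from Remark~\ref{rem:sectorwise-exactness}, because the circuit implements the Lüders instrument of the original independent list and the new syndromes are an invertible relabeling of the old ones.
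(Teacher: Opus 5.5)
Your route is the paper's. Weight comes from at most $33$ disjoint stars. Expansion comes from comparing the contracted Tanner graph with a clique blow-up of $E_m$. The circuit measures the original checks on the doubled grid and then relabels $s=Tt$, which is exact because $\Pi^{\rm new}_{Tt}=\Pi^{\rm old}_t$ when all checks carry positive signs. Running the $X$ and $Z$ rounds sequentially gives exactly $1+8+2+2=13$ layers.

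Two steps do not go through as written. The first is the incidence count in part~\ref{it:annulus-ldpc}. A qubit does not lie in up to two white stars: every edge has exactly one white endpoint, for the same reason it has exactly one black endpoint. The correct count is $(1+32)+1+2=36$. Your count gives $37$, and ``tuning'' the degree is not available, because Lemma~\ref{lem:expander-supply} supplies degree $32$, the same value your weight bound $132$ already uses. A smaller point concerns the distance paragraph: the boundary convention is fixed by the construction, not chosen. Every edge of a retained face is a qubit and every vertex carries an $X$ check, so both boundaries are smooth, which your single vertex relation already presupposes. What is still missing there is a proof that every nontrivial $X$-type logical crosses at least $L+1$ edges; the paper gets this from the ring index of faces.

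The genuine gap is in part~\ref{it:annulus-expansion}. You stop at what you call the main obstacle, and your dichotomy does not reach $1/18$. The obstacle disappears once you use $\partial_{E_m}W=\partial_{E_m}(V\setminus W)$. Then the lemma gives $|\partial_{E_m}W|\ge\frac14\min\{|W|,m-|W|\}$ for every $W$. The term $m-|W|$ is bounded below because $|A^c|\ge n_L/2\ge|A|$ and every star has at most four qubits. So no cap $|W|\le m/2$ is needed. No degree multiplicity enters either, since distinct expander edges join different pairs of stars and so give distinct contracted edges.

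Let $\beta$ be the number of contracted edges leaving $A$, $t$ the number of split stars, and $B_W$ the union of the stars in $W$. In your second case you only get $|W|\ge|A|/8$, which yields $\beta\ge|A|/32$. Even the optimal trade-off, with $W$ the fully contained stars and one leaving edge per split star, yields only $\beta\ge|A|/19$.

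The paper's Lemma~\ref{lem:clique-blowup} instead takes $W$ to be the stars with a majority in $A$. Then $|A\triangle B_W|\le2t$, and combining $\beta\ge t$ with $\beta\ge\frac14\cdot\frac{|A|-2t}{4}$ gives $\beta\ge|A|/18$.

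Alternatively, you can keep your $W$ and count all clique edges. A split star $B_v$ with $a$ qubits in $A$ contributes $a(|B_v|-a)\ge a$ leaving edges, and these are disjoint from the inter-star edges. This gives $\beta\ge x+(|A|-x)/16\ge|A|/16$, where $x$ is the number of qubits of $A$ in split stars, which is stronger than the stated constant.
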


\begin{proof}[Proof sketch]
Appendix~\ref{app:annular-construction} verifies each property directly.
Each listed black generator is a product of at most $33$ disjoint vertex checks of weight at most four, which gives weight at most $132$, and a data qubit lies in at most $33$ such products, one white vertex check, and two face checks.
Every product contains the full support of each of its factors, so the contracted Tanner graph contains a clique on the support of every black vertex check and all edges between the supports of two black vertex checks that are adjacent in the expander.
Lemma~\ref{lem:clique-blowup} transfers the expansion of the expander to this graph for every set of data qubits, including sets that are not unions of supports.
The circuit measures the original vertex and face checks with one ancilla each, using four nearest-neighbor CNOT layers for each check type, and it reports the syndrome of $\mathcal G_L$ through the invertible parity map inherited from the change of basis.
The joint projectors of the two lists coincide under this relabeling, so the circuit implements the exact instrument of $\mathcal G_L$ on every input, including inputs entangled with references.
\end{proof}

The same construction extends to several disjoint annuli by applying one expander recombination to the black vertex checks of all components.
This gives positive-rate examples at any fixed annulus size and growing-distance examples when the annulus size increases, with the distinction between these regimes explicit.

\begin{corollary}[Several annuli]
\label{prop:many-annuli}
For every pair of integers $m,L\ge1$, there is a CSS code with parameters
\begin{equation}
[[n,k,d]]=[[m(16L^2+8L),m,L+1]]
\label{eq:many-annuli-parameters}
\end{equation}
and an independent generating list with weights at most $132$ and qubit incidences at most $36$ whose contracted Tanner graph has at least $|A|/18$ edges leaving every set $A$ with $0<|A|\le n/2$.
Writing $q=\lceil\sqrt m\rceil$, its exact L\"uders instrument has a nearest-neighbor Clifford implementation on a square patch of
\begin{equation}
N=q^2(6L+1)^2<\frac92n
\label{eq:many-annuli-hardware}
\end{equation}
qubits with quantum depth at most $13$, under the ancilla and parity-output conventions of Theorem~\ref{thm:annular-counterexample}.
The parameters satisfy
\begin{equation}
\frac{k}{n}=\frac1{16L^2+8L},
\qquad
\frac1{16}<\frac{kd^2}{n}=\frac{(L+1)^2}{16L^2+8L}\le\frac16.
\label{eq:many-annuli-rate-distance}
\end{equation}
\end{corollary}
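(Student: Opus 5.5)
The plan is to take $m$ disjoint copies of the annular surface code of Theorem~\ref{thm:annular-counterexample} and perform a single expander recombination across all of them, not $m$ separate ones. The code is the direct sum of $m$ copies of the $[[16L^2+8L,1,L+1]]$ annular code. For a direct sum of stabilizer codes, $n$ and $k$ add, and the distance is the minimum over the copies, because a nontrivial logical operator must act nontrivially on some copy and its restriction to that copy is already a nontrivial logical there. This gives the parameters in Eq.~\eqref{eq:many-annuli-parameters}. The ratios in Eq.~\eqref{eq:many-annuli-rate-distance} then follow by algebra: $(L+1)^2/(16L^2+8L)$ equals $4/24=1/6$ at $L=1$ and decreases toward $1/16$ from above.

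For the presentation, I take all black vertices of all $m$ annuli as one set. Within each annulus the black vertex checks have disjoint supports covering that annulus, so the union of all black checks has disjoint supports covering every data qubit. On this union I place one bounded-degree expander supplied by Lemma~\ref{lem:expander-supply}, and replace each black check by its product with its preceding expander neighbours, exactly as in Theorem~\ref{thm:annular-counterexample}. The white checks, minus one removed white check per annulus, and the face checks are kept unchanged. The change of basis is still unit triangular, so the list remains an independent generating set.

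The weight and incidence bounds are proved word for word as in part~\ref{it:annulus-ldpc} of that theorem. For expansion, I apply Lemma~\ref{lem:clique-blowup} to the global clique blow-up of the global expander. That lemma only uses the disjointness of the supports, the bounded support sizes, and the expander itself, not the geometry of a single annulus. It therefore yields $|A|/18$ for every $A$ with $|A|\le n/2$.

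This step is the main obstacle. I have to check that the constants in Lemma~\ref{lem:expander-supply} and Lemma~\ref{lem:clique-blowup} depend only on the support sizes and the degree, not on the number of black vertices or on connectivity. Connectivity is needed in particular: separate recombinations per annulus would fail, because a single annulus would then have zero boundary. The recombination also creates long-range generators between annuli, so I must confirm that the classical relabeling argument does not require the recombined products to be local.

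For the circuit, I arrange the $m$ patches of size $(6L+1)\times(6L+1)$ in a $q\times q$ block grid, leaving unused blocks empty. On each patch I run the depth-$13$ circuit of part~\ref{it:annulus-circuit} simultaneously. These circuits measure the original local checks, and the syndrome of the recombined list is an invertible parity map of the original outcomes, so nonlocal products cost nothing quantum. The joint projectors coincide, so the implemented instrument is exact under the same DBT conventions.

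Finally, the hardware count uses $q<\sqrt m+1\le2\sqrt m$, so $q^2<4m$ once $m\ge1$. More precisely, $q^2\le(\sqrt m+1)^2$, and I only need $N/n<9/2$. Since $(6L+1)^2/(16L^2+8L)\le 49/24$ with equality at $L=1$, the estimate $q^2<2.2\,m$ suffices whenever $m$ is not tiny. The small values of $m$ can be checked directly; for example $m=2$ gives $q^2=4$, and $4\cdot49/48<9/2$. I expect this elementary bound to need some care but no new idea.
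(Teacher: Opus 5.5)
Everything up to the hardware count follows the paper's own route: a single global expander on all $4mL(L+1)$ black vertices, the clique blow-up lemma, parallel tiles, and one global invertible parity map. The ``main obstacle'' you flag is not one, because Lemmas~\ref{lem:expander-supply} and~\ref{lem:clique-blowup} are stated for any number of vertices and any partition into blocks of size at most four.

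The gap is in the final estimate $N<\frac92 n$. Your claim that $(6L+1)^2/(16L^2+8L)\le 49/24$, with equality at $L=1$, is false. One has
\[
\frac94-\frac{(6L+1)^2}{16L^2+8L}=\frac{6L-1}{16L^2+8L},
\]
and the right side decreases in $L$. So the ratio increases from $49/24$ at $L=1$ toward $9/4$; already at $L=2$ it is $169/80>49/24$.

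With the correct supremum $9/4$, your criterion $q^2<2.2\,m$ only gives $N<2.2\,m\cdot\frac94(16L^2+8L)=\frac{99}{20}n$, which does not prove the statement. Your check $m=2$ was also done only at $L=1$. In fact there is no slack at all, since $q^2/m=2$ at $m=2$ while the ratio tends to $9/4$.

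What you need are exactly the two sharp bounds the paper uses, valid for all $m,L\ge1$:
\begin{itemize}
\item $(6L+1)^2<\frac94(16L^2+8L)$, which is just $12L+1<18L$.
\item $q^2\le 2m$. From $q-1<\sqrt m$ you get $(q-1)^2\le m-1$, and $2(q-1)\le (q-1)^2+1\le m$, so $q^2=(q-1)^2+2(q-1)+1\le 2m$.
\end{itemize}
Multiplying them gives $N=q^2(6L+1)^2<2m\cdot\frac94(16L^2+8L)=\frac92 n$.
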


The proof packs the annuli into disjoint square tiles, runs their local circuits in parallel, and applies one global invertible parity map to the outcomes.
The construction is given in Appendix~\ref{subsec:many-annuli-proof}.

The intrinsic cut cost explains why the depth remains constant.
The stabilizer space of $\mathcal G_L$ is that of the surface code, whose original checks are local, so $\chi_U$ never exceeds the number of original checks that cross a cut.
For a vertical line through the grid, only checks centered within distance one of the line cross it, so $\chi_U=O(L)$ while $\Theta(L)$ hardware edges cross the line.
The depth lower bound from this cut therefore remains bounded as $L$ grows.
The number of listed generators that cross a vertical line splitting the data in half, in contrast, grows linearly with $n_L$, because the contracted Tanner graph of $\mathcal G_L$ expands.

\subsection{Consequences for local-expander bounds}

In the notation of DBT, a family of local-expander codes is a family of quantum LDPC codes, each with a fixed generating list, whose contracted Tanner graphs have local Cheeger constant $h_\varepsilon\ge\alpha$ for fixed $\alpha,\varepsilon>0$, where $h_\varepsilon$ minimizes $|\partial A|/|A|$ over nonempty sets of at most $\varepsilon n/2$ qubits \cite{delfosseBoundsStabilizerMeasurement2021}.
Part~\ref{it:annulus-expansion} of Theorem~\ref{thm:annular-counterexample} gives this property with $\alpha=1/18$ and $\varepsilon=1$.

\begin{corollary}[Counterexample to the local-expander corollary of DBT]
\label{cor:dbt-local-expander}
The codes and circuits of Theorem~\ref{thm:annular-counterexample} form a family of two-dimensional $1$-local Clifford syndrome-extraction circuits for a family of local-expander quantum LDPC codes, with $N_L=(6L+1)^2$ qubits in a square patch, as in Corollary~1 of DBT in arXiv:2109.14599v1 \cite{delfosseBoundsStabilizerMeasurement2021}.
Their depth is at most $13$, whereas
\begin{equation}
\frac{n_L}{\sqrt{N_L}}=\frac{16L^2+8L}{6L+1}\longrightarrow\infty.
\label{eq:annulus-ratio}
\end{equation}
The conclusion $\depth(C_L)\ge\Omega(n_L/\sqrt{N_L})$ of that corollary therefore does not hold for this family.
\end{corollary}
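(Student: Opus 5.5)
The corollary is a matching exercise: each hypothesis of Corollary~1 of DBT v1 is checked against a property already established in Theorem~\ref{thm:annular-counterexample}, and then the limit \eqref{eq:annulus-ratio} is computed. I will go through the hypotheses in the order DBT lists them.

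\emph{Code hypotheses.} Corollary~1 asks for a family of quantum LDPC codes. Each code must come with a fixed generating list of bounded weight and bounded qubit incidence, and its contracted Tanner graph must have local Cheeger constant $h_\varepsilon\ge\alpha$ for fixed $\alpha,\varepsilon>0$.
\begin{itemize}
\item Part~\ref{it:annulus-ldpc} gives weight at most $132$ and incidence at most $36$, uniformly in $L$.
\item Part~\ref{it:annulus-expansion} says that every nonempty $A$ with $|A|\le n_L/2$ has $|\partial A|\ge|A|/18$. Hence $h_1\ge1/18$, so the hypothesis holds with $\varepsilon=1$ and $\alpha=1/18$.
\item The list $\mathcal G_L$ is independent and generates the stabilizer group of an $[[n_L,1,L+1]]$ code. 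It therefore qualifies as the fixed presentation that DBT attach to each member.
\end{itemize}
Since the cut inequality only gets easier as the set of admissible $A$ shrinks, any smaller $\varepsilon$ that DBT's statement might require is also satisfied.

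\emph{Circuit hypotheses.} These come from part~\ref{it:annulus-circuit}.
\begin{itemize}
\item The hardware is the square patch of $N_L=(6L+1)^2$ sites, so the circuits are two-dimensional.
\item Two-qubit gates act only between nearest neighbors, which is DBT's $1$-locality.
\item The circuits are Clifford and prepare ancillas in $|0\rangle$ or $|+\rangle$, returning them to those states.
\item No data qubit is measured individually.
\item Each syndrome bit of $\mathcal G_L$ is reported as a parity of outcomes, which is the output convention of DBT Sections~2.1--2.2. This is the content of Definition~\ref{def:local-clifford-stabilizer-measurement}.
\item The circuit implements the exact L\"uders instrument of $\mathcal G_L$, so it is a syndrome-extraction circuit for that list in DBT's sense.
\end{itemize}

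\emph{The contradiction.} Depth is at most $13$ for every $L$. On the other hand,
\[
\frac{n_L}{\sqrt{N_L}}=\frac{16L^2+8L}{6L+1}\ge\frac{16L^2}{7L}=\frac{16L}{7}\to\infty.
\]
Suppose $\depth(C_L)\ge c\,n_L/\sqrt{N_L}$ held for some constant $c>0$ and all large $L$. Then $13\ge 16cL/7$ for all large $L$, which is false.

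The one point needing care is fidelity to DBT v1's exact wording. Two things must be checked.
\begin{itemize}
\item Their model must allow the parity relabeling used in the circuit. This is stated in Definition~\ref{def:local-clifford-stabilizer-measurement}.
\item Corollary~1 must quantify over every circuit for the listed code, not over a restricted class of circuits.
\end{itemize}
I would quote those clauses explicitly. Everything else is direct citation of Theorem~\ref{thm:annular-counterexample}.
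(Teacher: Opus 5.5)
Your proposal is correct and is essentially the paper's own argument: it checks DBT's LDPC and local-expander hypotheses via parts~1 and~2 of Theorem~\ref{thm:annular-counterexample} with $\alpha=1/18$ and $\varepsilon=1$, checks the $1$-local Clifford Pauli-measurement-circuit hypotheses via part~3, and compares the depth bound of $13$ with the linearly growing ratio $n_L/\sqrt{N_L}$. The paper also remarks that counting only the $32L^2+16L-1$ qubits the circuit actually acts on still gives a constant fraction of the patch, so the conclusion also holds under that reading of DBT's $N$.
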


If only the qubits that the circuit acts on are counted, then $N_L=32L^2+16L-1$, but these qubits still fill a constant fraction of the patch and $n_L/\sqrt{N_L}$ still diverges.

Theorem~1 of the same version counts independent listed generators crossing a hardware cut.
For the annular list, expansion makes that count proportional to $n_L$ at a near-balanced vertical cut, although the physical boundary has only $O(L)$ edges and the extraction depth stays bounded.
Appendix~\ref{subsec:basis-change-counterexample} gives simpler examples and identifies the corresponding failures in the proof.
These failures concern the stated lower bounds, whereas the extraction constructions of DBT are unaffected.

Corollary~\ref{prop:many-annuli} extends the counterexample to a two-parameter family satisfying $kd^2=\Theta(n)$, including positive-rate subfamilies of bounded distance.
Indeed, keeping $L$ fixed and taking $m\to\infty$ gives rate $1/(16L^2+8L)>0$ and distance $L+1$, whereas taking $L\to\infty$ gives growing distance and vanishing rate, even if $m$ also grows.
For every sequence with $n\to\infty$, the bound $N<9n/2$ makes $n/\sqrt N$ diverge while extraction depth stays bounded, so the same contradiction to Corollary~1 of DBT v1 persists \cite{delfosseBoundsStabilizerMeasurement2021}.
The family is consistent with the tradeoff of Bravyi, Poulin, and Terhal and does not address the simultaneous positive-rate and growing-distance regime, because any uniform bound $kd^2\le Cn$ together with $k/n\ge r>0$ implies $d^2\le C/r$ \cite{bravyiTradeoffsReliableQuantum2010}.
For a union of whole annuli the intrinsic cut cost is zero, even though the chosen presentation has crossing generators, which makes the distinction between stabilizer-space and presentation expansion explicit.
The expander supply is specified by a terminating finite search rather than an efficient construction.
The uniform statements follow from the proof in Appendix~\ref{app:annular-construction}.

The distinction between stabilizer-space and presentation expansion leads us to impose a bound on the stabilizer space itself, namely $\chi_U\ge\gamma|U|$ throughout a stated range of data regions.
Section~\ref{sec:separator-barriers} proves such a condition for one fixed quantum Tanner family, and Section~\ref{sec:static-2d-barrier} derives its geometric consequences.

\section{Uniform costs for small regions}
\label{sec:separator-barriers}

The cut cost can be evaluated once a hardware region and its data qubits have been chosen.
A guarantee for every placement, however, needs a code whose cut cost is large on every data subset that a hardware bottleneck might select.
We now prove this property for one fixed quantum Tanner family, whose shared constraints grow with the number of qubits in every sufficiently small region rather than with the boundary of a particular drawing of the checks.

\subsection{Why rate and distance are not enough}

A good quantum LDPC code has both a constant rate and a distance proportional to its block length.
These global parameters, however, do not require every small region to share many constraints with its complement.
To see this, tensor a good $[[m,k,d]]$ code with $p=\Theta(m)$ qubits fixed in $|0\rangle$ by weight-one $Z$ checks.
The enlarged code still has constant rate, linear distance, and bounded check weight and incidence, but every subset of the added qubits has $\chi_U=0$.
A uniform lower bound on $\chi_U$ therefore needs information beyond rate and distance.

The same distinction separates a uniform statement from an average one.
Entropy sums over correctable partitions constrain average entanglement at positive code rate~\cite{kimConditionalIndependenceQuantum2013}, and selected small-subset averages have been studied for CSS codes~\cite{zhaoGraphbasedApproachEntanglement2026}, whereas our theorem bounds each subset in its stated size range and every pure code state.
For a correctable region, Lemma~\ref{lem:correctable-cut-entropy} reduces this task to counting the stabilizers supported entirely inside the region, and the count must cover their full vector space, including products of checks whose factors outside the region cancel.

\subsection{The fixed-family volume law}

Quantum Tanner codes place qubits on the squares of a combinatorial square complex and define $X$ and $Z$ checks from small classical codes at its vertices~\cite{leverrierQuantumTannerCodes2022}.
A square complex specifies which vertices, edges, and squares meet, and it need not describe a physical grid.
A local check is a binary matrix of coefficients whose rows and columns obey the two component codes.
The construction combines distance properties of these small codes with expansion of the graphs formed by the square corners.

We use the full construction hypotheses of Leverrier and Z\'emor, with a sufficiently large degree fixed before the block length grows.
To state them precisely, fix $\rho\in(0,1/2)$, $\varepsilon\in(0,1/2)$, $\gamma\in(1/2+\varepsilon,1)$, and $\delta>0$.
The component codes $C_A,C_B$ have length $\Delta$, satisfy $0<\dim C_A\le\rho\Delta$ and $\dim C_B=\Delta-\dim C_A$, and both codes and both duals have distance at least $\delta\Delta$.
The two dual tensor codes $(C_A\otimes C_B)^\perp$ and $(C_A^\perp\otimes C_B^\perp)^\perp$ are $\Delta^{3/2-\varepsilon/2}$-robust with $\Delta^\gamma$-resistance to puncturing, as in Definitions~5 and~8 and Theorem~17 of \href{https://arxiv.org/abs/2202.13641v3}{arXiv:2202.13641v3}.
Robustness here bounds the number of rows and columns needed to cover a short dual tensor codeword, and resistance to puncturing requires that property to survive the specified deletions of rows and columns.

The square-complex family obeys the Ramanujan and total no-conjugacy (TNC) assumptions of the same construction, which supply mixing estimates and nondegenerate square incidences.
For the existence construction, choose $\delta$ small enough that $-\delta\log_2\delta-(1-\delta)\log_2(1-\delta)<\rho$, and Theorem~18 of that version then supplies component codes with the required properties for sufficiently large admissible degree.
All numbered references to this construction below use that version, while the bibliography also links its FOCS publication.

\Needspace{16\baselineskip}
\begin{theorem}[Uniform small-region cost and entanglement]
\label{thm:quantum-tanner-target-window}
Fix a construction of Theorem~17 of Leverrier and Z\'emor (arXiv:2202.13641v3) with sufficiently large fixed local degree, component codes satisfying all three hypotheses of that theorem, and the TNC/Ramanujan square-complex family used in their Theorem~1.
For the resulting CSS stabilizer spaces $\cS_j$ on $n_j$ squares, there are constants $\eta,\tau>0$, depending only on the fixed local data, such that every sufficiently large $j$ and every data set $U$ satisfy
\begin{equation}
\begin{gathered}
|U|\le\frac{\eta n_j}{2}\quad\Longrightarrow\\
\dim(\cS_j)_U\le(1-\tau)|U|,
\qquad
\chi_U(\cS_j)\ge2\tau|U|.
\end{gathered}
\label{eq:quantum-tanner-target-window}
\end{equation}
Every such $U$ is correctable, and every pure code state $|\psi\rangle$ obeys
\begin{equation}
S(\psi_U)=|U|-\dim(\cS_j)_U\ge\tau|U|.
\label{eq:quantum-tanner-pure-entanglement}
\end{equation}
\end{theorem}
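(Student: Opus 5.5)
The plan is to reduce everything to the bound $\dim(\cS_j)_U\le(1-\tau)|U|$ and then read off the other conclusions from Lemma~\ref{lem:correctable-cut-entropy}.

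\emph{Reduction.} Quantum Tanner codes of Theorem~17 have linear distance $d_j\ge\eta_0 n_j$. Choosing $\eta<\eta_0$ makes every $U$ with $|U|\le\eta n_j/2$ satisfy $|U|<d_j$. Such a $U$ is therefore correctable, and Lemma~\ref{lem:correctable-cut-entropy} gives
\begin{equation*}
\chi_U=2\bigl(|U|-\dim(\cS_j)_U\bigr),\qquad S(\psi_U)=|U|-\dim(\cS_j)_U .
\end{equation*}
Both stated conclusions then follow from the dimension bound. Since the code is CSS, $(\cS_j)_U$ splits into its $X$ part and its $Z$ part. It therefore suffices to show that each part has dimension at most $(\tfrac12-\tfrac\tau2)|U|$, or more loosely at most $c|U|$ for some $c$ with the two parts summing to $(1-\tau)|U|$.

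\emph{Describing the supported stabilizers.} Take an $X$-type stabilizer $x$ supported in $U$, written as a sum of local checks. Each local check is a codeword of $C_A\otimes C_B$ placed on the $\Delta^2$ squares around a vertex $v$ of one type. I would pick a canonical (minimal) such representation. The reduction theorem of Leverrier--Z\'emor says that any nonzero $x$ admits a representation whose local components can be reduced unless $x$ is heavy. Combining this with the robustness and puncturing-resistance hypotheses and the Ramanujan mixing estimates, I expect to show two things for $|U|\le\eta n_j/2$:
\begin{enumerate}
\item the set $V_x$ of vertices carrying nonzero components has size $O(|U|/\Delta)$, because each active vertex must meet $U$ in at least about $\delta\Delta$ squares;
\item the components at active vertices are determined by their restriction to $U$ together with the requirement that contributions outside $U$ cancel.
\end{enumerate}

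\emph{Counting.} The map from $x$ to its tuple of components is injective on canonical representatives, so its image lies in a subspace we can measure. Let $V_U$ be the set of vertices touching $U$ in at least $\delta\Delta$ squares. By mixing, $|V_U|\le C|U|/\Delta$. The space of component tuples on $V_U$ has dimension at most
\begin{equation*}
|V_U|\cdot\dim(C_A\otimes C_B)\le |V_U|\,\rho(1-\rho)\Delta^2 .
\end{equation*}
Now impose the linear constraints that all squares outside $U$ cancel. Each active vertex has at least $(1-o(1))\Delta^2$ squares outside $U$, apart from the $\Delta$-fraction that meets $U$. These give at least about $|V_U|\Delta^2(1-\epsilon')$ independent constraints, weighed against the unknowns. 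This is too crude, so I would instead count through the degrees of freedom along the squares of $U$. Each square of $U$ lies on two vertices (one per type), and the component codes have rate $\rho<1/2$ on one side and $1-\rho$ on the other. A local counting on each vertex's $\Delta\times\Delta$ grid restricted to $U$ then shows that the $X$ stabilizers inside $U$ have dimension at most about $\rho|U|$-ish plus lower-order terms, and the $Z$ stabilizers similarly. This should give a total of at most $(1-\tau)|U|$, with $\tau$ coming from the slack between the total stabilizer dimension $n-k$ and the loss forced by the robustness hypothesis.

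\emph{Main obstacle.} The hardest step is the second claim of the representation step: proving that representations are unique and short. Products of checks whose outside factors cancel must not produce extra dimension, which means ruling out nontrivial cancelling combinations of local checks that are supported inside $U$. I would first attempt this with the Leverrier--Z\'emor decoder/reduction lemma applied to the zero word on $U^c$: any combination whose outside part vanishes is a small codeword of the dual tensor system. Robustness with $\Delta^{3/2-\varepsilon/2}$ then forces it to be supported on few rows and columns, and expander mixing propagates this to a contradiction unless the combination is trivial.
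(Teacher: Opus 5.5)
The reduction is right: correctability once $\eta/2$ is below the relative distance, then Lemma~\ref{lem:correctable-cut-entropy}, then the $X$/$Z$ splitting all agree with the paper. The gap is the dimension bound itself, which is the whole content of the theorem. Your counting paragraph ends in heuristics (``should give'', ``$\rho|U|$-ish''). The mechanism you propose for the deficit, namely the component rates $\rho$ and $1-\rho$ or global slack in $n-k$, does not drive it:
\begin{itemize}
\item Global slack cannot control individual regions; this is exactly the point of the padded-code example in Section~\ref{sec:separator-barriers}.
\item A per-vertex bound of the form ``dimension at most rate times available coordinates'' is not justified, since a shortened subcode can have rate well above that of the code.
\end{itemize}

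The missing idea is a local Singleton count. Suppose the type-$i$ supported stabilizers are a linear space $E_i$ of short coefficient tuples, and let $R_i$ be the union of their active vertices. A component at $v\in R_i$ may use only squares internal to $R_i$, or squares with one endpoint in $R_i$ that lie in $U$, because such a boundary square outside $U$ cannot be cancelled. Since $E_i$ injects into $\bigoplus_{v\in R_i}P_v(E_i)$ and each $P_v(E_i)\subseteq L_i$ has distance at least $\ell_i$, Singleton gives $m_i\le2e_i(R_i)+|B_i|-(\ell_i-1)|R_i|$ with $B_i\subseteq U$. Two further estimates are then indispensable:
\begin{itemize}
\item Diagonal mixing to make $2e_i(R_i)$ small.
\item A bound $|B_0\cap B_1|\le\Delta E_{G^\cup}(R_0,R_1)$ from total no-conjugacy and union-graph mixing. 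You do not address this, and without it adding the two sectors only gives $m_0+m_1\le2|U|-\cdots$.
\end{itemize}
The resulting deficit is $[\ell_*-1-\bar\lambda_*-\Delta\bar\mu/2-(D/M)R]R$. It is positive because $\ell_*\ge\delta^2\Delta^2$ beats the $O(\Delta^{3/2})$ spectral terms for large fixed $\Delta$, with $\eta$ small enough to control $(D/M)R$. Then $m\le qR$ converts it into a deficit $\tau|U|$.

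Your uniqueness step is also mis-aimed. You propose to rule out nontrivial cancelling combinations of checks supported inside $U$. But those combinations are exactly the elements of $(\cS_j)_U$ being counted, so an argument ending in ``contradiction unless trivial'' would prove $\dim(\cS_j)_U=0$, which fails whenever a single check lies in $U$.

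What must be excluded is a short nonzero \emph{relation}, an element of $\ker\Phi_i$, and this needs no robustness. Every nonzero square at a vertex of the support $R$ of a relation must be cancelled, so it lies on a diagonal edge internal to $R$. Hence $\ell_i|R|\le(D/M)|R|^2+\bar\lambda_i|R|$, that is, $|R|\ge\sigma_iM$.

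Robustness enters only through the reduction theorem, which gives representations with at most $T_i=\lfloor|U|/(a_iD)\rfloor$ active vertices. Then $2T_i<\sigma_iM$ gives uniqueness and $3T_i<\sigma_iM$ gives linearity. Linearity is essential and is not supplied by picking a ``canonical minimal'' representation. An injective nonlinear map does not send a subspace to a subspace. The averaging argument giving $|R_i|\le2T_i$ for the union of active vertices, which the Singleton count needs, also depends on it.
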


The constants are independent of $j$, of the chosen region, and of the logical state.
Their values depend on the fixed construction, and the theorem does not claim the same conclusion for arbitrary good quantum LDPC families.
For comparison, local surface-code checks give $\chi_U=O(|\partial U|)$ on regular lattice regions, consistent with constant-depth local extraction.
The theorem instead forces a volume law, in which the entropy of every eligible region is proportional to its number of qubits.

\begin{proof}[Proof idea]
The obstacle is cancellation outside $U$, since counting the original checks inside $U$ would miss stabilizers formed by multiplying checks that extend beyond it.
To account for these products, we give every supported stabilizer a short description in terms of local check components and bound the dimension of all such descriptions.

The reduction theorem of the construction writes a short stabilizer as a sum of few local components, and expansion forces a nonzero relation among those components to use linearly many vertices.
When $U$ is small enough, two short descriptions cannot differ by such a relation, and three cannot form one either, so the short description is unique and respects addition.
The description thus converts the entire supported stabilizer space into a linear space of short coefficient tuples.

To bound the dimension of this linear space, consider every vertex used by any of its tuples.
At a vertex used at all, at least half the tuples have a nonzero component, so averaging bounds the number of such vertices by twice the maximum support size of one tuple.
A coefficient on a square with only one selected endpoint cannot cancel, so if that square lies outside $U$ the coefficient must vanish.
The remaining allowed coordinates, together with local distance and the Singleton bound, limit the dimension at each selected vertex.

The separate $X$ and $Z$ counts, however, can both include the same boundary square.
To control this double counting, we associate each overlap with an edge of the square complex and apply the incidence and mixing estimates for that complex.
For sufficiently large fixed degree and sufficiently small fixed $\eta$, the resulting deficit is a positive fraction of $|U|$.
Correctability and Lemma~\ref{lem:correctable-cut-entropy} then give the cut and entropy statements.
Appendix~\ref{subsec:quantum-tanner-target-window} proves these steps, including the strict size conditions and uniform constants.
\end{proof}

\subsection{Individual hardware bottlenecks}

A hardware region $L$ may contain both data and ancillas.
The theorem applies to its data subset $U=Q\cap L$, whereas the available interactions are counted by the hardware edges leaving $L$.
Because the code estimate holds for every eligible $U$, this application requires no assumption on the placement, the geometry inside the region, or the hardware degree.

\begin{corollary}[Quantum Tanner bottleneck bound]
\label{cor:quantum-tanner-bottleneck}
For every sufficiently large member of the fixed family in Theorem~\ref{thm:quantum-tanner-target-window}, place the data arbitrarily on a hardware graph.
Let $L$ be a hardware region, put $U=Q\cap L$ and $b_L=|\partial_{G_{\hw}}L|$, and suppose $0<|U|\le\eta n/2$.
For $b_L>0$, exact extraction in the adaptive model of Definition~\ref{def:adaptive-local-extraction} requires
\begin{equation}
\depth(C)\ge\frac{\tau|U|}{b_L}.
\label{eq:quantum-tanner-bottleneck}
\end{equation}
For $b_L=0$, exact extraction is infeasible.
\end{corollary}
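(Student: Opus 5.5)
The corollary should follow by combining Theorem~\ref{thm:quantum-tanner-target-window} with Theorem~\ref{thm:adaptive-instrument-cut-bound}. The code estimate is a statement about data subsets alone, and the circuit bound is a statement about hardware cuts whose data subset is $U=Q\cap L$.

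I will fix a sufficiently large member of the family, so that Theorem~\ref{thm:quantum-tanner-target-window} applies with its constants $\eta,\tau$. I will also fix an arbitrary placement of the data sites $Q$ in the hardware graph and a region $L$ with $0<|U|\le\eta n/2$. The hypotheses of that theorem concern only $|U|$, so they give $\chi_U(\cS)\ge2\tau|U|>0$ with no condition on the placement, the ancillas in $L$, or the hardware degree. Note that $\tau>0$ and $|U|\ge1$ make the cut cost strictly positive.

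Next I apply the trajectory inequality $\chi_{Q\cap L}(\cS)\le2g_L(\beta)$ of Theorem~\ref{thm:adaptive-instrument-cut-bound} to any nonzero complete Kraus trajectory $\beta$ of an exact circuit $C$ in the model of Definition~\ref{def:adaptive-local-extraction}. This gives $g_L(\beta)\ge\tau|U|$.

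Each layer consists of primitives with disjoint supports, so at most $b_L$ of them cross $\partial_{G_{\hw}}L$ in any one layer (Definition~\ref{def:hardware-boundary}). Hence $g_L(\beta)\le b_L\cdot\depth(C)$ on every branch. At least one nonzero trajectory exists, since the instrument is nontrivial. Dividing by $b_L>0$ yields $\depth(C)\ge\tau|U|/b_L$. Equivalently, I can plug into $\depth(C)\ge\kappa/2\ge\chi_U/(2b_L)$.

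For $b_L=0$, no two-qubit primitive crosses the cut, so $g_L(\beta)=0$ on every trajectory. Since $\chi_U>0$, the impossibility clause of Theorem~\ref{thm:adaptive-instrument-cut-bound} applies, and exact extraction is infeasible.

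There is no real obstacle here beyond Theorem~\ref{thm:quantum-tanner-target-window} itself. The only point to check is that ``sufficiently large member'' matches the threshold of that theorem, and that the bound $|U|\le\eta n/2$ refers to the data count $n=|Q|$ and not to the hardware size $N$.
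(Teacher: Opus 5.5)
Your proposal is correct and follows the paper's own proof: Theorem~\ref{thm:quantum-tanner-target-window} gives $\chi_U\ge2\tau|U|>0$, and Theorem~\ref{thm:adaptive-instrument-cut-bound} forces at least $\chi_U/2$ crossing operations on each nonzero branch, at most $b_L$ per layer. The $b_L=0$ case then follows from the same theorem's impossibility clause.
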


\begin{proof}
The family theorem gives $\chi_U\ge2\tau|U|$, and Theorem~\ref{thm:adaptive-instrument-cut-bound} requires at least $\chi_U/2$ crossing operations, of which at most $b_L$ can occur in one layer.
If $b_L=0$, no operation can supply the positive crossing cost.
\end{proof}

A module holding $u$ data qubits in the theorem's range behind $b>0$ hardware edges thus requires depth at least $\tau u/b$.
Equivalently, depth at most $T>0$ requires at least $\lceil\tau u/T\rceil$ edges leaving the module.
Internal routing can increase the required depth, but cannot remove this boundary cost.
All small bottlenecks of a device are summarized by the data-weighted edge expansion
\begin{equation}
h_{Q,\eta}(G_{\hw})=
\min_{L:\,0<|Q\cap L|\le\eta n/2}
\frac{|\partial_{G_{\hw}}L|}{|Q\cap L|}.
\label{eq:small-region-hardware-expansion}
\end{equation}
For a positive minimum, the corollary gives depth at least $\tau/h_{Q,\eta}$, and a zero minimum rules out exact extraction.

\begin{remark}[Connection with presentation expansion]
\label{rem:intrinsic-family-gap}
For any generating presentation with qubit incidence at most $q$, the uniform bound also implies local expansion of the contracted Tanner graph of DBT.
Each crossing generator contains a pair of data qubits across the cut, and each pair belongs to at most $q$ generators.
Consequently its edge boundary has size at least $\chi_U/q\ge(2\tau/q)|U|$ throughout the theorem's range.
The converse fails by Section~\ref{sec:presentation-expansion}, because presentation expansion does not control cancellations among generators.
\end{remark}

\section{Space and depth on grids}
\label{sec:static-2d-barrier}

Additional ancillas can spread the work of syndrome extraction over a larger device.
On a square grid, the intrinsic cut bound quantifies the depth that remains necessary, and an adaptive construction achieves the same order for the fixed family.
The lower bound applies to every placement, whereas the construction chooses one placement in advance and restores every data label to its original site after each round.

\subsection{From a cut to a geometric lower bound}

A square grid with $N$ sites has width $\sqrt N$.
Sweep a vertical line across it until the data on one side approach the upper endpoint of the small-region theorem.
One column changes the count by at most $\sqrt N$, while only $O(\sqrt N)$ local hardware edges cross the line.
If $n/\sqrt N$ is large, the sweep therefore selects $\Theta(n)$ data qubits behind $O(\sqrt N)$ edges, which requires depth $\Omega(n/\sqrt N)$.
When the ratio is bounded, the same order follows from the need for at least one quantum operation.
The following statement includes the constants and permits a fixed interaction range larger than nearest-neighbor range.

\begin{corollary}[Intrinsic bound on square patches]
\label{cor:intrinsic-patch}
Let $\cS$ be a stabilizer space on $n$ data qubits with $\chi_U(\cS)\ge\gamma|U|$ for every $U$ with $0<|U|\le\eta n/2$, where $\gamma>0$, $0<\eta\le1$, and $\eta n\ge2$.
Let $C$ be an exact extraction circuit in the model of Definition~\ref{def:adaptive-local-extraction} on an $N$-site square patch whose hardware edges join sites at Chebyshev distance at most a positive integer $b$.
Then
\begin{equation}
\depth(C)\ge
\min\left\{\frac{\gamma\eta}{8b^2(2b+1)},\frac\eta4\right\}\frac{n}{\sqrt N}.
\label{eq:intrinsic-patch-depth}
\end{equation}
\end{corollary}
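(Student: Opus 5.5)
The plan is to split on the size of $n/\sqrt N$ relative to $4/\eta$ and, in each regime, apply Theorem~\ref{thm:adaptive-instrument-cut-bound} to a single well-chosen hardware cut. Write $s=\sqrt N$ for the side of the patch and index its columns $1,\ldots,s$. For $0\le t\le s$, let $L_t$ be the set of sites in the first $t$ columns, and let $U_t=Q\cap L_t$.

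\emph{Boundary count.} The first step bounds $|\partial L_t|$ uniformly in $t$. An edge leaving $L_t$ joins a site in columns $t-b+1,\ldots,t$ to a site in columns $t+1,\ldots,t+b$, with row offset at most $b$. There are at most $bs$ sites on the left, and each has at most $b(2b+1)$ possible partners on the right. Hence
\[
|\partial L_t|\le b^2(2b+1)s .
\]
A sharper count exists, but this crude one already gives the stated constant.

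\emph{Large ratio, $n/s\ge4/\eta$, i.e.\ $s\le\eta n/4$.} The sweep starts from $|U_0|=0$ and ends at $|U_s|=n\ge\eta n/4$. Each added column raises $|U_t|$ by at most $s$. Let $t$ be the first index with $|U_t|\ge\eta n/4$. Then
\[
\eta n/4\le|U_t|\le\eta n/4+s\le\eta n/2 ,
\]
so $U_t$ lies in the hypothesis range and $\chi_{U_t}\ge\gamma\eta n/4>0$. If $|\partial L_t|=0$, Theorem~\ref{thm:adaptive-instrument-cut-bound} makes exact extraction impossible, so the bound holds vacuously. Otherwise the same theorem gives
\[
\depth(C)\ge\frac{\chi_{U_t}}{2|\partial L_t|}\ge\frac{\gamma\eta n/4}{2b^2(2b+1)s}=\frac{\gamma\eta}{8b^2(2b+1)}\frac{n}{\sqrt N}.
\]
This is the first term of the minimum.

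\emph{Small ratio, $n/s<4/\eta$.} Here it suffices to show $\depth(C)\ge1$, since then $\depth(C)\ge1>(\eta/4)n/\sqrt N$. Take $L$ to be the single site of one data qubit, so $|U|=1\le\eta n/2$ by the assumption $\eta n\ge2$. The hypothesis gives $\chi_U\ge\gamma>0$. By Theorem~\ref{thm:adaptive-instrument-cut-bound}, every nonzero branch then contains at least one crossing two-qubit operation, which occupies at least one layer. If this cut has empty boundary, extraction is again impossible.

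Combining the two cases gives the stated minimum. The only step needing care is the choice of sweep index in the large-ratio case: it must land $|U_t|$ inside $[\eta n/4,\eta n/2]$, which is exactly what the case condition $s\le\eta n/4$ provides. The remaining care is the edge-count constant.
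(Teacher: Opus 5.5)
Your proof is correct and follows the paper's argument essentially step for step: the same column sweep, the same $b^2(2b+1)\sqrt N$ boundary count, the same case split at $\sqrt N\le\eta n/4$, and the same single-data-qubit cut forcing depth at least one in the small-ratio regime. The only difference is cosmetic. You take the first sweep index with $|U_t|\ge\eta n/4$, whereas the paper takes an index with $\eta n/2-\sqrt N\le|Q\cap L_u|\le\eta n/2$, and under the case hypothesis both choices land in $[\eta n/4,\eta n/2]$.
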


The proof in Appendix~\ref{subsec:intrinsic-patch-proof} treats both regimes and permits missing hardware edges.
For the fixed family, Theorem~\ref{thm:quantum-tanner-target-window} supplies $\gamma=2\tau$, so the bound holds for every placement of that family's data on the patch.

The same reasoning extends to planar devices that are not square patches.
A weighted separator removes few hardware vertices while leaving outside sides with controlled data counts.
To use it here, assign weight one to data sites and zero to ancillas, then apply the separator recursively to select a region with between $\eta n/8$ and $\eta n/2$ data qubits.
If the hardware has maximum degree $\Delta_{\hw}$ and a hereditary weighted-separator bound $s_{\hw}(N)$, Theorem~\ref{thm:weighted-separator-depth} gives depth $\Omega(n/[\Delta_{\hw}s_{\hw}(N)])$ whenever $s_{\hw}(N)$ is a sufficiently small fraction of $n$.
The weighted separator theorem of Alon, Seymour, and Thomas supplies $s_{\hw}(N)=O(\sqrt N)$ for a fixed excluded minor~\cite{alonSeparatorTheoremGraphs1990}, and Appendix~\ref{subsec:weighted-separators} gives the selection argument and its explicit constants.

\paragraph{Relation to existing geometric bounds.}
The rate and distance of a good code also constrain its aggregate extraction cost.
Theorem~24 of Baspin, Fawzi, and Shayeghi (BFS) gives $\Omega(k\sqrt d/N)$ in two dimensions~\cite{baspinLowerBoundOverhead2023}.
Their correctable-partition framework, combined with data-weighted separators, further yields $\Omega(n/\sqrt N)$ for good stabilizer families on bounded-degree hardware excluding a fixed minor.
Appendix~\ref{subsec:bfs-weighted-partitions} derives this consequence from their partition method, with a direct entanglement-growth argument for product inputs, so the aggregate planar scaling does not require our uniform theorem.
What Section~\ref{sec:separator-barriers} adds is control of each individual small bottleneck, which rate and distance alone do not imply.
Both statements concern exact adaptive extraction with all quantum systems included in $N$ and free classical processing.

\subsection{An adaptive construction at every width}

The two endpoints of the width range suggest how additional space can reduce depth.
With linear space, routing a constant number of disjoint gate layers across a square mesh costs $O(\sqrt n)$ depth, an ingredient supplied by the mesh-permutation bound underlying Lemma~25 of Bacon, Flammia, Harrow, and Shi~\cite{baconSparseQuantumCodes2017b}.
With quadratic space, the construction of DBT uses intermediate measurements and classical feedforward to obtain constant depth for bounded-degree CSS presentations~\cite{delfosseBoundsStabilizerMeasurement2021}.
We interpolate between these regimes for general commuting Pauli presentations, including the case $N=n$ where no separate check ancillas fit.

\begin{theorem}[Adaptive extraction at every grid width]
\label{thm:adaptive-grid-width-upper}
Let a commuting signed Pauli presentation on $n\ge1$ data qubits have nonempty checks of weight at most $w$, with at most $\delta$ checks incident on each data qubit, where $w,\delta$ are fixed positive integers.
For every full nearest-neighbor $L\times L$ grid with $N=L^2\ge n$, there is a fixed placement of the data and an exact extraction circuit in the model of Definition~\ref{def:adaptive-local-extraction} with
\begin{equation}
\depth(C)=O_{w,\delta}\!\left(\max\left\{1,\frac{n}{\sqrt N}\right\}\right).
\label{eq:adaptive-grid-width-upper}
\end{equation}
The circuit uses product ancillas, nearest-neighbor Clifford gates, one-qubit measurements and resets, and free classical feedforward, and each data label has the same physical site at input and output.
The same conclusion holds on full rectangular grids of bounded aspect ratio, with constants also depending on that ratio.
The placement may be chosen before execution, and the statement does not assert this bound for every prescribed data placement.
\end{theorem}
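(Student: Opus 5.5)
We build the circuit from the concentrate--measure--unconcentrate primitive sketched in the introduction, so that exactness is automatic.

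\emph{Step 1: batching the checks.} Form the conflict graph in which two checks are adjacent when they share a data qubit. It has degree at most $w(\delta-1)$, so a greedy coloring splits the presentation into $c=O(w\delta)$ classes, and within each class the checks have pairwise disjoint supports. We measure the classes one after another.

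\emph{Step 2: measuring one check exactly.} Fix a check $P$ with support $\{q_1,\dots,q_m\}$, $m\le w$. Apply local Cliffords that turn each factor of $P$ into $Z$. Then apply CNOTs from $q_2,\dots,q_m$ onto $q_1$, so that the parity of $P$ sits on $q_1$ as a single $Z$. Measure $Z_{q_1}$ nondestructively, and finally undo the CNOTs and the local Cliffords. If $V$ denotes this Clifford, the branch operator is $V^\dagger\frac{I\pm Z_{q_1}}2V=\frac{I\pm P}2$, the L\"uders projector, with the sign convention of the signed representative tracked classically. The checks commute, so composing the classes gives $\Pi_s$ on every branch. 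By Remark~\ref{rem:sectorwise-exactness} and Proposition~\ref{prop:sectorwise-instrument-rigidity}, this is exact even on reference-entangled inputs, provided every Pauli frame we use depends only on observed outcomes.

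The whole difficulty is therefore to implement, in each class, a layer of at most $w$ rounds of \emph{remote} CNOTs between disjoint data pairs, together with its inverse.

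\emph{Step 3, regime $N\ge Cn^2$.} Assign each data qubit its own $4\times4$ tile and place the tiles in a square arrangement using $O(n)$ tile-rows. Take any set of disjoint CNOT pairs. Route each pair along an L-shaped path, where the rows and columns are reserved for the pairs in a way that keeps paths edge-disjoint, or splits them into $O(1)$ edge-disjoint families.

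Along each path we first prepare Bell pairs in constant depth, one between every pair of neighboring sites. We then do a single layer of Bell measurements at the interior sites. This fuses the chain into one long-range Bell pair, up to a Pauli frame that is a parity of the measurement outcomes. With that long-range pair, a teleported CNOT (the Eisert et al.\ gadget) costs constant depth, again with frame corrections that are observed and classical. Afterwards all ancillas are reset.

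The tile spacing is what guarantees enough vertex-disjoint corridors. With $\Theta(\sqrt N)$ lanes in each direction and $n$ pairs, each class needs $O(1+n/\sqrt N)$ batches of constant depth, and this yields the bound $O(\max\{1,n/\sqrt N\})$ whenever $N\ge 16n$. For $n\le N<16n$ the target depth is $\Theta(\sqrt N)$. There we use mesh permutations instead, following Bacon--Flammia--Harrow--Shi: a SWAP network of depth $O(\sqrt N)$ brings every CNOT pair of a round onto adjacent sites, we apply the round, and we run the permutation in reverse to restore every data label. This also covers $N=n$, where there are no spare sites.

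\emph{Main obstacle: congestion.} We must show that an arbitrary matching of $n$ pairs can be scheduled in $O(1+n/\sqrt N)$ batches of edge-disjoint corridors on the tiled grid. We would first try a two-phase scheme. Phase one performs a row routing through the horizontal lanes of each tile-row. Because the tile rows have width $\sqrt N/4$, a fixed placement that spreads the data evenly keeps the load per lane to $O(1+n/\sqrt N)$. Phase two handles column routing in the same way. The Bell-pair chains can additionally be pipelined, since fusion only needs one simultaneous layer of measurements per batch. Verifying that the per-lane load bound holds for every matching, and not just on average, is the step we expect to need the most care; if it fails we fall back on Valiant-style randomized intermediate destinations, derandomized for the fixed presentation.

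The bounded-aspect-ratio rectangular case follows from the same argument, with the lane counts rescaled by the aspect ratio.
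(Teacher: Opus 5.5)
Your architecture coincides with the paper's. Both use the coloring into $w(\delta-1)+1$ classes of disjoint checks, the pivot-concentration primitive whose branch operator is the check projector, tiles with Bell-pair fusion and a teleported CNOT on wide grids, and mesh permutations when $N=O(n)$. The gap is that the step you flag as the main obstacle is the real content of the wide-grid case, and you do not prove it. A fallback to derandomized Valiant-style routing is not an argument. The suggested $O(1)$ edge-disjoint families also cannot exist when $n\gg\sqrt N$, because a matching may send $\Theta(n)$ pairs across a vertical line crossed by only $O(\sqrt N)$ tile edges.

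The missing lemma has a short deterministic proof that holds for every matching:
\begin{itemize}
\item Put terminal $i$ in tile $(i\bmod q,\lfloor i/q\rfloor)$ with $q=\lfloor L/4\rfloor$, so each tile row holds at most $h=\lceil n/q\rceil$ terminals.
\item For a given CNOT matching, form the multigraph on tile rows whose edges are the matched pairs. A pair shares an endpoint row with at most $2h-2$ others, so greedy coloring gives at most $2h-1$ batches in which distinct pairs occupy disjoint rows.
\item Within a batch at most $\lfloor q/2\rfloor$ pairs span two rows, so each can be given its \emph{own} routing column. The row--column--row route then puts at most one horizontal segment in each tile row and one vertical segment in each routing column.
\item The routes are therefore edge-disjoint on the tile grid, with no averaging, and $h\le 8n/L+1$ gives $O(1+n/\sqrt N)$ batches.
\end{itemize}
The idea you are missing is that the vertical leg runs in a freely assigned column rather than along a terminal's column. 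Column load then becomes a design choice, not a property of the matching.

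A second gap concerns crossings. Within one batch, the horizontal leg of one pair and the vertical leg of another generally pass through a common tile. Two routes that cross there cannot both be realized as vertex-disjoint chains of physical sites carrying nearest-neighbor Bell pairs, which is what your ``vertex-disjoint corridors'' description requires. The paper handles this inside each $4\times4$ tile:
\begin{itemize}
\item Each route uses two distinct named ports ($N,E,S,W$, or $T$ beside the data).
\item The at most two port pairs of a tile are prepared serially: apply $H_a$, swap along a local path to sit beside $b$, apply a CNOT, and reverse the swaps. This acts as the identity on every spectator register, even when the SWAP path passes through a half of the first pair.
\item One layer of Bell measurements on facing ports then fuses all routes at once, so each batch has constant depth.
\end{itemize}

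A minor point: $N\ge16n$ does not by itself give $\lfloor L/4\rfloor^2\ge n$. Define the tiled regime by $q^2\ge n$ instead, and note that when it fails you still have $L<4(\sqrt n+1)\le8\sqrt n$, which is what lets the mesh fallback meet the target.
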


The circuit has two parts, namely an unrestricted circuit that measures the checks and a grid implementation of its two-qubit gates.
Both parts preserve the full measurement instrument, including logical coherence and entanglement with a reference.

For the first part, color the checks so that checks of the same color have disjoint supports.
Bounded weight and incidence ensure that only a constant number of colors is needed.
After changing each Pauli factor to $Z$, concentrate one check's parity onto a data pivot with CNOTs, measure that pivot, and undo the concentration.
The resulting branch operator is precisely the check projector, so measuring the pivot does not measure an additional logical observable.
Commutativity of the check projectors then ensures that completing all colors gives their joint measurement.
The unrestricted circuit as a whole has only a constant number of layers of disjoint CNOTs.

On a narrow grid, ordinary mesh permutations bring each such matching to adjacent pairs and then return all states to their starting sites.
This route uses no extra check register, so it also works when every grid site initially holds data.
On a wider grid, divide the hardware into $4\times4$ tiles and place one data qubit in each occupied tile, balancing the number of occupied tiles across the rows.
Pairs of terminals can then be divided into $O(1+n/\sqrt N)$ batches whose endpoint rows are disjoint.
Within one batch, assign each pair in different rows a distinct routing column and connect its two rows through that column, while a same-row pair uses its row directly.

These routes are implemented by entanglement swapping, as shown in Fig.~\ref{fig:grid-compiler}.
Each tile prepares Bell pairs between its selected ports using only its local ancillas, and Bell measurements on facing ports then join all pieces of a route simultaneously, leaving an endpoint Bell pair after classical parity corrections.
With this pair, the two endpoint data qubits can perform a corrected remote CNOT.
Different routes can pass through one tile using disjoint ports, and their Bell pairs are prepared serially within the tile, so no physical overpass is assumed.
The quantum depth of a batch is constant, including preparation, measurements, resets, and corrections.
Appendix~\ref{app:grid-width-compiler} specifies the ports, proves every branch identity, and joins the two width regimes.

\begin{figure}[tbp]
\centering
\begin{tikzpicture}[
  x=1cm,y=1cm,
  every node/.style={font=\small},
  panel/.style={font=\small\bfseries,anchor=west},
  note/.style={font=\scriptsize,align=left},
  qdot/.style={circle,draw=black,fill=white,inner sep=0pt,minimum size=3.3mm},
  data/.style={circle,draw=blue!65!black,fill=blue!65!black,inner sep=0pt,minimum size=3.3mm},
  bell/.style={blue!70!black,line width=1pt},
  routeA/.style={blue!70!black,line width=1.5pt},
  routeB/.style={orange!80!black,line width=1.5pt},
  arrow/.style={-{Latex[length=1.8mm]},line width=.7pt}
]
\begin{scope}
  \node[panel] at (0,4.2) {(a) A batch on the tiled grid};
  \foreach \r in {0,...,4}{
    \foreach \c in {0,...,4}{
      \draw[black!25,fill=black!2] (.68*\c,3.5-.68*\r)
        rectangle ++(.61,-.61);
    }
  }
  \draw[routeA] (.305,3.195)--(1.665,3.195)--(1.665,.475)--(3.025,.475);
  \draw[routeB] (.305,1.835)--(3.025,1.835);
  \draw[black,line width=.8pt] (1.36,2.14) rectangle (1.97,1.53);
  \foreach \xx/\yy/\lab in {.305/3.195/a,3.025/.475/b,.305/1.835/c,3.025/1.835/d}{
    \node[data] at (\xx,\yy) {};
    \node[font=\scriptsize,fill=white,inner sep=.7pt] at (\xx+.22,\yy+.2) {$\lab$};
  }
  \node[note,text width=3.5cm] at (5.35,2.9) {One data site per occupied tile.\par Different pairs use disjoint endpoint rows.};
  \draw[arrow] (3.45,1.55)--(2.01,1.81);
  \node[note,text width=3.5cm] at (5.35,1.4) {A shared tile uses distinct ports for the two routes.};
  \node[note,anchor=west] at (0,-.15) {Blue: row--column--row. Orange: same-row pair.};
\end{scope}
\begin{scope}[xshift=8.05cm]
  \node[panel] at (0,4.2) {(b) Physical tile and local preparation};
  \foreach \r in {0,...,3}{
    \draw[black!35] (.15,3.4-.8*\r)--(2.55,3.4-.8*\r);
  }
  \foreach \c in {0,...,3}{
    \draw[black!35] (.15+.8*\c,3.4)--(.15+.8*\c,1.0);
  }
  \foreach \r in {0,...,3}{
    \foreach \c in {0,...,3}{
      \node[qdot,minimum size=2.3mm,draw=black!50] at (.15+.8*\c,3.4-.8*\r) {};
    }
  }
  \node[data,label={[font=\scriptsize]above:$D$}] at (.15,3.4) {};
  \node[qdot,label={[font=\scriptsize]above:$T$}] at (.95,3.4) {};
  \node[qdot,label={[font=\scriptsize]above:$N$}] at (1.75,3.4) {};
  \node[qdot,label={[font=\scriptsize]right:$E$}] at (2.55,1.8) {};
  \node[qdot,label={[font=\scriptsize]below:$S$}] at (1.75,1.0) {};
  \node[qdot,label={[font=\scriptsize]left:$W$}] at (.15,1.8) {};
  \node[note,text width=3.45cm] at (5.05,2.95) {Filled: data.\par Open: ancillas.\par Lines: physical grid edges.};
  \node[note,text width=3.45cm] at (5.05,1.5) {Prepare up to two disjoint port pairs serially.\par Restore every other tile register.};
  \node[note,anchor=west] at (0,-.15) {Local coordinates: row downward, column rightward.};
\end{scope}
\begin{scope}[yshift=-5.0cm]
  \node[panel] at (0,4.2) {(c) Simultaneous fusion};
  \foreach \x in {0,2.35,4.7}{
    \draw[black!25,rounded corners=2pt] (\x,2.2) rectangle ++(1.9,1.25);
    \node[qdot] at (\x+.27,2.6) {};
    \node[qdot] at (\x+1.63,2.6) {};
    \draw[bell] (\x+.27,2.6) to[bend left=45] (\x+1.63,2.6);
  }
  \foreach \x in {1.63,3.98}{
    \draw[orange!80!black,line width=1pt,rounded corners=3pt]
      (\x-.19,2.35) rectangle (\x+1.18,2.86);
    \draw[black!60] (\x+.17,2.6)--(\x+.82,2.6);
  }
  \node[note,anchor=west] at (0,1.83) {Orange boxes: Bell measurements on facing ports.};
  \draw[arrow] (3.3,1.5)--(3.3,1.03);
  \node[note,anchor=west] at (3.55,1.28) {classical parities};
  \node[qdot,label={[font=\scriptsize]below:$T_a$}] at (.27,.5) {};
  \node[qdot,label={[font=\scriptsize]below:$T_b$}] at (6.33,.5) {};
  \draw[bell] (.27,.5) to[bend left=13] (6.33,.5);
\end{scope}
\begin{scope}[xshift=8.05cm,yshift=-5.0cm]
  \node[panel] at (0,4.2) {(d) Remote CNOT and reuse};
  \node[data,label={[font=\scriptsize]left:control}] (d1) at (1.0,2.65) {};
  \node[qdot,label={[font=\scriptsize]below:$T_a$}] (ta) at (1.8,2.65) {};
  \node[qdot,label={[font=\scriptsize]below:$T_b$}] (tb) at (4.8,2.65) {};
  \node[data,label={[font=\scriptsize]right:target}] (d2) at (5.6,2.65) {};
  \draw[black!65] (d1)--(ta);
  \draw[black!65] (tb)--(d2);
  \draw[bell] (ta) to[bend left=28] (tb);
  \node[note,text width=6.7cm,anchor=north west] at (0,1.85) {Local CNOTs, port measurements, and conditional Pauli corrections implement the remote CNOT.};
  \node[note,text width=6.7cm,anchor=north west] at (0,.72) {Data remain at their fixed sites.\par Reset and reuse the ancillas for the next batch.};
\end{scope}
\end{tikzpicture}
\caption{The adaptive grid compiler.
(a) Data sites are chosen in $4\times4$ tiles and balanced over tile rows.
A batch of terminal pairs uses disjoint endpoint rows, and pairs in different rows use distinct routing columns.
(b) The tile has data at $(0,0)$, terminal port $T=(0,1)$, and directional ports $N=(0,2)$, $E=(2,3)$, $S=(3,2)$, $W=(2,0)$, with coordinates ordered as row and column.
Its ancilla grid prepares each requested port pair locally, serializing up to two pair preparations in a shared tile while restoring all spectator states.
(c) Bell measurements on disjoint facing ports fuse the locally prepared pairs into endpoint pairs, and classical parities determine their Pauli corrections.
(d) Each corrected endpoint pair implements a remote CNOT, preserving the data sites.
Bell-pair curves depict entanglement, not extra hardware edges.
All quantum steps are counted, and classical feedforward is free.
The mesh-permutation fallback covers grids too narrow for one tile per data qubit.}
\label{fig:grid-compiler}
\end{figure}

\Needspace{15\baselineskip}
\subsection{The matching fixed-family tradeoff}

We can now combine the construction with the placement-independent lower bound to determine the optimum asymptotic tradeoff for the fixed family.
In this tradeoff, the minimization includes the choice of data placement, made once before the circuit runs.

\begin{corollary}[Optimal adaptive tradeoff for the fixed family]
\label{cor:fixed-family-grid-width-tight}
Retain the fixed quantum Tanner family and all hypotheses of Theorem~\ref{thm:quantum-tanner-target-window}.
On full square grids with $N$ sites, let $D_{\min}(n,N)$ minimize exact adaptive extraction depth over circuits and fixed data placements.
For the sufficiently large block lengths of that family and any fixed $c>0$, throughout $n\le N\le c n^2$,
\begin{equation}
\begin{gathered}
D_{\min}(n,N)=\Theta\!\left(\max\left\{1,\frac{n}{\sqrt N}\right\}\right),\\
N D_{\min}(n,N)^2=\Theta(n^2).
\end{gathered}
\label{eq:fixed-family-grid-width-tight}
\end{equation}
The constants may depend on the fixed family and on $c$.
\end{corollary}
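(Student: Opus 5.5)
The plan is to sandwich $D_{\min}$ between the placement-independent lower bound of Corollary~\ref{cor:intrinsic-patch} and the constructive upper bound of Theorem~\ref{thm:adaptive-grid-width-upper}. The identity $ND_{\min}^2=\Theta(n^2)$ is then a rewriting that uses the restriction $N\le cn^2$.

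\textbf{Lower bound.} Theorem~\ref{thm:quantum-tanner-target-window} gives $\chi_U\ge2\tau|U|$ for every $U$ with $|U|\le\eta n/2$ once $j$ is large. Hence Corollary~\ref{cor:intrinsic-patch} applies with $\gamma=2\tau$, nearest-neighbor range $b=1$, and $\eta n\ge2$ for large $n$. Since that corollary places no restriction on the data placement, minimizing over placements and circuits gives
\[
D_{\min}\ge c_1\,\frac{n}{\sqrt N},\qquad c_1=\min\{\tau\eta/12,\eta/4\}.
\]
The other branch of the maximum needs $D_{\min}\ge1$. Every exact circuit has at least one quantum layer on some nonzero branch. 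To see this, take any balanced cut $U$ with $0<|U|\le\eta n/2$; then $\chi_U>0$, so Theorem~\ref{thm:adaptive-instrument-cut-bound} forces at least one crossing operation. Combining the two bounds, $D_{\min}\ge\min\{1,c_1\}\max\{1,n/\sqrt N\}$.

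\textbf{Upper bound.} The quantum Tanner presentation has fixed weight $w=\Delta^2$ and fixed incidence $\delta$, both bounded in terms of the local degree. Theorem~\ref{thm:adaptive-grid-width-upper} therefore supplies, for every $N\ge n$, a fixed placement and an exact circuit of depth $O(\max\{1,n/\sqrt N\})$, with constants depending only on the family. This bound does not use $N\le cn^2$. The two bounds together give the first displayed equation.

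\textbf{The product identity.} For $N\le n^2$ the maximum equals $n/\sqrt N$, and squaring gives $ND_{\min}^2=\Theta(n^2)$ directly. For $n^2<N\le cn^2$ (relevant when $c>1$) we have $D_{\min}=\Theta(1)$, so $ND_{\min}^2=\Theta(N)$, and $n^2<N\le cn^2$ makes this $\Theta(n^2)$ with constants depending on $c$. This case is the only place the upper limit on $N$ is used; beyond it, $ND_{\min}^2$ grows like $N$ and is no longer $\Theta(n^2)$.

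\textbf{Main obstacle.} The step needing the most care is verifying that the hypotheses of Corollary~\ref{cor:intrinsic-patch} hold on full square grids with the paper's exact model. Specifically, $N$ must count all sites, and the sizes must satisfy $\eta n\ge2$ and $|U|\le\eta n/2$ for the fixed $\eta$ from Theorem~\ref{thm:quantum-tanner-target-window}. Everything else is a direct combination of results already stated.
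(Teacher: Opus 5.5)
Your proposal is correct and follows the paper's proof. The upper bound comes from Theorem~\ref{thm:adaptive-grid-width-upper} applied to the bounded-weight, bounded-incidence Tanner presentation, and the placement-independent lower bound comes from Corollary~\ref{cor:intrinsic-patch} with $\gamma=2\tau$ and $b=1$. The only difference is minor: your single-qubit-cut argument for $D_{\min}\ge1$ makes the first identity hold for all $N\ge n$, whereas the paper matches the two bounds by using $N\le cn^2$, so that $n/\sqrt N$ and $\max\{1,n/\sqrt N\}$ differ by at most a constant factor, which is the same fact you use for the product identity.
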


\begin{proof}
The bounded-weight, bounded-incidence presentation gives the upper bound by Theorem~\ref{thm:adaptive-grid-width-upper}, and Corollary~\ref{cor:intrinsic-patch} gives the matching lower bound uniformly over placements under the same adaptive circuit assumptions.
The two bounds match because, for fixed $c$ and $N\le c n^2$, the quantities $n/\sqrt N$ and $\max\{1,n/\sqrt N\}$ differ by at most a constant factor.
\end{proof}

At $N=\Theta(n)$, the optimum depth is $\Theta(\sqrt n)$, and at $N=\Theta(n^2)$ it is $\Theta(1)$.
Intermediate widths follow the same law, provided $N$ is a full square-grid size in the stated range.

The upper construction depends on the freedom to choose a fixed placement.
A prescribed placement, by contrast, can concentrate data behind a narrow boundary even when the surrounding grid is large, and Corollary~\ref{cor:quantum-tanner-bottleneck} still bounds that region's depth.
The construction also uses intermediate measurements, including data-pivot measurements, and free classical feedforward.
It establishes an exact noiseless tradeoff, but it does not establish fault tolerance or the same tradeoff for circuits restricted to unitary gates followed by final readout.

\section{Discussion}
\label{sec:discussion}

An ideal syndrome measurement has a cut cost that can be read from the stabilizer space before a circuit is chosen.
The intrinsic cut cost $\chi_U$ counts the constraints that remain after stabilizers confined to either side have been removed.
Its operational meaning depends on preserving the quantum state within each syndrome sector.
Under this requirement and with free local processing, it is exactly the Bell-pair cost when all operations are local, or the crossing-gate cost when only CNOTs cross, whereas arbitrary crossing two-qubit operations reduce the optimum to $\lceil\chi_U/2\rceil$.
The fixed-family small-region theorem and the adaptive grid compiler then connect this local resource count to a space--depth tradeoff.
Several questions remain beyond the models settled here.

\paragraph{Prescribed layouts and elapsed depth.}
The grid compiler chooses a fixed data placement suited to the available space.
For a prescribed placement, however, a small occupied region may retain a narrow required boundary even if many distant ancillas are available.
A useful next step is to characterize which placements admit the same scaling and which incur additional routing costs.
On modular devices, the exact cut budget likewise leaves the depth of local processing and storage transfers to be determined, and the gross-code construction in particular attains a crossing resource optimum, not a minimum elapsed extraction time.
Restricting all measurements to final readout removes another ingredient of the adaptive compiler and requires a separate upper-bound analysis.

\paragraph{Networks beyond trees.}
On a tree, the projector resource and the pure-state preparation theorem allow all edge budgets to be attained together.
Cycles provide alternative routes for distributing entanglement, but the individual cut inequalities need not characterize a feasible global resource allocation.
Determining when they do, or how large the gap can be, would extend the single-cut characterization to more general architectures.
Any such statement must specify whether it charges delivered entangled pairs, elementary links, storage, or the quantum depth needed to prepare the resource.

\paragraph{Balanced cuts and other code families.}
The uniform Tanner theorem controls every region within a fixed small fraction of the block, but it does not establish a linear lower bound on all balanced cuts.
Beyond the correctable range, a region can also support logical observables, and the relation between its stabilizer dimension and pure-state entanglement changes.
Even within small regions, extending the theorem to another qLDPC family requires control of products of checks that cancel outside the region.
Good rate and distance constrain aggregate geometry, but they do not by themselves exclude locally fixed qubits or other weakly correlated subsets.
Nor can expansion of a selected presentation supply this missing information, as the annular examples demonstrate.

\paragraph{Noisy extraction.}
Exactness and fault tolerance impose different requirements.
The compiler counts all its quantum operations but allows perfect gates, measurements, and unrestricted classical feedforward.
In particular, errors in long resource chains can accumulate even when simultaneous fusion has constant ideal quantum depth.
A noisy construction would therefore have to account for those errors, the reliability of its measurement records, and repeated correction rounds.
The approximate-instrument bound gives a constraint on a specified approximation to one complete measurement, but it does not prove a noise threshold or an efficient decoder.
Relating the intrinsic cut cost to the overhead of reliable repeated extraction remains open.

\clearpage
\appendix
\addtocontents{toc}{\protect\newpage}
\section{Stabilizer algebra and measurement instruments}
\label{app:proof-details}

This appendix proves the algebraic and instrument identities used in Sections~\ref{sec:models} and~\ref{sec:stabilizer-cut-rank}.
We first relate the stabilizers confined to each side of a cut to matrix ranks and to the correlations of the code state.
We then compute the Schmidt decomposition of a syndrome projector and follow it through each complete measurement branch, and the same calculation gives the sectorwise correctness criterion and the approximate-instrument bound.
All entropies and logarithms are to base two.

\subsection{Rank and local support}
\label{subsec:stabilizer-cut-rank-proof-details}

A stabilizer confined to one side can be measured without crossing the cut, so to count the constraints that remain we subtract the dimensions of the two confined subspaces from the full stabilizer dimension.
The following definition restates the intrinsic cut cost in notation suited to the proofs.

\begin{definition}[Stabilizer subspaces]
\label{def:cross-cut-stabilizer-rank}
Let $\cS\subseteq\mathbb F_2^{2Q}$ be a commuting binary stabilizer space and let $U\sqcup U^c=Q$.
Write $\cS_U$ and $\cS_{U^c}$ for its subspaces supported entirely on the indicated qubits.
Define
\begin{equation}
\chi_U(\cS)
=
\dim\cS-\dim\cS_U-\dim\cS_{U^c}.
\label{eq:cross-cut-stabilizer-rank-definition}
\end{equation}
\end{definition}

To prove Lemma~\ref{lem:cross-cut-rank-formula}, observe that the stabilizers confined to $U$ are exactly those that vanish upon restriction to its complement.
If $H$ has row space $\cS$, the restriction map $\pi_{U^c}:\cS\to\mathbb F_2^{2|U^c|}$ therefore has kernel $\cS_U$ and image equal to the row space of $H_{U^c}$.
Applying rank-nullity on each side gives
\begin{align}
\dim\cS_U&=\rank(H)-\rank(H_{U^c}),\\
\dim\cS_{U^c}&=\rank(H)-\rank(H_U).
\end{align}
Subtracting these local dimensions from $\dim\cS$ proves Eq.~\eqref{eq:cross-cut-rank-formula}, independently of the check matrix.

Ancilla coordinates add zero $X$ and $Z$ columns, because the measured observables act on the data, and these columns change neither side of Eq.~\eqref{eq:cross-cut-rank-formula}.
We specify the initialization of product ancillas as part of the circuit, while the measured stabilizer space remains a property of the data.

\subsection{Correlations of the code state}
\label{subsec:correctable-cut-entropy}

The entropy of the maximally mixed code state relates $\chi_U(\cS)$ to the local stabilizer dimensions.
Writing $P_s$ for the signed stabilizer operator with binary label $s$, we have
\begin{equation}
\omega
=
2^{-n}\sum_{s\in\cS}P_s.
\end{equation}
The partial trace over $U^c$ removes the terms with $s\notin\cS_U$, so the reduced state $\omega_U$ is a normalized stabilizer projector with $2^{|U|-\dim\cS_U}$ equal nonzero eigenvalues.
Its entropy is therefore
\begin{equation}
S(\omega_U)=|U|-\dim\cS_U.
\end{equation}
Since the whole code state has entropy $S(\omega)=k=n-\dim\cS$, subtracting this joint entropy from the two reduced entropies gives
\begin{align}
I(U:U^c)_\omega
&=|U|-\dim\cS_U+|U^c|-\dim\cS_{U^c}-k\nonumber\\
&=\chi_U(\cS).
\end{align}

To express $\chi_U(\cS)$ through the stabilizers supported on $U$ alone, we now assume that erasure of $U$ is correctable.
A Pauli supported on a correctable data region cannot act as a nontrivial logical operator, so if it commutes with every stabilizer, it must itself be a stabilizer.
In terms of the symplectic projection onto $U$, which we denote by $p_U$, this says that the supported centralizer is
\begin{equation}
(p_U\cS)^{\perp_U}
=
\cS^\perp\cap\mathbb F_2^{2U}
=
\cS_U.
\end{equation}
Since a subspace and its symplectic orthogonal complement have dimensions summing to $2|U|$, the centralizer identity gives
\begin{equation}
\dim p_U\cS=2|U|-\dim\cS_U.
\end{equation}
Because the projection discards exactly $\cS_{U^c}$, we can use $\ker(p_U|_{\cS})=\cS_{U^c}$ in the definition of $\chi_U(\cS)$ to obtain
\begin{align}
\chi_U(\cS)
&=\dim p_U\cS-\dim\cS_U\nonumber\\
&=2\bigl(|U|-\dim\cS_U\bigr)
=2S(\omega_U).
\end{align}
Every set with $|U|<d$ is correctable, so this identity holds throughout the distance regime in Lemma~\ref{lem:correctable-cut-entropy}.
An upper bound on $\dim\cS_U$ for the chosen code family therefore yields a lower bound on the mutual information between that data region and its complement in the maximally mixed code state.
For a general region, this mutual information can include classical correlations, but in the correctable case the identity for pure code states below identifies half of it with entanglement entropy.

Correctability also fixes the reduced state on $U$ for every code state.
A Pauli supported in $U$ either anticommutes with a stabilizer, in which case its expectation vanishes in every code state, or commutes with all stabilizers and is, up to its sign, a stabilizer, in which case its expectation is fixed by that sign.
These Pauli expectations determine the reduced density matrix, so $\rho_U=\omega_U$ for every density operator $\rho$ supported in the code space.
In particular, every pure code state $|\psi\rangle$ satisfies
\begin{equation}
S(\psi_U)
=|U|-\dim\cS_U
=\frac12\chi_U(\cS).
\label{eq:pure-code-correctable-entropy}
\end{equation}
For a pure state, this quantity is the entanglement entropy across $U|U^c$.
The supported-stabilizer bound of Theorem~\ref{thm:quantum-tanner-target-window} therefore gives $S(\psi_U)\ge\tau|U|$ for every pure code state and every region in its stated correctable range.

\subsection{Syndrome projectors and adaptive trajectories}
\label{subsec:adaptive-branch-proof}
\label{subsec:raw-record-proof}

We prove the operational bound by resolving the circuit into complete Kraus trajectories.
Exact extraction forces each nonzero trajectory to implement, up to a scalar, the projector of its reported syndrome, and the operator Schmidt rank of that projector then bounds the number of crossing operations on the trajectory.
The argument permits intermediate measurements and arbitrary classical adaptation without using a Clifford normal form.
We first establish the decomposition of the projectors and then analyze the trajectories.

\paragraph{Projector decomposition.}
Fix a data partition $U|U^c$ and a syndrome $s$, and let $\mathsf S^{(s)}$ be the signed stabilizer group whose common $+1$ eigenspace is this syndrome sector.
Its binary space is $\cS$, independent of $s$, and we write
\begin{equation}
\begin{aligned}
r&=\dim\cS, & a&=\dim\cS_U,\\
b&=\dim\cS_{U^c}, & c&=r-a-b.
\end{aligned}
\end{equation}
Let $\Pi_{U,s}$ and $\Pi_{U^c,s}$ be the projectors of the two local signed stabilizer subgroups, and choose representatives $h_t=P_t\otimes Q_t$ of the quotient of $\mathsf S^{(s)}$ by those subgroups, absorbing any phase into one tensor factor.
Grouping the group average for the syndrome projector by these cosets gives
\begin{equation}
\Pi_s
=2^{-c}\sum_{t=1}^{2^c}
(P_t\Pi_{U,s})\otimes(Q_t\Pi_{U^c,s}).
\label{eq:projector-coset-decomposition}
\end{equation}
Every representative commutes with the full stabilizer group, so each local restriction commutes with the corresponding local projector.

To identify this expansion as a Schmidt decomposition, it suffices to show that the factors on each side are mutually orthogonal and of equal norm.
For two different representatives, the Hilbert--Schmidt inner product of their left factors is
\begin{equation}
\operatorname{Tr}\!\left(
\Pi_{U,s}P_t^\dagger P_{t'}\Pi_{U,s}
\right)
=\operatorname{Tr}\!\left(\Pi_{U,s}P_t^\dagger P_{t'}\right).
\end{equation}
Pauli orthogonality makes this trace zero unless $P_t^\dagger P_{t'}$ belongs, up to phase, to the local stabilizer group on $U$.
In that case, multiplying the full stabilizer $h_t^\dagger h_{t'}$ by the corresponding local stabilizer would leave a stabilizer supported entirely on $U^c$, so the two representatives would lie in the same quotient coset, a contradiction.
Distinct left factors are therefore orthogonal, and the same argument applies to the right factors.
With $\|\cdot\|_2$ denoting the Hilbert--Schmidt norm, their squared norms are independent of $t$,
\begin{equation}
\|P_t\Pi_{U,s}\|_2^2=2^{|U|-a},
\qquad
\|Q_t\Pi_{U^c,s}\|_2^2=2^{|U^c|-b}.
\end{equation}
Since $\operatorname{Tr}\Pi_s=2^{n-r}$, normalizing the vectorization gives $2^c$ orthogonal Schmidt terms, each with coefficient
\begin{equation}
\frac{2^{-c}2^{(|U|-a)/2}2^{(|U^c|-b)/2}}
{2^{(n-r)/2}}
=2^{-c/2}.
\end{equation}
This proves Lemma~\ref{lem:projector-choi-schmidt}, including both its rank and entropy assertions, for every syndrome and every cut, and no correctability assumption enters the calculation.

\paragraph{Complete Kraus trajectories.}
For each instrument outcome $o$, choose a finite Kraus representation of its completely positive map, keep $o$ as the outcome used by the controller of the circuit, and introduce a separate index for its Kraus refinement.
Later choices depend on the actual observed record, not on these unobserved refinement indices.
Once a trajectory is fixed, all choices are fixed as well, and the circuit becomes a sequence of one- and two-qubit linear operators, including nonunitary operations, resets, and discarded local environments through their Kraus descriptions.

Product ancillas that start in mixed states can be purified individually with private purifiers on their original sides of the cut.
At the end, refine the trace over all remaining ancillas and purifiers into local computational-basis bra operators.
Any earlier discarded wire can equivalently be contracted when it is discarded, and these local contractions add no crossing operations.
A complete trajectory $\beta$ then gives a single operator $K_\beta$ from the input data space to the fixed output data space.
If $m(\beta)$ is its reported syndrome, exactness states that
\begin{equation}
\sum_{\beta:m(\beta)=s}K_\beta\rho K_\beta^\dagger
=\Pi_s\rho\Pi_s
\quad\text{for every input }\rho.
\label{eq:adaptive-fine-instrument}
\end{equation}
Equality on all density matrices is equality of linear maps, so it remains true after tensoring with the identity on arbitrary references.
Taking Choi matrices yields
\begin{equation}
\sum_{\beta:m(\beta)=s}
|K_\beta\rangle\!\rangle\langle\!\langle K_\beta|
=
|\Pi_s\rangle\!\rangle\langle\!\langle\Pi_s|.
\label{eq:adaptive-pure-choi}
\end{equation}
The right side has rank one and every summand is positive, so each nonzero $K_\beta$ is a scalar multiple of $\Pi_s$.
This conclusion relies on both the outcome probabilities and the conditional quantum output states of exact extraction, and syndrome statistics alone do not imply it.

\paragraph{Rank growth along a trajectory.}
Prepare one local Bell pair between each input data qubit and a private reference on the same side of the hardware cut.
Together with the locally purified product ancillas, the initial pure state has Schmidt rank one across the enlarged partition, and the references and purifiers introduce no crossing interactions and leave the original hardware boundary unchanged.

Let a pure vector on this partition have Schmidt rank $R$.
A product operator acting separately on the two sides maps its $R$ product terms to at most $R$ product terms, even if the operator is nonunitary.
A two-qubit Kraus operator across the cut, in contrast, has an operator Schmidt decomposition
\begin{equation}
M=\sum_{j=1}^{q}A_j\otimes B_j,
\qquad q\le4,
\end{equation}
because the operator space of one qubit has dimension four, so it maps a rank-$R$ vector to a sum of at most $4R$ product terms.
The bound applies to each Kraus operator separately, which avoids any need for monotonicity under averaging or conditioning.
Classical communication only selects the operators on a fixed trajectory and does not itself change the quantum vector.

After $g_L(\beta)$ crossing operations and the final local contractions, the data--reference vector is proportional to $|K_\beta\rangle\!\rangle$ and has Schmidt rank at most $4^{g_L(\beta)}$.
For every nonzero trajectory, Eq.~\eqref{eq:adaptive-pure-choi} and Lemma~\ref{lem:projector-choi-schmidt} therefore give
\begin{equation}
2^{\chi_{Q\cap L}(\cS)}
\le4^{g_L(\beta)},
\qquad
\chi_{Q\cap L}(\cS)\le2g_L(\beta).
\end{equation}
Each trajectory contains at most $\depth(C)|\partial_{G_{\hw}}L|$ crossing operations, since the depth is a uniform bound and each layer has disjoint supports on hardware edges.
For a nonzero boundary, division and maximization over cuts prove
\begin{equation}
\depth(C)\ge\frac12
\max_{L:\,|\partial L|>0}
\frac{\chi_{Q\cap L}(\cS)}{|\partial L|}.
\label{eq:intro-full-model}
\end{equation}
If the boundary is empty, the same rank inequality makes a positive cut cost impossible.
When the crossing primitives consist of unitaries and two-qubit Pauli measurements, the latter have rank at most two in their operator Schmidt decomposition, and the same multiplication argument gives Eq.~\eqref{eq:weighted-crossing-primitives}.

\begin{proposition}[Sectorwise instrument rigidity]
\label{prop:sectorwise-instrument-rigidity}
Let $\{\Pi_s\}_s$ be the mutually orthogonal, nonzero syndrome projectors on the fixed input and output data space, with $\sum_s\Pi_s=I$.
Let $\{\mathcal E_o\}_o$ be a finite instrument whose sum is trace preserving, let $s(o)$ be its reported syndrome, and let $P_o$ be a physical Pauli operator specified by the observed record $o$.
Assume that for every syndrome $t$ and every unit vector $|\psi\rangle\in\operatorname{ran}\Pi_t$,
\begin{equation}
\begin{aligned}
\sum_{o:s(o)\ne t}\operatorname{Tr}\mathcal E_o(|\psi\rangle\langle\psi|)&=0,\\
\sum_{o:s(o)=t}P_o^\dagger\mathcal E_o(|\psi\rangle\langle\psi|)P_o&=|\psi\rangle\langle\psi|.
\end{aligned}
\label{eq:sectorwise-correctness}
\end{equation}
For any Kraus refinement $\mathcal E_o(\rho)=\sum_\alpha K_{o,\alpha}\rho K_{o,\alpha}^\dagger$, there are scalars $c_{o,\alpha}$ such that
\begin{equation}
\begin{aligned}
K_{o,\alpha}&=c_{o,\alpha}P_o\Pi_{s(o)},\\
\sum_{o:s(o)=s}\sum_\alpha|c_{o,\alpha}|^2&=1\qquad\text{for every }s.
\end{aligned}
\label{eq:sectorwise-kraus-rigidity}
\end{equation}
Consequently, for every data operator $X$,
\begin{equation}
\sum_{o:s(o)=s}P_o^\dagger\mathcal E_o(X)P_o=\Pi_sX\Pi_s.
\label{eq:sectorwise-full-instrument}
\end{equation}
This equality also holds after tensoring the maps with the identity on any reference system.
Under the physical resource assumptions of Definition~\ref{def:adaptive-local-extraction}, the circuit obeys Theorem~\ref{thm:adaptive-instrument-cut-bound} and the primitive-dependent refinements of Remark~\ref{rem:adaptive-sharp-constant}, even when $P_o$ is tracked classically rather than physically removed.
\end{proposition}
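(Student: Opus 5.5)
The plan is to prove Proposition~\ref{prop:sectorwise-instrument-rigidity} one Kraus operator at a time, and then reassemble the instrument. Fix an outcome $o$ with $s(o)=s$ and a refinement index $\alpha$, and set $\tilde K_{o,\alpha}=P_o^\dagger K_{o,\alpha}$. Since $P_o$ is unitary, the first condition in Eq.~\eqref{eq:sectorwise-correctness} applies equally to $\tilde K_{o,\alpha}$.

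\emph{Step 1: vanishing on wrong sectors.} Take a sector $t\ne s$. The first condition says $\sum_\alpha\|K_{o,\alpha}\psi\|^2=0$ for every unit $\psi\in\operatorname{ran}\Pi_t$. Each term is nonnegative, so $K_{o,\alpha}\Pi_t=0$. Because $\sum_t\Pi_t=I$, this gives $K_{o,\alpha}=K_{o,\alpha}\Pi_s$.

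\emph{Step 2: scalar action on the correct sector.} For $\psi\in\operatorname{ran}\Pi_s$, the second condition states
\[
\sum_{o,\alpha}\tilde K_{o,\alpha}|\psi\rangle\langle\psi|\tilde K_{o,\alpha}^\dagger=|\psi\rangle\langle\psi|,
\]
with the sum over $o$ with $s(o)=s$ and all $\alpha$. The right side is rank one and every term on the left is positive, so each $\tilde K_{o,\alpha}\psi=\lambda_{o,\alpha}(\psi)\psi$ for some scalar.

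A linear map that has every vector of a subspace as an eigenvector is a scalar multiple of the identity on that subspace. To see this, take independent $\psi_1,\psi_2$ in the sector and apply the map to $\psi_1+\psi_2$. Comparing coefficients forces $\lambda(\psi_1)=\lambda(\psi_2)$. If the sector has dimension one, there is nothing to check.

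Hence $\tilde K_{o,\alpha}\Pi_s=c_{o,\alpha}\Pi_s$, and combined with Step~1 this gives $K_{o,\alpha}=c_{o,\alpha}P_o\Pi_{s(o)}$. Substituting back into the second condition with any nonzero $\psi$, which exists because $\Pi_s\ne0$, yields $\sum|c_{o,\alpha}|^2=1$. This is Eq.~\eqref{eq:sectorwise-kraus-rigidity}.

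\emph{Step 3: the full instrument.} Plugging the rigid Kraus form into $\sum_{o:s(o)=s}P_o^\dagger\mathcal E_o(X)P_o$ gives
\[
\sum|c_{o,\alpha}|^2\,\Pi_sX\Pi_s=\Pi_sX\Pi_s
\]
for every operator $X$. The identity holds as an equality of linear maps expressed through Kraus operators, so tensoring with $\operatorname{id}$ on a reference preserves it. This is Eq.~\eqref{eq:sectorwise-full-instrument}.

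\emph{Step 4: the resource consequence.} For a circuit under Definition~\ref{def:adaptive-local-extraction}, refine it into complete trajectories $\beta$ as in Appendix~\ref{subsec:adaptive-branch-proof}. Steps~1--2 apply verbatim to the trajectory operators $K_\beta$, giving $K_\beta=c_\beta P_{o(\beta)}\Pi_s$. The Pauli $P_o$ is a tensor product of single-qubit operators, so multiplying by $P_o^\dagger$ is a local operation on each side of any cut and leaves the Schmidt rank of $|K_\beta\rangle\!\rangle$ unchanged.

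Therefore the vectorization of a nonzero trajectory still has Schmidt rank $2^{\chi_{Q\cap L}}$ by Lemma~\ref{lem:projector-choi-schmidt}. The rank-growth argument then gives Theorem~\ref{thm:adaptive-instrument-cut-bound} and the primitive-weighted counts of Remark~\ref{rem:adaptive-sharp-constant}, whether or not the frame $P_o$ is physically removed.

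The main obstacle is the dependence structure in Step~4: $P_o$ must depend only on the observed record, not on the refinement index. This must be checked so that the same $P_o$ multiplies every refinement of a record. That is exactly why a frame chosen from a hidden Kraus index is excluded from the allowance.
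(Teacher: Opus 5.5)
Your proposal is correct and follows the paper's proof essentially step for step: vanishing on wrong sectors by nonnegativity, the rank-one positivity and eigenvector-implies-scalar argument on the correct sector, normalization by trace, substitution for the full instrument (including reference extensions), and invariance of the operator Schmidt rank under the local Pauli frame for the circuit consequence. One small point is that the record-dependence of $P_o$ is a hypothesis of the proposition rather than an obstacle you must check. It matters for whether the frame can physically be applied, not for the algebra, which would go through unchanged.
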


\begin{proof}
Fix an observed outcome $o$ with $s(o)=s$ and a Kraus index $\alpha$.
For $t\ne s$, the first condition in Eq.~\eqref{eq:sectorwise-correctness} is a sum of nonnegative squared norms, so $K_{o,\alpha}\Pi_t=0$.
For a unit vector $|\psi\rangle\in\operatorname{ran}\Pi_s$, the second condition is a sum of positive rank-one operators equal to $|\psi\rangle\langle\psi|$.
Every vector $P_o^\dagger K_{o,\alpha}|\psi\rangle$ is therefore proportional to $|\psi\rangle$.
A linear operator for which every vector of a subspace is an eigenvector is scalar on that subspace, because applying it to two basis vectors and their sum makes their eigenvalues equal.
This is the scalar-on-a-code-subspace observation underlying the exact recovery criterion of Knill and Laflamme~\cite[Theorem~III.1]{knillTheoryQuantumErrorcorrecting1997}.
Applied here, it gives $P_o^\dagger K_{o,\alpha}\Pi_s=c_{o,\alpha}\Pi_s$, including in the rank-one-sector case, where the assertion is immediate.
Summing $K_{o,\alpha}\Pi_t$ over the complete family of sectors proves the first identity in Eq.~\eqref{eq:sectorwise-kraus-rigidity}.
Taking the trace of the correct-sector output gives the second identity.
Substitution proves Eq.~\eqref{eq:sectorwise-full-instrument} on every operator, including off-diagonal blocks between different sectors, and hence on arbitrary mixed inputs and arbitrary extensions by a reference.

To apply this conclusion to the hardware circuit, purify each mixed product ancilla locally and refine all Kraus indices and final ancillary traces as in the complete-trajectory proof.
Each complete trajectory $\beta$ then has $K_\beta=c_\beta P_{o(\beta)}\Pi_{s(o(\beta))}$.
For $U=Q\cap L$, a physical Pauli has the factorization
\begin{equation}
\begin{aligned}
P_o&=e^{i\theta_o}\bigotimes_{q\in Q}P_{o,q}\\
&=e^{i\theta_o}P_{o,U}\otimes P_{o,U^c},
\end{aligned}
\end{equation}
where $P_{o,q}\in\{I,X,Y,Z\}$.
Multiplication by these local unitaries preserves the operator Schmidt rank and the Schmidt coefficients of the normalized vectorization~\cite[Sec.~II]{nielsenQuantumDynamicsPhysical2003}, so every nonzero complete trajectory has operator Schmidt rank $2^{\chi_U(\cS)}$ by Lemma~\ref{lem:projector-choi-schmidt}.
On the product of local data--reference Bell pairs, crossing Kraus operators of operator Schmidt ranks $r_1,\ldots,r_g$ can produce Schmidt rank at most $\prod_jr_j$.
Comparing the two ranks gives
\begin{equation}
\chi_U(\cS)\le\sum_{j=1}^{g}\log_2r_j\le2g_L(\beta),
\end{equation}
with the stronger bound $\chi_U(\cS)\le g_L(\beta)$ when every crossing Kraus operator has operator Schmidt rank at most two.
The same layer-capacity argument then gives the depth bounds directly for the framed circuit.
\end{proof}

\paragraph{Meaning of the frame convention.}
The frame must be determined by the actual observed record, not by hidden Kraus indices, and the reported syndrome refers to the input sector even if $P_o$ changes the output sector.
For $P_o=I$, Proposition~\ref{prop:sectorwise-instrument-rigidity} is equivalent to the original all-input L\"uders requirement.
Permitting a recorded Pauli frame is an explicit extension of the output convention in Definition~\ref{def:adaptive-local-extraction}, since forgetting the frame need not leave a L\"uders instrument.
If a physical unframed output is required, all frame corrections can be applied in at most one additional layer of one-qubit gates, with no additional crossing operation.
An arbitrary entangling change of output encoding is not a physical Pauli frame and need not preserve the cut bound.

\subsection{Approximate extraction on the reference input}
\label{subsec:approximate-branch-proof}

We now prove Corollary~\ref{cor:approximate-instrument-cut-bound}.
The circuit obeys the resource assumptions of Definition~\ref{def:adaptive-local-extraction}, namely the one-qubit sites, edge-supported two-qubit primitives, product ancillas, fixed data interface, finite classical records, and worst-case depth convention, but it need not implement the ideal instrument exactly.
Its output is a trace-preserving instrument on the same quantum data and classical syndrome spaces as the ideal channel.
We first compare the fidelity with an average overlap and then control that overlap through the Schmidt rank of each trajectory.

\paragraph{Fidelity and conditional overlap.}
Let $r=\dim\cS$ and write $|\phi_s\rangle=|\widehat\Pi_s\rangle$ for the normalized vectorized syndrome projector.
On the normalized reference input $\Phi$ of Eq.~\eqref{eq:reference-bell-input}, the ideal and actual outputs have the form
\begin{equation}
\begin{aligned}
\sigma&=\sum_s p_s|s\rangle\langle s|\otimes|\phi_s\rangle\langle\phi_s|,\\
\rho&=\sum_s q_s|s\rangle\langle s|\otimes\rho_s.
\end{aligned}
\label{eq:approximate-cq-outputs}
\end{equation}
The ideal probability is $p_s=2^{-r}$, since $\operatorname{Tr}\Pi_s=2^{n-r}$ and the data marginal of $\Phi$ is maximally mixed.
Terms with $q_s=0$ can be omitted, and we put $f_s=\langle\phi_s|\rho_s|\phi_s\rangle$ and $W=\sum_s q_sf_s$.
The block-diagonal form and the purity of each ideal conditional state give
\begin{equation}
\begin{aligned}
F(\rho,\sigma)
&=\left(\sum_s\sqrt{p_sq_sf_s}\right)^2\\
&\le\left(\sum_s p_s\right)\left(\sum_s q_sf_s\right)
=W.
\end{aligned}
\label{eq:fidelity-overlap-comparison}
\end{equation}
The inequality is the Cauchy--Schwarz inequality, and $F$ is the squared fidelity of Eq.~\eqref{eq:approximate-test-fidelity}.

\paragraph{Overlap on a refined trajectory.}
Refine the circuit and its final local ancillary traces as in Appendix~\ref{subsec:adaptive-branch-proof}, using local purifications to handle mixed product ancillas without adding crossing operations.
For each nonzero complete trajectory $\beta$, let $q_\beta$ be its probability on $\Phi$, let $s(\beta)$ be its reported syndrome, and let $|v_\beta\rangle$ be its normalized data--reference output.
These weights sum to one, and
\begin{equation}
q_s\rho_s
=\sum_{\beta:s(\beta)=s}q_\beta|v_\beta\rangle\langle v_\beta|.
\end{equation}
The rank-growth argument did not use exactness, so it still gives $\operatorname{SR}(v_\beta)\le4^{g_L(\beta)}$.

Let $D=2^{\chi_U}$ be the Schmidt rank of $|\phi_s\rangle$ at the chosen cut.
By Lemma~\ref{lem:projector-choi-schmidt}, its reduced state on either side is a rank-$D$ projector divided by $D$.
If a normalized vector $|v\rangle$ has Schmidt rank $R$ and $P_v$ projects onto its left Schmidt support, then
\begin{equation}
\begin{aligned}
|\langle\phi_s|v\rangle|^2
&\le\|(P_v\otimes I)|\phi_s\rangle\|^2\\
&\le\min\{1,R/D\}.
\end{aligned}
\label{eq:flat-target-overlap}
\end{equation}
The first step is the Cauchy--Schwarz inequality, and the second uses the flat reduced spectrum, which gives the usual Schmidt-rank overlap bound \cite{terhalSchmidtNumberDensity2000}.
Applying this bound to the trajectory vectors and averaging proves the central estimate
\begin{equation}
F\le W
\le\sum_\beta q_\beta
\min\!\left\{1,2^{2g_L(\beta)-\chi_U}\right\}.
\label{eq:approximate-crossing-moment}
\end{equation}

\paragraph{Depth and probability.}
Write $t=\depth(C)$ and $b_L=|\partial_{G_{\hw}}L|$.
Each trajectory satisfies $g_L(\beta)\le t b_L$, so for $F>0$ and $b_L>0$, Eq.~\eqref{eq:approximate-crossing-moment} gives $F\le2^{2tb_L-\chi_U}$ and hence Eq.~\eqref{eq:approximate-cut-depth}, whereas for $b_L=0$ it gives $F\le2^{-\chi_U}$.

To convert the overlap estimate into a probability bound, suppose more generally that $W\ge\alpha>0$ and choose $0<\theta<\alpha$.
Let $H_\theta$ be the event $g_L(\beta)\ge(\chi_U+\log_2\theta)/2$, with probability $p_\theta$ under the test input.
On its strict complement, the trajectory overlap is at most $\theta$, while on $H_\theta$ it is at most one.
The two cases together give
\begin{equation}
\alpha\le W\le\theta+(1-\theta)p_\theta,
\qquad
p_\theta\ge\frac{\alpha-\theta}{1-\theta}.
\label{eq:approximate-general-tail}
\end{equation}
Taking $\alpha=F$ and $\theta=F/2$ yields
\begin{equation}
\Pr_\Phi[g_L(\beta)\ge a_F]
\ge\frac{F}{2-F}\ge\frac F2.
\end{equation}
This proves Eq.~\eqref{eq:approximate-cut-tail}, including its threshold convention.
The crossing count depends only on the actual circuit trajectory, so summing the hidden Kraus refinements does not change the physical probability of this event.
This probability is specific to $\Phi$, or equivalently to its maximally mixed data marginal, and it is not asserted for arbitrary data inputs.

\paragraph{Diamond-distance guarantee.}
A diamond-distance guarantee also controls the same overlap, as we see by defining the projector
\begin{equation}
E=\sum_s|s\rangle\langle s|\otimes|\phi_s\rangle\langle\phi_s|.
\end{equation}
It satisfies $0\le E\le I$, $\operatorname{Tr}(E\sigma)=1$, and $\operatorname{Tr}(E\rho)=W$.
For the normalized outputs, the variational bound for trace distance gives
\begin{equation}
W\ge1-\frac12\|\rho-\sigma\|_1.
\end{equation}
A guarantee $\tfrac12\|\mathcal E-\mathcal M_{\cS}\|_\diamond\le\varepsilon<1$ therefore implies $W\ge1-\varepsilon$ on the test input.
Using $\alpha=1-\varepsilon$ in the same moment and probability arguments yields
\begin{equation}
\depth(C)\ge
\frac{\chi_U-\log_2(1/(1-\varepsilon))}{2b_L}
\label{eq:approximate-diamond-depth}
\end{equation}
for $b_L>0$, together with the corresponding probability bound with $F$ replaced by $1-\varepsilon$.
If the error convention instead bounds the full diamond norm by $\varepsilon$, the parameter is $1-\varepsilon/2$.
This derivation uses the overlap witness directly and does not require a lower bound of $1-\varepsilon$ on the squared fidelity.

The same global reference input serves for all cuts, and only the grouping of its data--reference pairs changes from one cut to another.
Both the test fidelity and a diamond-distance guarantee can therefore be used when the depth lower bound is optimized over hardware cuts.
These conclusions concern the implemented instrument under the stated circuit resources.

\section{Attaining protocols and tree resources}
\label{app:single-cut-protocols}

The attaining protocol rests on the fact that one shared Bell pair measures a crossing Pauli parity without revealing either local factor.
We prove that fact on every measurement branch, use it to attain the single-cut costs of Theorem~\ref{thm:single-cut-resource}, and explain the relation to earlier entanglement-cost results.
We then give the sparse gross-code basis and the projector-state reduction that makes all tree-edge costs attainable at once.
Throughout the single-cut arguments, Alice holds $U$, Bob holds $U^c$, and $\chi=\chi_U(\cS)$, with the local processing and storage of both parties unrestricted.

\subsection{The remote-parity protocol}
\label{subsec:single-cut-all-branches}

\paragraph{A cut-adapted basis and output labels.}
Fix independent commuting Hermitian Pauli generators $g_1,\ldots,g_r$ of the signed stabilizer group, which does not contain $-I$.
Lemma~\ref{lem:minimum-crossing-generators} gives a basis $b_1,\ldots,b_r$ of $\cS$ in which $b_1,\ldots,b_\chi$ cross the cut and every other element is supported on $U$ or on $U^c$.
Write $b_k=\sum_jN_{kj}g_j$ in binary coordinates, with $N$ invertible, and lift $b_k$ to the signed operator $h_k=\prod_jg_j^{N_{kj}}$.
Each $h_k$ is a Hermitian Pauli operator because the $g_j$ commute, and the $h_k$ generate the same signed group.
The common eigenspace with syndrome $s$ for the $g_j$ is the common eigenspace with syndrome $Ns$ for the $h_k$, so measuring the $h_k$ and applying $N^{-1}$ to their outcomes implements the same instrument.
If the output register is labeled by a possibly dependent list $f_a=(-1)^{d_a}\prod_jg_j^{M_{aj}}$ of commuting Hermitian Pauli observables whose binary representatives lie in $\cS$, the protocol writes the bits $m_a=d_a+\sum_jM_{aj}s_j$ modulo two.
These labels are affine functions of the independent syndrome, so combinations of labels that are inconsistent with the group relations occur with probability zero and carry no additional information.

\paragraph{The remote-parity gadget.}
Let $P$ be a crossing Hermitian Pauli operator, and write $P=P_A\otimes P_B$ with Hermitian Pauli factors, absorbing a sign into $P_A$.
Alice and Bob share $|\Phi\rangle_{ab}=(|00\rangle+|11\rangle)/\sqrt2$ on an ancilla $a$ of Alice and an ancilla $b$ of Bob.
Alice applies controlled-$P_A$ from $a$ to her data, and Bob applies controlled-$P_B$ from $b$ to his data.
A controlled Pauli operator on several qubits of one party is a product of controlled single-qubit Pauli operators, and controlled-$(-P_A)$ differs from controlled-$P_A$ by a Pauli $Z$ on the control, so both operations are local Clifford unitaries.
To verify the quantum output, apply these operations to any vector $|\psi\rangle$ of the data and any references, so that the joint state becomes
\begin{equation}
\frac{1}{\sqrt2}\bigl(|00\rangle_{ab}\otimes I+|11\rangle_{ab}\otimes P\bigr)|\psi\rangle.
\end{equation}
Reading $a$ and $b$ in the $X$ basis with outcomes $u$ and $v$ contracts the ancillas with $\langle u|_X\langle v|_X$, where $\langle u|_X|x\rangle=(-1)^{ux}/\sqrt2$.
The resulting Kraus operator on the data and references is
\begin{equation}
\begin{gathered}
K_{uv}=\frac{I+(-1)^{u+v}P}{2\sqrt2}=\frac{1}{\sqrt2}P_{u\oplus v},\\
P_t=\frac{I+(-1)^tP}{2}.
\end{gathered}
\label{eq:remote-parity-kraus}
\end{equation}
Each parity $t=u\oplus v$ arises from two raw pairs, so $\sum_{u\oplus v=t}K_{uv}\rho K_{uv}^\dagger=P_t\rho P_t$ for every input $\rho$.
No correction of the data is needed, and the raw bit $u$ is uniformly distributed given $t$ and carries no further information about the data.
The measured ancillas are left in $X$ eigenstates, which local Pauli corrections return to $|+\rangle$, and the Bell pair is consumed.
Beckman et al.\ already measure $XX$ and $ZZ$ nondestructively in this way, with CNOT gates and one shared pair per parity, and attribute the extension to CSS codes to Steane \cite{beckmanCausalLocalizableQuantum2001}, while Lim, Hhan, and Kwon use the controlled-Pauli form for nondestructive discrimination of stabilizer states \cite{limTradeoffInformationGain2025}.
A local element $h$ supported on one side is measured by the party holding that side with an ancilla in $|+\rangle$, a controlled-$h$ operation, and an $X$ readout, which applies $(I\pm h)/2$ without the factor $1/\sqrt2$.

\paragraph{All branches of the protocol.}
Measure the $\chi$ crossing elements $h_1,\ldots,h_\chi$ with the gadget and the $r-\chi$ local elements with local ancillas, in one fixed order.
Although the restrictions of two crossing elements to one side may anticommute, the elements themselves commute, so their projectors commute.
A raw branch assigns an outcome bit $s'_k$ to each basis element, together with a second, uniformly distributed raw bit for each gadget.
By Eq.~\eqref{eq:remote-parity-kraus}, its Kraus operator on the data is
\begin{equation}
2^{-\chi/2}\prod_{k=1}^r\frac{I+(-1)^{s'_k}h_k}{2}=2^{-\chi/2}\Pi_{s'},
\end{equation}
where $\Pi_{s'}$ is the joint projector for the $h_k$.
The $2^\chi$ raw branches with the same $s'$ therefore sum to $\Pi_{s'}\rho\Pi_{s'}$ on every input, including inputs entangled with arbitrary references, and the relabeling by $N^{-1}$ followed by the affine output map described above gives the required instrument.
No quantum operation depends on a remote outcome, and classical communication is needed only to combine the raw bits into the output register.
The protocol thus establishes sufficiency in part~\ref{it:single-cut-bell} of Theorem~\ref{thm:single-cut-resource} without any normal-form theorem.

\subsection{Shared pairs, crossing layers, and local storage}
\label{subsec:preparing-shared-pairs}
\label{subsec:crossing-layers-storage}

\paragraph{Preparing the shared pairs.}
A single crossing CNOT gate from Alice's $|+\rangle$ to Bob's $|0\rangle$ prepares $|\Phi\rangle$.
For two pairs, Alice prepares $|\Phi\rangle_{aa'}$ and Bob prepares $|\Phi\rangle_{bb'}$ locally, and a crossing SWAP gate on $a$ and $b$ produces
\begin{equation}
|\Phi\rangle_{ba'}\otimes|\Phi\rangle_{ab'},
\end{equation}
two Bell pairs shared across the cut \cite{eisertOptimalLocalImplementation2000}.
The data do not move, since $a,a',b,b'$ are ancillas.
Using $\lfloor\chi/2\rfloor$ SWAP gates and, when $\chi$ is odd, one CNOT gate therefore prepares the $\chi$ pairs that Appendix~\ref{subsec:single-cut-all-branches} consumes, which proves the upper bounds in parts~\ref{it:single-cut-cnot} and~\ref{it:single-cut-general} of Theorem~\ref{thm:single-cut-resource}, including $\chi=1$.
The same Bell state is prepared by a crossing CZ gate followed by a Hadamard on Bob's qubit, since $(I\otimes H)\operatorname{CZ}|++\rangle=|\Phi\rangle$, and a crossing $ZZ$ measurement on $|++\rangle$ followed by an outcome-dependent local Pauli correction also prepares $|\Phi\rangle$.

To prove the matching lower bounds, apply the rank argument of Appendix~\ref{subsec:adaptive-branch-proof}, which allows arbitrary local instruments, with Alice's and Bob's laboratories as the two sides.
On the reference input, every nonzero complete Kraus branch must end in a vector proportional to the flat state of Lemma~\ref{lem:projector-choi-schmidt}, whose Schmidt rank is $2^\chi$.
Local Kraus operators cannot increase the Schmidt rank, whereas shared Bell pairs contribute an initial rank $2^m$, a crossing CNOT gate has operator Schmidt rank two, and a crossing two-qubit Kraus operator has operator Schmidt rank at most four.
Hence $m\ge\chi$, at least $\chi$ crossing CNOT gates, and at least $\lceil\chi/2\rceil$ arbitrary crossing instruments are needed on each nonzero branch, which completes the proof of parts~\ref{it:single-cut-bell}--\ref{it:single-cut-general}.

\paragraph{Crossing layers and local storage.}
Suppose that the crossing links have a maximum matching $\mathcal M$ of size $\nu\ge1$.
In one layer, the crossing operations act on disjoint qubits and hence on at most $\nu$ links, so a branch with $B$ layers containing crossing operations applies at most $\nu B$ crossing operations.
Parts~\ref{it:single-cut-cnot} and~\ref{it:single-cut-general} therefore give $B\ge\chi/\nu$ for CNOT gates and $B\ge\chi/(2\nu)$ for arbitrary instruments.
For the upper bound, the links of $\mathcal M$ suffice.
In each crossing layer, apply a CNOT gate or a SWAP gate on every link of $\mathcal M$, as described above, and between crossing layers move the prepared halves from the link qubits into fresh local storage by local operations.
With CNOT gates there are $\chi$ gate slots, which fit into $\lceil\chi/\nu\rceil$ layers.
With SWAP gates, together with one CNOT gate when $\chi$ is odd, there are $\lceil\chi/2\rceil$ slots, which fit into $\lceil\lceil\chi/2\rceil/\nu\rceil=\lceil\chi/(2\nu)\rceil$ layers.
After all pairs are stored, the parties run the protocol of Appendix~\ref{subsec:single-cut-all-branches}, which proves part~\ref{it:single-cut-layers}.

\subsection{A normal form for one cut}
\label{subsec:bipartite-normal-form}

The following stabilizer-group formulation of the normal form of Audenaert and Plenio \cite{audenaertEntanglementMixedStabilizer2005} separates Bell correlations from classical parity correlations, and it also connects the cut optimum to earlier nondestructive-discrimination bounds.

\begin{lemma}[Bipartite normal form]
\label{lem:bipartite-normal-form}
Let $p$ be half the rank of the form $\omega(v,w)=\langle v_U,w_U\rangle$ on $\cS$ and put $q=\chi-2p$.
There are local Clifford unitaries $C_U\otimes C_{U^c}$ that map the stabilizer group, up to signs, to the group generated by $X_{a_i}X_{b_i}$ and $Z_{a_i}Z_{b_i}$ for $i\le p$, by $Z_{c_j}Z_{d_j}$ for $j\le q$, and by single-qubit $Z$ operators on further qubits of each side, where all listed qubits are distinct, the $a_i$ and $c_j$ lie in $U$, and the $b_i$ and $d_j$ lie in $U^c$.
\end{lemma}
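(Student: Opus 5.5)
The plan is to do symplectic linear algebra on the quotient and then lift it to local Cliffords. Write $V=\cS/(\cS_U+\cS_{U^c})$, which has dimension $\chi$. The form $\omega(v,w)=\langle v_U,w_U\rangle$ on $\cS$ vanishes whenever one argument lies in $\cS_U+\cS_{U^c}$. For $v\in\cS_{U^c}$ this holds because $v_U=0$. For $v\in\cS_U$ it holds because $\langle v_U,w_U\rangle=\langle v,w\rangle=0$ by commutativity of $\cS$. Hence $\omega$ descends to an alternating form on $V$. Note also that $\langle v_U,w_U\rangle=\langle v_{U^c},w_{U^c}\rangle$ for $v,w\in\cS$, so the same form is seen from either side.

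By the standard classification of alternating forms over $\mathbb F_2$, I will choose a basis of $V$, lifted to crossing elements of $\cS$, of the form
\[
x_1,z_1,\ldots,x_p,z_p,y_1,\ldots,y_q,
\]
with $\omega(x_i,z_j)=\delta_{ij}$, all other pairings zero, and the $y_j$ spanning the radical. This gives $2p+q=\chi$. Together with bases of $\cS_U$ and $\cS_{U^c}$, these lifts form a basis of $\cS$.

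Next I restrict to $U$. The restrictions to $U$ of the lifts, together with $\cS_U$, should form a linearly independent set in $\mathbb F_2^{2U}$. The reason is that the kernel of restriction to $U$ on $\cS$ is $\cS_{U^c}$, and the lifts are independent modulo $\cS_U+\cS_{U^c}$. Their symplectic Gram matrix on $U$ is block form. It has hyperbolic pairs among the $(x_i)_U,(z_i)_U$, and the $(y_j)_U$ and the elements of $\cS_U$ are isotropic and orthogonal to everything listed. For $\cS_U$ this again uses commutativity with all of $\cS$.

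I will then invoke the standard fact that any independent list in $\mathbb F_2^{2m}$ with such a Gram matrix extends to a full symplectic basis. From that basis I get a Clifford $C_U$ with the following images up to sign:
\[
(x_i)_U\mapsto X_{a_i},\quad (z_i)_U\mapsto Z_{a_i},\quad (y_j)_U\mapsto Z_{c_j},
\]
and the basis elements of $\cS_U$ are sent to single $Z$'s on further distinct qubits of $U$. I do the same on $U^c$ to obtain $C_{U^c}$. The symmetry of $\omega$ noted above guarantees that the $U^c$ restrictions have the identical Gram structure, so the corresponding targets are $X_{b_i},Z_{b_i},Z_{d_j}$ and single $Z$'s. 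Under $C_U\otimes C_{U^c}$ each crossing generator becomes, up to sign, $X_{a_i}X_{b_i}$, $Z_{a_i}Z_{b_i}$ or $Z_{c_j}Z_{d_j}$, and the local generators become single-qubit $Z$'s, as claimed.

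The main obstacle is the independence claim in the second step, together with the requirement that the listed qubits be distinct. Distinctness follows once everything sits in a single symplectic basis on each side. Independence needs one careful check: if a combination of restrictions to $U$ vanished, then the corresponding element of $\cS$ would lie in $\cS_{U^c}$, contradicting the choice of the lifts as independent in $V$. I expect the rest to be routine applications of the symplectic Gram–Schmidt extension, with signs absorbed into the phrase ``up to signs.''
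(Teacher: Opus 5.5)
Your proof is correct and follows essentially the same route as the paper. Both arguments apply symplectic Gram--Schmidt to the crossing quotient $\cS/(\cS_U+\cS_{U^c})$, use global isotropy to read $\omega$ from either side, and realize a local symplectic-basis extension on each side by a Clifford unitary. The only difference is bookkeeping: the paper first maps $\cS_U$ and $\cS_{U^c}$ to single-qubit $Z$ operators and strips those factors from the crossing elements before extending on the remaining qubits, checking explicitly that $p+q$ fits. You instead place the local bases in the radical of the same independent list and extend once, which makes the stripping step unnecessary and absorbs the dimension count into the standard extension fact.
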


\begin{proof}
The form $\omega$ is alternating, so its rank is even.
To construct the unitaries, first choose bases of $\cS_U$ and $\cS_{U^c}$.
These subspaces are isotropic, so local Clifford unitaries map them to single-qubit $Z$ operators on distinct qubits of the respective sides.
Every element of $\cS$ commutes with these operators, so it acts on their qubits only by $I$ or $Z$, and multiplying by local basis elements removes these factors.
The remaining crossing elements span a $\chi$-dimensional space $V$ that complements $\cS_U\oplus\cS_{U^c}$, and restriction to either side is injective on $V$, because an element of $V$ with zero restriction to one side would be a local stabilizer on the other side.
The form $\omega$ vanishes whenever one argument is local, because $\cS$ is isotropic, so its rank on $\cS$ equals its rank on $V$.
Global isotropy also gives $\omega(v,w)=\langle v_{U^c},w_{U^c}\rangle$.
Symplectic Gram--Schmidt reduction yields a basis of $V$ consisting of $p$ pairs $(e_i,f_i)$ with $\omega(e_i,f_i)=1$ and all other values zero, together with $q$ vectors in the radical of $\omega$.
On Alice's side, the restrictions of this basis are independent with this Gram matrix, and adjoining dual partners for the radical vectors extends them to part of a symplectic basis of her remaining qubits.
Such partners exist because a subspace whose form has rank $2p$ and a $q$-dimensional radical lies in a nondegenerate subspace of dimension $2(p+q)$, so $p+q$ does not exceed the number of remaining qubits.
A local Clifford unitary therefore maps the restrictions of $e_i$, $f_i$, and the radical vectors to $X_{a_i}$, $Z_{a_i}$, and $Z_{c_j}$ on distinct qubits.
The restrictions to Bob's side have the same Gram matrix, and a local Clifford unitary maps them to $X_{b_i}$, $Z_{b_i}$, and $Z_{d_j}$.
The basis elements then become $X_{a_i}X_{b_i}$, $Z_{a_i}Z_{b_i}$, and $Z_{c_j}Z_{d_j}$, and signs can be absorbed by local Pauli operators or recorded as an affine relabeling of outcomes.
\end{proof}

The lemma shows that the maximally mixed code state is locally equivalent to $p$ Bell pairs, $q$ classically correlated pairs, and a product of local states, so its mutual information across the cut is $2p+q=\chi$.
It also recovers Lemma~\ref{lem:projector-choi-schmidt}, since the vectorized projectors of the three kinds of factor have $4$, $2$, and $1$ equal Schmidt coefficients, respectively, and $4^p2^q=2^\chi$.
The normal form explains these correlations, while the direct projector and remote-parity proofs establish the resource bounds without it.

\subsection{Reductions to prior entanglement-cost bounds}
\label{subsec:single-cut-prior-work}

The Bell-pair lower bound also follows from Theorem~2 of Lim, Hhan, and Kwon, which requires at least $\log_2 K$ ebits for perfect nondestructive local discrimination of $K$ equally likely orthogonal maximally entangled states \cite{limTradeoffInformationGain2025}.
To apply it, use Lemma~\ref{lem:bipartite-normal-form} and choose all four Bell states on each of the $p$ pairs constrained by $XX$ and $ZZ$, and $|\Phi^+\rangle$ or $|\Psi^+\rangle$ on each of the $q$ pairs constrained by $ZZ$ alone.
The resulting $K=4^p2^q=2^\chi$ tensor-product states are orthogonal and maximally entangled on the active registers, of local dimension $2^{p+q}$.
Prepare the remaining constrained and unconstrained registers in fixed local pure states and apply the inverse local Clifford transformations.
The $K$ inputs have distinct stabilizer syndromes, and each input is a simultaneous eigenstate of all checks, so exact syndrome extraction identifies it with certainty while preserving it.
After local decoding and the discarding of the fixed local registers, exact extraction becomes a perfect nondestructive discrimination protocol on the original maximally entangled ensemble.
The cited theorem therefore gives the full lower bound of $\chi$ ebits, even though its task imposes only a promise on the input ensemble.
Together with the established remote-parity protocol and the cut-adapted basis, this bound already yields the Bell-pair optimum in Theorem~\ref{thm:single-cut-resource}.

A second route follows from the general instrument theorem of Akibue, Miyazaki, and Osaka \cite[Theorem~3(i)]{akibueOptimizingEntanglementManipulation2026}.
If a bipartite instrument is deterministically implementable by separable instruments assisted by a pure resource $|\tau\rangle$, that theorem bounds the Schmidt rank of $|\tau\rangle$ below by the Schmidt number of each branch Choi operator.
LOCC is a subclass of separable operations, and the syndrome branch $\rho\mapsto\Pi_s\rho\Pi_s$ has Choi operator $J_s=|\Pi_s\rangle\!\rangle\langle\!\langle\Pi_s|$, of Schmidt number $2^\chi$ by Lemma~\ref{lem:projector-choi-schmidt}.
Their result thus gives $\operatorname{SR}(|\tau\rangle)\ge2^\chi$, and $m\ge\chi$ when the supplied resource consists of $m$ Bell pairs.
Part~(ii) of their theorem additionally assumes equality of resource and Choi Schmidt ranks and a branch represented by a co-isometry, and it is not invoked here.

Both routes show that the Bell-pair lower bound can also be deduced from prior entanglement-cost results.
The direct proof in Appendix~\ref{subsec:adaptive-branch-proof} additionally keeps track of the permitted crossing primitive on every nonzero refined trajectory, and the matching-layer optima follow by batching with free local processing and sufficient storage.
Neither the underlying rank monotonicity nor the parity and SWAP gadgets are new ingredients.

\subsection{The gross-code matrix and its sparse certificate}
\label{subsec:finite-code-calculations}

We specify the matrix ordering and a subset of the original checks that attains the optimum in Proposition~\ref{prop:gross-sparse-single-cut-optimum}.
The distance $12$ is the value reported for the gross code by Bravyi et al., whereas the calculation here verifies its stabilizer ranks and the selected cut, not its distance \cite{bravyiHighthresholdLowoverheadFaulttolerant2024}.
For CSS matrices $H_X,H_Z$, the two Pauli types contribute separately, so that
\begin{equation}
\begin{aligned}
\chi_U={}&\rank(H_{X,U})+\rank(H_{X,U^c})-\rank H_X\\
&+\rank(H_{Z,U})+\rank(H_{Z,U^c})-\rank H_Z.
\end{aligned}
\label{eq:css-numerical-cut}
\end{equation}
All ranks are over $\mathbb F_2$, and the orthogonality condition $H_XH_Z^{\mathsf T}=0$ ensures that the $X$ and $Z$ checks commute.

Label the 72 cells of a $12\times6$ torus by $(i,j)$, with $0\le i<12$, $0\le j<6$, and row index $6i+j$.
The two data species have column offsets zero and 72.
Let $x$ and $y$ shift the first and second torus coordinates by one, respectively, and set
\begin{equation}
\begin{aligned}
A&=x^3+y+y^2,& B&=y^3+x+x^2,\\
H_X&=(A\mid B),& H_Z&=(B^{\mathsf T}\mid A^{\mathsf T}).
\end{aligned}
\end{equation}
An $X$ row therefore has the indicated positive shifts and a $Z$ row the negative shifts, with both coordinates reduced modulo their periods.
Each row has weight six, and each data qubit belongs to six of the 144 original checks.
The cut $U$ contains both data species at $i<6$.
Each full matrix has rank 66, and its restriction to either half has rank 48.
The full stabilizer dimension is therefore 132, each restricted stabilizer dimension is 96, and each supported stabilizer dimension is $132-96=36$.
Equation~\eqref{eq:css-numerical-cut} then gives $\chi_U=60$.

\begin{proof}[Proof of Proposition~\ref{prop:gross-sparse-single-cut-optimum}]
An $X$ row at $(i,j)$ touches the four first coordinates $i,i+1,i+2,i+3$, while a $Z$ row touches $i-3,i-2,i-1,i$.
The local $X$ rows have $i=0,1,2$ in $U$ and $i=6,7,8$ in $U^c$.
The local $Z$ rows have $i=3,4,5$ in $U$ and $i=9,10,11$ in $U^c$, so there are 18 local rows of each type in each half.
For the $X$ rows with $i=0,1,2$, restriction to the first data species at $i=3,4,5$ retains only their distinct $x^3$ terms, which proves independence.
The same argument in the other half, and with $x^{-3}$ on the second species for the local $Z$ rows, proves independence of every listed local set.
Rank-nullity gives 18 supported stabilizers of each type in each half, so these listed local checks span the full supported spaces.

Retain every original row except those with the following zero-based row indices
\begin{equation}
\begin{aligned}
X:\;&64,65,68,69,70,71,\\
Z:\;&46,47,50,51,52,53.
\end{aligned}
\end{equation}
The remaining 66 rows in each Pauli sector are independent.
Binary elimination on the displayed matrices can verify this, and the exact certificate records all 132 distinct echelon pivots and reconstructs all 144 original rows from the retained rows.
All twelve omitted rows cross the cut, so the retained basis contains 72 local checks and 60 crossing checks.
Deleting rows preserves weight six and incidence at most six.

Measure the local basis checks with local ancillas and the crossing basis checks with the one-pair gadget of Appendix~\ref{subsec:single-cut-all-branches}.
For any independent syndrome $s$, each compatible raw branch has Kraus operator $2^{-30}\Pi_s$.
There are $2^{60}$ such branches, so their sum is exactly $\Pi_s\rho\Pi_s$.
Every omitted row is a product of retained rows of its own Pauli type, so its reported bit is the corresponding parity, with no sign offset.
In particular, this within-type reconstruction verifies the signed group relations as well as the binary row relations.
The full 144-bit syndrome and the correct quantum output are therefore recovered on every input and reference.
The lower bound of Theorem~\ref{thm:single-cut-resource} proves optimality of the 60 supplied Bell pairs.
\end{proof}

For the comparison in Table~\ref{tab:gross-resource-comparison}, assign the ancilla of each original row to the side selected by its first coordinate $i$.
Enumerating the six support incidences of every row gives 144 crossing check--data interactions, while 72 of the 144 checks cross the data cut.
The first count charges a Bell pair for every remote CNOT in the original circuit, whereas the second charges one pair for each remote parity measured directly.
Both circuits, and the sparse 60-pair circuit, report all original check outcomes and preserve the same syndrome-sector state.
The certificate is preserved with the numerical audit, and these finite matrix identities are independent of any search over other cuts.

\subsection{The projector resource on a tree}
\label{app:network-tree-resource}

The tree theorem follows by turning exact extraction into preparation of one multipartite pure state.
Each network vertex owns both copies of every data qubit assigned to it, so all Bell measurements and Pauli corrections in the following proof are local to vertices.
For the reverse reduction, from extraction to resource preparation, both copies must be corrected, because correcting only the output copy would leave a syndrome-dependent state.

\begin{proof}[Proof of Proposition~\ref{prop:network-tree-resource}]
Put $d=2^n$, $q=2^k$, and use $|A\rangle\!\rangle=\sum_{ij}A_{ij}|i\rangle_O|j\rangle_R$.
Given $|C\rangle$, Bell-measure the actual input against the reference-copy qubits locally, with bra convention $\langle B_P|=\langle\Phi_d|(P\otimes I)$ for a tensor-product Pauli $P$.
The resulting Kraus operator is $\Pi_0P/\sqrt{dq}$.
Correcting the output by $P^\dagger$ gives $\Pi_{s(P)}/\sqrt{dq}$, where $s(P)$ is the commutation syndrome of $P$.
The symplectic syndrome map has rank $n-k$, so every syndrome has $2^{2n-(n-k)}=dq$ Pauli outcomes.
Coarse-graining therefore gives exactly $\rho\mapsto\Pi_s\rho\Pi_s$, including arbitrary references.
All corrections are local to the modules, and local resets and SWAPs return the output to the original data sites.

Conversely, prepare input-reference Bell pairs within each module and run the extraction protocol.
On outcome $s$, the normalized output is $|\Pi_s\rangle\!\rangle/\sqrt q$.
Choose a tensor Pauli $Q_s$ of syndrome $s$ and act on both copies by $Q_s\otimes Q_s^*$, which gives $|Q_s\Pi_sQ_s^\dagger\rangle\!\rangle/\sqrt q=|C\rangle$ deterministically.
The reference copies used in this reduction are local work registers.

By Lemma~\ref{lem:projector-choi-schmidt}, the Schmidt rank of $|C\rangle$ across tree edge $e$ is $2^{\chi_{U_e}}$.
The exact pure-state tree-construction theorem of Yamasaki, Soeda, and Murao supplies all these edge dimensions simultaneously \cite[Theorem~5]{yamasakiGraphAssociatedEntanglement2017}.
Equivalently, prepare the state at a root and recursively teleport each child subtree's compressed Schmidt-support register, using $\chi_{U_e}$ Bell pairs on its edge.
Combining this preparation with the local Bell-measurement protocol proves sufficiency.
For necessity, group each side of an edge into one party and apply Theorem~\ref{thm:single-cut-resource} to that edge.
\end{proof}

\section{Annular constructions}
\label{app:annular-construction}

The annular examples separate the expansion of a chosen check list from the intrinsic correlations of its stabilizer space.
This appendix proves Theorem~\ref{thm:annular-counterexample}, Corollary~\ref{prop:many-annuli}, and their consequence for the stated local-expander bound of DBT.
We construct a bounded-degree expander, transfer its expansion to the qubits by grouping them into small disjoint supports, and apply an invertible recombination of surface-code checks.
The point of the construction is that the extraction circuit still measures the original local checks after this recombination.
The last subsection then gives elementary versions of the same obstruction and identifies the relevant proof steps in the cited DBT version.

\subsection{Expanders and clique blow-ups}
\label{subsec:expander-supply}
\label{subsec:clique-blowup}

\paragraph{A bounded-degree expander on every vertex set.}
For a graph $E$ and a vertex set $S$, write $N_E(S)$ for the set of vertices outside $S$ that have a neighbor in $S$.

\begin{lemma}[Expander supply]
\label{lem:expander-supply}
For every integer $m\ge2$ there is a simple graph $E_m$ on $m$ vertices with maximum degree at most $32$ such that $|N_{E_m}(S)|\ge|S|/4$ for every vertex set $S$ with $0<|S|\le m/2$.
In particular, at least $|S|/4$ edges of $E_m$ leave every such $S$.
\end{lemma}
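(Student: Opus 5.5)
The plan is a probabilistic existence argument on random regular-type graphs; because the statement only asserts existence, a terminating finite search then locates a concrete $E_m$, as the paper remarks. For small $m$ (say $m\le 33$) we simply take the complete graph $K_m$. Its maximum degree is $m-1\le 32$, and every $S$ with $|S|\le m/2$ has $|N(S)|=m-|S|\ge|S|$. For larger $m$ we form $E_m$ as the union of $d$ independent uniformly random perfect matchings, or of $d$ random permutations $\pi_1,\dots,\pi_d$ with $i$ joined to each $\pi_j(i)$. We discard loops and multiple edges, which only lowers degrees and does not affect the neighborhood of a set except through loops. With $d=16$ permutations (each contributing degree at most $2$), every degree is at most $32$. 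When $m$ is odd, a matching-based version leaves one vertex uncovered, and we handle that by using permutations throughout.

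The core step is the standard union bound. The vertex expansion condition fails exactly when some $S$ with $s=|S|\le m/2$ has $N(S)\subseteq T$ for a set $T$ with $|T|=\lceil s/4\rceil-1<s/4$. In that case $\pi_j(S)\subseteq S\cup T$ for every $j$, since edges leaving $S$ must land in $S\cup T$. For a fixed pair $(S,T)$ and one uniform permutation, $\Pr[\pi(S)\subseteq S\cup T]=\binom{s+t}{s}/\binom{m}{s}$. Independence over the $d$ permutations gives the bound
\begin{equation}
\sum_{s=1}^{\lfloor m/2\rfloor}\binom{m}{s}\binom{m}{t}\left(\frac{\binom{s+t}{s}}{\binom{m}{s}}\right)^{d},\qquad t=\lceil s/4\rceil-1.
\end{equation}

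The main obstacle is verifying that this sum is below one uniformly in $m>33$ with $d=16$. I would split the range of $s$ into two regimes. For small $s$ (say $s\le m/e^{4}$), I would use the estimates $\binom{m}{s}\le(em/s)^s$ and $\binom{s+t}{s}\le(e(s+t)/t)^t\le (5e)^{s/4}$, together with $\binom{m}{s}\ge(m/s)^s$. Each term is then at most $\bigl[(em/s)^{5/4}(5e)^{d/4}(s/m)^{d}\bigr]^s$, which is geometrically small once $d=16$ because the power $(s/m)^{16-5/4}$ dominates. For $s$ between that threshold and $m/2$, I would use the entropy bound $\binom{s+t}{s}/\binom{m}{s}\le 2^{-\Omega(m)}$. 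This holds because $s+t\le 5m/8<m$, so the ratio $\binom{5s/4}{s}/\binom{m}{s}$ is exponentially small with an explicit rate, and raising it to the power $16$ beats the at most $4^m$ choices of $(S,T)$. The constants need care near $s=m/2$. If the estimate there proves too tight, the fallback is to note that the degree budget $32$ is generous: a weaker expansion constant than $1/4$ could be boosted by taking a bounded power of the graph, but this would exceed degree $32$. So I would instead tune the argument, using the exact ratio $\binom{5m/8}{m/2}/\binom{m}{m/2}\approx 2^{-m(1-0.954\cdot 5/8)}$ at the endpoint, which is already below $2^{-0.4m}$.

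Once the sum is below one, some outcome of the random construction has every such $S$ with $|N(S)|\ge|S|/4$, and since every vertex of $N(S)$ is the endpoint of an edge leaving $S$, at least $|S|/4$ edges leave $S$. Because the sample space is finite, exhaustive enumeration of permutation tuples terminates and produces an explicit $E_m$. This matches the paper's remark that the supply is found by a finite search rather than constructed efficiently.
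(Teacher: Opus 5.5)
\paragraph{Assessment.} Your construction and reduction are the same as the paper's. You take sixteen independent uniform permutations, which gives degree at most $32$, and you union-bound over $S$ together with a set that must contain $\pi_j(S)$ for every $j$. The $K_m$ base case is valid but unnecessary.

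\paragraph{Where your estimate fails.} The problem is in the estimate you flag as the main obstacle. With your threshold $s=m/e^4$, the second regime does not close as stated. There the ratio $\binom{s+t}{s}/\binom{m}{s}$ is only about $2^{-0.115m}$, and this is already the sharp entropy value. Its sixteenth power, about $2^{-1.85m}$, therefore does not beat the crude count $4^m$ of pairs $(S,T)$. So the delicate point is the lower end of that regime, not $s\approx m/2$. To repair it you would have to use the true count $\binom{m}{s}\binom{m}{t}\approx2^{0.17m}$ there, or move the threshold up.

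There are two smaller slips. First, $\binom{5m/8}{m/2}\approx2^{(5m/8)H(4/5)}$ with $H(4/5)\approx0.72$, not $0.954$. The endpoint ratio is then about $2^{-0.55m}$, so your conclusion there survives. Second, the correct bound is $\binom{m}{t}\le(4em/s)^{s/4}$, not $(em/s)^{s/4}$; this is harmless.

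\paragraph{The clean fix.} All of this disappears if you bound the exact hypergeometric probability directly:
\[
\frac{\binom{s+t}{s}}{\binom{m}{s}}=\prod_{i=0}^{s-1}\frac{s+t-i}{m-i}\le\Bigl(\frac{s+t}{m}\Bigr)^{s}\le\Bigl(\frac{5s}{4m}\Bigr)^{s},
\]
instead of by $(5e)^{s/4}(s/m)^s$. The base $5/4$, rather than $(5e)^{1/4}\approx1.92$, is what lets one geometric bound cover all $1\le s\le m/2$.

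This is exactly the paper's proof. It takes a set of size $t=\lfloor5s/4\rfloor$ containing $S\cup N(S)$ and bounds the $s$-th term by $[e^{9/4}(5/4)^{16}(s/m)^{55/4}]^s\le c^s$, where $c=(2e)^{9/4}(5/8)^{16}<1/16$. It therefore needs neither a split of the range nor a separate small-$m$ case.
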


\begin{proof}
Let $\pi_1,\ldots,\pi_{16}$ be independent uniformly random permutations of $[m]$, and let $G$ be the simple graph with an edge $\{v,\pi_j(v)\}$ for every $v$ and $j$ with $\pi_j(v)\ne v$.
Each permutation contributes at most two edges at every vertex, so the maximum degree of $G$ is at most $32$.
To bound the probability of failure, suppose that a set $S$ of size $s\le m/2$ has $|N_G(S)|<s/4$.
Then $S\cup N_G(S)$ has at most $t=\lfloor5s/4\rfloor$ elements, so some set $T$ of size $t$ contains $\pi_j(S)$ for every $j$.
For fixed $S$ and $T$, a single permutation maps $S$ into $T$ with probability
\begin{equation}
\frac{t(t-1)\cdots(t-s+1)}{m(m-1)\cdots(m-s+1)}\le\left(\frac tm\right)^s.
\end{equation}
Since $s\le t\le5s/4$ and $em/s>1$, the binomial bounds $\binom ms\le(em/s)^s$ and $\binom mt\le(em/s)^{5s/4}$ hold.
A union bound over $S$ and $T$ then bounds the probability that some such $S$ exists by
\begin{equation}
\begin{aligned}
&\sum_{1\le s\le m/2}\binom ms\binom mt\left(\frac tm\right)^{16s}\\
&\quad\le\sum_{1\le s\le m/2}\left[e^{9/4}\left(\frac54\right)^{16}\left(\frac sm\right)^{55/4}\right]^s
\le\sum_{s\ge1}c^s,
\end{aligned}
\end{equation}
with $c=(2e)^{9/4}(5/8)^{16}$, where the last step uses $s/m\le1/2$.
Because $e<3$, we have $(2e)^{9/4}<36\sqrt6<90$, and $1440\cdot5^{16}<8^{16}$ then gives $c<1/16$.
The sum is therefore less than $1/15$, so some choice of permutations yields a graph with the stated property.
For the edge count, every vertex of $N_G(S)$ has an edge to $S$, and distinct outside vertices give distinct edges.
\end{proof}

For definiteness, let $E_m$ be the graph obtained from the first $16$-tuple of permutations, in lexicographic order, for which the neighbor bound holds.
The lemma guarantees that this finite search terminates.
The search serves only as a definition, not as an efficient algorithm, and any explicit bounded-degree family with the same expansion property would serve equally well.

\paragraph{Expansion of clique blow-ups.}
We next transfer expansion from the graph of blocks to arbitrary subsets of vertices within those blocks.
\begin{lemma}[Clique blow-up]
\label{lem:clique-blowup}
Let $E$ be a graph with $|\partial_EW|\ge h\min\{|W|,|V(E)\setminus W|\}$ for every $W\subseteq V(E)$, where $h>0$.
Replace each vertex $v$ by a nonempty block $B_v$ of at most four new vertices, and let $H$ be any graph on $\bigsqcup_vB_v$ that contains a clique on each block and every edge between $B_v$ and $B_w$ whenever $vw$ is an edge of $E$.
Then every $A\subseteq V(H)$ with $0<|A|\le|V(H)|/2$ satisfies
\begin{equation}
|\partial_HA|\ge\frac{h}{4+2h}|A|.
\label{eq:clique-blowup}
\end{equation}
\end{lemma}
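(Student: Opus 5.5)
Fix $A\subseteq V(H)$ with $0<|A|\le|V(H)|/2$. The plan is to classify blocks by how much of them $A$ occupies, and then charge the size of $A$ either to edges inside partially occupied blocks or to expander edges between blocks. Call a block $B_v$ \emph{full} if $B_v\subseteq A$, \emph{partial} if $\emptyset\ne B_v\cap A\ne B_v$, and \emph{empty} otherwise. Let $F$ and $P$ be the sets of full and partial vertices of $E$, and put $f=\sum_{v\in F}|B_v|$ and $p=\sum_{v\in P}|B_v\cap A|$, so that $|A|=f+p$.

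First, each partial block has a clique on it, so it contributes at least one edge of $\partial_HA$ inside itself; indeed it contributes $|B_v\cap A|\,|B_v\setminus A|\ge|B_v\cap A|\cdot 1$ edges. These edges lie within distinct blocks, so summing gives $|\partial_HA|\ge p$.

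Second, I apply the hypothesis on $E$ to $W=F$. Every edge $vw\in\partial_EF$ with $v\in F$ has $w$ either partial or empty. Since $H$ contains all edges between $B_v$ and $B_w$, and $B_w\setminus A$ is nonempty, the edge $vw$ yields at least one edge of $\partial_HA$ between the two blocks, and distinct $E$-edges yield distinct $H$-edges. Hence $|\partial_HA|\ge|\partial_EF|\ge h\min\{|F|,|V(E)\setminus F|\}$. The main obstacle is converting this vertex count into a count of qubits when $F$ is large in $E$, because block sizes range from one to four. I would use $|F|\ge f/4$ directly. For the complement, the vertices outside $F$ carry the qubits of $V(H)\setminus A$ together with those of partial blocks lying outside $A$, and $|V(H)\setminus A|\ge|V(H)|/2\ge|A|\ge f$. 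Since each block has at most four qubits, $|V(E)\setminus F|\ge |V(H)\setminus A|/4\ge f/4$. Therefore $|\partial_HA|\ge hf/4$. The edges counted in the two steps are disjoint (inside blocks versus between blocks), so in fact $|\partial_HA|\ge p+hf/4$. It would nevertheless suffice to take a maximum of the two bounds.

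Finally, combine the estimates. From $|\partial_HA|\ge p$ and $|\partial_HA|\ge hf/4$, a convex combination gives $|\partial_HA|\ge\lambda p+(1-\lambda)hf/4$. Choosing $\lambda=h/(4+h)$ equalizes the coefficients at $h/(4+h)$, which already beats the stated $h/(4+2h)$. Even without that optimization, the cruder estimate $|A|=f+p\le 4|\partial_HA|/h+|\partial_HA|$ yields $|\partial_HA|\ge\frac{h}{4+h}|A|\ge\frac{h}{4+2h}|A|$. I would double-check the complement step in the degenerate case where $F$ is all of $V(E)$. That case cannot occur, since $|A|\le|V(H)|/2<|V(H)|$. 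With that verified, the claimed bound \eqref{eq:clique-blowup} follows.
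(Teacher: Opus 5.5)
Your proof is correct, but it rounds $A$ differently from the paper. The paper takes $W$ to be the set of blocks at least half occupied by $A$ (with a tie rule) and charges just one clique edge to each of the $t$ split blocks. It then bounds $|A\triangle B_W|\le 2t$, using that the minority part of a block of size at most four has at most two vertices. This costs $2t$ on both sides of the cut and leads to the self-referential inequality $\beta\ge h(|A|-2\beta)/4$, hence the constant $h/(4+2h)$.

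You instead round down to the full blocks $F$. This one-sided rounding makes the complement side free: every qubit outside $A$ lies in a non-full block, so $|V(E)\setminus F|\ge|V(H)\setminus A|/4\ge|A|/4$. You pay for the qubits of $A$ in partial blocks with the stronger clique count $|B_v\cap A|\,|B_v\setminus A|\ge|B_v\cap A|$. Since the inside-block and between-block edges are disjoint, you even get the additive bound $|\partial_HA|\ge p+hf/4\ge\min\{1,h/4\}\,|A|$. This beats both your $h/(4+h)$ and the paper's $h/(4+2h)$; at $h=1/4$ it gives $1/16$ in place of $1/18$.

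In exchange, the paper's charging uses only one boundary edge per split block. Its constant would therefore be unchanged if the blocks were merely connected rather than cliques, whereas yours would degrade. Both arguments use that $E$ is simple when sending distinct $E$-edges to distinct $H$-edges.
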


\begin{proof}
Let $\beta=|\partial_HA|$, and let $t$ be the number of blocks that meet both $A$ and its complement.
The clique on each such block contains an edge leaving $A$, so $\beta\ge t$.
Now let $W$ be the set of vertices $v$ with $|A\cap B_v|>|B_v|/2$, together with those with $|A\cap B_v|=|B_v|/2$ according to any fixed rule.
The union $B_W$ of the blocks in $W$ differs from $A$ only within the $t$ split blocks, and there by the minority part of at most two vertices, so $|A\triangle B_W|\le2t$.
Both $A$ and its complement contain at least $|A|$ vertices, so $B_W$ and its complement both contain at least $|A|-2t$ vertices, and since blocks have at most four vertices,
\begin{equation}
\min\{|W|,|V(E)\setminus W|\}\ge\frac{|A|-2t}{4}.
\end{equation}
For every edge $vw$ of $E$ with $v\in W$ and $w\notin W$, the block $B_v$ contains a vertex of $A$ and the block $B_w$ contains a vertex outside $A$, and $H$ joins these two vertices.
Distinct edges of $E$ give distinct edges of $H$, because they join different pairs of blocks.
Hence
\begin{equation}
\beta\ge h\,\frac{|A|-2t}{4}\ge h\,\frac{|A|-2\beta}{4},
\end{equation}
which rearranges to Eq.~\eqref{eq:clique-blowup}.
\end{proof}

The lemma applies to every vertex set $A$ of $H$, not only to unions of blocks, and it uses only edges that $H$ actually contains.
For $h=1/4$, the constant it gives is $1/18$.

\subsection{The annulus and its expanding presentation}
\label{subsec:annulus-code}

\paragraph{The surface code on a square annulus.}
Fix $L\ge1$ and take the unit square faces with lower-left corners
\begin{equation}
(i,j)\in\{0,\ldots,3L-1\}^2\setminus\{L,\ldots,2L-1\}^2,
\end{equation}
together with exactly the edges and vertices incident to them.
This square annulus has outer side $3L$, inner side $L$, and width $L$.
Direct counting gives $8L^2$ faces and $(3L+1)^2-(L-1)^2=8L^2+8L$ vertices, and the annulus has Euler characteristic zero, so it has $n_L=16L^2+8L$ edges, which carry the data qubits.
Let $A_v$ be the product of $X$ on the edges at a vertex $v$, and let $B_f$ be the product of $Z$ on the boundary of a face $f$.
These operators commute, since a vertex meets a face in zero or two edges.
The graph is connected, so a set of vertex checks multiplies to the identity only if it is empty or contains every vertex, and the vertex checks satisfy exactly one relation.
The face checks are independent, because a nonempty set of faces of the annulus has a nonempty boundary.
Deleting the check $A_{w_0}$ at the white vertex $w_0=(1,0)$, where white means that $i+j$ is odd, therefore leaves an independent list of $n_L-1$ commuting checks with positive signs, so the code encodes one logical qubit.

To verify that the distance is $L+1$, note first that a $Z$-type operator that commutes with every vertex check has even degree at every vertex, so its support is an edge-disjoint union of simple cycles.
A simple cycle that does not surround the removed block bounds a union of retained faces and is therefore a product of face checks, so a nontrivial logical operator contains a simple cycle that surrounds the removed block.
The bounding box of such a cycle contains the removed block, so its width and height are at least $L$, and a closed lattice curve is at least twice as long as the sum of the width and height of its bounding box.
Its length is therefore at least $4L$, which the inner boundary attains.
For the other Pauli type, an $X$-type operator that commutes with every face check has even overlap with every face, so it is an even-degree subgraph, at the face vertices, of the dual graph whose vertices are the retained faces together with one vertex for the removed block and one for the exterior.
The inner boundary cycle is a nontrivial $Z$-type logical operator, and a straight radial path of $L+1$ edges from the removed block to the exterior commutes with every face check and overlaps it once.
The parity of the overlap with the inner boundary cycle vanishes on vertex checks, so it detects the single nontrivial $X$-type class.
The overlap with the inner boundary equals the degree of the dual vertex of the removed block, so a nontrivial class has odd degree at both extra vertices and contains a dual path between them.
To bound the length of such a path, assign to each retained face its ring index, the smallest $\rho$ such that the face lies in the square of side $L+2\rho$ centered on the removed block.
Adjacent faces have ring indices that differ by at most one, and only faces of index one and index $L$ are adjacent to the removed block and to the exterior, respectively.
A dual path between the two extra vertices therefore visits at least $L$ faces and crosses at least $L+1$ primal edges, and the radial path attains $L+1$.
A general logical operator contains the support of a nontrivial $X$-type or $Z$-type part, so the distance is $\min\{4L,L+1\}=L+1$.

\paragraph{An expanding presentation.}
Color a vertex black when $i+j$ is even.
Every edge joins vertices of opposite colors, so the black checks have pairwise disjoint supports that together contain every edge exactly once.
Because every vertex of the annulus has degree between two and four, every black support $B_v$ is nonempty with at most four edges.
The full square of $(3L+1)^2$ vertices and the removed interior square of $(L-1)^2$ vertices have the same difference between black and white vertices, so there are $m=4L(L+1)\ge8$ black vertices.

Order the black vertices lexicographically and take the expander $E_m$ of Lemma~\ref{lem:expander-supply} on them.
Replace every black check by
\begin{equation}
A'_j=A_j\prod_{i<j:\,ij\in E_m}A_i,
\label{eq:annulus-transformed-check}
\end{equation}
keep the white checks other than $A_{w_0}$, and keep all face checks.
On the black coordinates, this change of basis is a unit lower-triangular binary matrix $T=I+N_T$, and $N_T^m=0$ gives $T^{-1}=I+N_T+\cdots+N_T^{m-1}$ over $\mathbb F_2$.
The new list $\mathcal G_L$ is therefore an independent generating list of the same signed stabilizer group.

No support cancellation occurs in Eq.~\eqref{eq:annulus-transformed-check}, because $A'_j$ is a product of distinct black checks with disjoint supports, so its support is the disjoint union of theirs.
Each $A'_j$ is a product of at most $33$ black checks, which gives weight at most $4\cdot33=132$.
An edge in the support of $A_i$ lies in $A'_i$ and in at most $32$ further products $A'_j$, and it also lies in at most one retained white check and two face checks, so every qubit lies in at most $36$ generators.
This proves part~\ref{it:annulus-ldpc} of Theorem~\ref{thm:annular-counterexample}.

Turning to the expansion, let $H$ be the contracted Tanner graph of $\mathcal G_L$ on all $n_L$ edges.
Every block $B_j$ is a clique of $H$, because $A'_j$ contains it.
For an edge $ij$ of $E_m$ with $i<j$, the generator $A'_j$ contains both $B_i$ and $B_j$, so $H$ contains every edge between them.
The expander satisfies $|\partial_{E_m}W|\ge\min\{|W|,m-|W|\}/4$ for every $W$, because $\partial W=\partial(V\setminus W)$ and Lemma~\ref{lem:expander-supply} applies to the smaller side.
Lemma~\ref{lem:clique-blowup} with $h=1/4$ therefore gives $|\partial_HA|\ge|A|/18$ for every $A$ with $0<|A|\le n_L/2$, which proves part~\ref{it:annulus-expansion}.

\subsection{Nearest-neighbor extraction}
\label{subsec:annulus-circuit}

\paragraph{The circuit.}
Use the square grid $\{0,\ldots,6L\}^2$ with $N_L=(6L+1)^2$ sites.
A vertex $(i,j)$ of the annulus has its $X$-check ancilla at $(2i,2j)$, a horizontal edge from $(i,j)$ to $(i+1,j)$ has its data qubit at $(2i+1,2j)$, a vertical edge from $(i,j)$ to $(i,j+1)$ has its data qubit at $(2i,2j+1)$, and a face with lower-left corner $(i,j)$ has its $Z$-check ancilla at $(2i+1,2j+1)$.
The remaining sites hold idle ancillas in $|0\rangle$, and no ancilla is placed for the deleted check $A_{w_0}$.
Every data qubit keeps its site throughout the circuit.

To implement the new presentation, the circuit measures the original checks, beginning with one layer that prepares the vertex ancillas in $|+\rangle$ and the face ancillas in $|0\rangle$.
Four CNOT layers, one for each compass direction, apply CNOT gates from each vertex ancilla to the data qubit in that direction.
In each layer the direction defines an injective map from participating ancillas to data sites, so the gates act on disjoint pairs of nearest neighbors.
The vertex ancillas are then measured in the $X$ basis, and a classically controlled $Z$ returns each ancilla with outcome one to $|+\rangle$.
Four further CNOT layers apply CNOT gates from the data qubits to the face ancillas in the four directions, the face ancillas are measured in the $Z$ basis, and a classically controlled $X$ returns each ancilla with outcome one to $|0\rangle$.
With one preparation layer, eight CNOT layers, two measurement layers, and two layers of classically controlled Pauli operations, the circuit has quantum depth at most $13$, or at most $14$ if the computation of the classical output is counted as a separate layer.
It uses no measurement of individual data qubits, no reset of data, no long-range gate, and no pre-shared entanglement.

To verify the instrument, observe that the CNOT gates with a common vertex ancilla as control apply the controlled operator $|0\rangle\langle0|\otimes I+|1\rangle\langle1|\otimes A_v$, and all these gates commute because they apply $X$ to their targets.
Measuring the ancilla in the $X$ basis applies the Kraus operator $(I+(-1)^{t_v}A_v)/2$ to the data, and the face ancillas similarly apply $(I+(-1)^{t_f}B_f)/2$.
All these projectors commute, so a run with raw outcome vector $t$ applies the joint projector $\Pi^{\rm old}_t$ of the original independent list.
In the projector identity below, $T$ is extended by the identity on the unchanged white-vertex and face coordinates, and with this extension the circuit reports the white and face bits unchanged and the black bits $s=Tt$, which are parities of raw outcomes.
Since $A'_j$ is the positive product of the checks in row $j$ of $T$,
\begin{equation}
\Pi^{\rm new}_s=\Pi^{\rm old}_{T^{-1}s},
\end{equation}
so every reported branch applies $\rho\mapsto\Pi^{\rm new}_s\rho\Pi^{\rm new}_s$, also on inputs entangled with references.
The ancillas end in $|0\rangle$ or $|+\rangle$, as required by Sections~2.1--2.2 of DBT, and the reported parities of raw outcomes may overlap, which those sections allow \cite{delfosseBoundsStabilizerMeasurement2021}.
This proves part~\ref{it:annulus-circuit} and completes the proof of Theorem~\ref{thm:annular-counterexample}.

\subsection{Several annuli}
\label{subsec:many-annuli-proof}

\begin{proof}[Proof of Corollary~\ref{prop:many-annuli}]
Take the tensor product of $m$ copies of the annular code and omit the white star at $(1,0)$ separately in each copy.
The stabilizer rank is $m(16L^2+8L-1)$, so the code encodes $m$ logical qubits.
A nontrivial logical Pauli has a nontrivial restriction to at least one component, while a logical Pauli on one component attains its component distance, so the tensor-product distance is $L+1$.
The $B=4mL(L+1)$ black-star supports partition all data qubits into blocks of sizes at most four.
Choose the graph $E_B$ of Lemma~\ref{lem:expander-supply}, order all black vertices globally, and replace their checks by
\begin{equation}
A'_j=A_j\prod_{i<j:\,ij\in E_B}A_i.
\end{equation}
The matrix $T=I+R$ of this transformation is unit lower triangular, so $R^B=0$ and $T^{-1}=\sum_{a=0}^{B-1}R^a$ over $\mathbb F_2$, without any parity condition on $B$.
The untouched white and face checks, together with these transformed black checks, therefore remain an independent generating list of the same signed stabilizer group.
Disjointness of the black supports gives the unchanged weight and incidence bounds $132$ and $36$.
The contracted Tanner graph contains the clique blow-up of the single global expander $E_B$, including its edges between different annuli, so Lemma~\ref{lem:clique-blowup} gives the same expansion constant $1/18$ on every data subset of size at most $n/2$.

For the hardware layout, set $a=6L+1$ and place copy $j$, for $0\le j<m$, in the tile with offset $a(j\bmod q,\lfloor j/q\rfloor)$, using the data and ancilla coordinates of Appendix~\ref{subsec:annulus-circuit} within each tile.
Fill the remaining sites of $\{0,\ldots,qa-1\}^2$ with idle product ancillas.
The active sites then consist of $n$ data qubits and $n-m$ check ancillas.
All $m$ local-check circuits run simultaneously in the same thirteen quantum layers because their tiles are disjoint.
If $t$ is the complete raw syndrome and $\widehat T$ extends $T$ by the identity on white and face coordinates, report $s=\widehat Tt$.
The identity $\Pi^{\mathrm{new}}_s=\Pi^{\mathrm{old}}_{\widehat T^{-1}s}$ proves exactness of every conditional data map, including on inputs entangled with a reference.
The reported parities may overlap, as allowed in Sections~2.1--2.2 of DBT, and a separately counted final output layer increases the depth bound to fourteen \cite{delfosseBoundsStabilizerMeasurement2021}.
Finally, $q^2\le2m$ and $(6L+1)^2<\tfrac94(16L^2+8L)$ give Eq.~\eqref{eq:many-annuli-hardware}, and direct substitution gives Eq.~\eqref{eq:many-annuli-rate-distance}.
\end{proof}

\paragraph{Consequences for the local-expander bounds of DBT.}
In DBT, a family of quantum LDPC codes has fixed generating lists whose generator weights and qubit incidences are bounded by a constant, and a local-expander family additionally has contracted Tanner graphs with $h_\varepsilon\ge\alpha$ for fixed $\alpha,\varepsilon>0$, where $h_\varepsilon(G)$ minimizes $|\partial A|/|A|$ over nonempty sets with $|A|\le\varepsilon|V(G)|/2$ \cite{delfosseBoundsStabilizerMeasurement2021}.
Parts~\ref{it:annulus-ldpc} and~\ref{it:annulus-expansion} of Theorem~\ref{thm:annular-counterexample} give these properties with constant $132$ for weight, constant $36$ for incidence, $\alpha=1/18$, and $\varepsilon=1$.
A two-dimensional $b$-local circuit in DBT acts on qubits of a subset of $\mathbb Z^2$ with gates between qubits at Chebyshev distance at most $b$, and Corollary~1 concerns circuits on a square patch of $N$ qubits.
The circuit of part~\ref{it:annulus-circuit} is $1$-local on a square patch of $N_L=(6L+1)^2$ qubits, it uses only operations of the Clifford circuits of their Section~2.1, and it is a Pauli measurement circuit for $\mathcal G_L$ in the sense of their Section~2.2.
Equation~\eqref{eq:annulus-ratio} shows that $n_L/\sqrt{N_L}$ grows linearly in $L$ while the depth stays at most $13$, which proves Corollary~\ref{cor:dbt-local-expander}.
The circuit acts on $n_L$ data qubits, $8L^2+8L-1$ vertex ancillas, and $8L^2$ face ancillas, so counting only these qubits gives $N_L=32L^2+16L-1$, a constant fraction of the patch, and the same conclusion, in line with the remark of DBT that Corollary~1 also applies to circuits on a constant fraction of a patch.

Proposition~1 of DBT bounds the number $n_{\rm cut}$ of listed generators that cross a vertical slice $Y$ of the patch below by $2h_\varepsilon(\bar T)|\mathbf D\cap Y|/(w(w-1))$, where $\bar T$ is the contracted Tanner graph, $\mathbf D$ the set of data qubits, and $w$ the maximum weight.
The counting estimate itself is valid, because every boundary edge of the contracted Tanner graph comes from a crossing generator and each generator induces at most $\binom w2$ edges.
For the annular family, a slice containing between $n_L/2-\sqrt{N_L}$ and $n_L/2$ data qubits therefore has $n_{\rm cut}=\Theta(n_L)$, whereas $O(\sqrt{N_L})$ edges of the circuit cross it.
Theorem~1 of DBT would then give depth $\Omega(L)$, so for sufficiently large $L$ it fails for these independent lists of bounded weight and incidence.

\subsection{Generator changes and the DBT cut bound}
\label{subsec:basis-change-counterexample}

The four-check example of Fig.~\ref{fig:basis-paradox} extends to any number of checks, and the extension shows directly why an arbitrary independent list cannot be counted as independent information across a cut.
Consider
\begin{equation}
S_i=Z_{\ell_i}Z_r,
\qquad i=1,\ldots,t,
\end{equation}
and place all $\ell_i$ on the left of the cut and $r$ on the right, so that each listed check crosses.
Multiplying neighboring checks cancels the factor on the right and gives another basis of the same stabilizer space,
\begin{equation}
T_1=S_1,
\qquad
T_i=S_{i-1}S_i=Z_{\ell_{i-1}}Z_{\ell_i}\quad(i\ge2).
\end{equation}
In this basis only $T_1$ crosses the cut.
The original outcomes are recovered by
\begin{equation}
m(S_i)=m(T_1)\oplus\cdots\oplus m(T_i)
\end{equation}
which is an invertible relabeling, and since the two bases have the same joint projectors, this relabeling preserves the conditional quantum output as well as the syndrome.

Our circuit uses the fixed hardware path $r-\ell_1-\cdots-\ell_t$ with all $\ell_i$ on the left, and its only crossing hardware edge is $r\ell_1$.
We measure the $T_i$ in the two matchings of the path, using the allowed two-qubit Pauli measurements.
These commuting measurements implement the joint instrument in two quantum layers, although $S_1,\ldots,S_t$ all cross the cut.
Theorem~1 of DBT states $\depth(C)\ge n_{\rm cut}/(64|\partial L|)$, where $n_{\rm cut}$ counts independent listed checks acting on both sides \cite{delfosseBoundsStabilizerMeasurement2021}.
The $S_i$ are independent, so its numerator is $t$, while the connectivity graph of the circuit has $|\partial L|=1$.
The claimed lower bound is therefore $t/64$, which exceeds the actual depth two for $t>128$.

The comparison uses the operational task and the primitives specified in Sections~2.1--2.2 of DBT.
Those sections allow two-qubit Pauli measurements, unrestricted classical communication at no time cost, and syndrome bits formed from possibly overlapping parities of raw outcomes.
They require, for every input, the joint Born probabilities and the conditional state obtained by projecting onto the common eigenspace of the listed checks.
The path construction satisfies this requirement on the state, because the two generator bases have exactly the same joint projectors under the invertible parity relabeling.
It uses no ancillas, so the requirement of DBT to return ancillary qubits to their prescribed product states is vacuous.
Replacing the generators, finally, changes neither the hardware graph nor the data locations.

The gap can be located in the proof of Theorem~1 of DBT, in the lower-information argument of their Lemma~4, where the independence of whole crossing operators is used, following their Eq.~(42), to infer the surjectivity of a one-sided commutation map.
Here the $t$ full operators are independent, but all restrictions to the right are $Z_r$, so the right-side commutation map has rank one.
In contrast, $\chi_{Q\cap L}=t-(t-1)-0=1$ correctly records the single shared constraint.

Independence of the restrictions does not suffice, however, because a second step of the same proof can still fail.
Consider the independent list $(Z_A,Z_AZ_B)$ with $A$ on the left and $B$ on the right, and measure $Z_A$ and $Z_B$ with separate local ancillas, reporting the raw outcomes $u$ and $u\oplus v$.
This is an exact Pauli measurement circuit in the sense of DBT, whose reported parities overlap, and the crossing check $Z_AZ_B$ has the nonzero restriction $Z_B$, so the surjectivity step holds.
In the double-measurement circuit of DBT, the second-round outcomes are $u\oplus x_A$ and $v\oplus x_B$, where $x_A$ and $x_B$ are the bit-flip components of the random Pauli error.
Given the first-round record, the left outcomes and errors are therefore independent of the right outcomes and errors, so the conditional mutual information that their Lemma~4 bounds below by $n_{\rm cut}/2=1/2$ vanishes.
The failing step is the removal of conditioning between their Eqs.~(41) and~(42), because the right partial parity $v\oplus x_B$ of the second round, together with the first-round record, reveals the right-side error component $x_B$.
In this example $\chi_{Q\cap L}=0$, which is consistent with an exact circuit that uses no crossing edge.
The same example also gives a counterexample to Theorem~1 of DBT with bounded weight and incidence.
For this, take $t$ disjoint copies $(Z_{A_i},Z_{A_i}Z_{B_i})$, measured in this way, together with one further crossing check $Z_{A_0}Z_{B_0}$ measured by a single two-qubit Pauli measurement on a hardware edge.
The resulting independent list has weight and incidence at most two, $n_{\rm cut}=t+1$, and $|\partial L|=1$, and the circuit has constant depth, whereas Theorem~1 of DBT requires depth at least $(t+1)/64$.

The star list has unbounded incidence, and the bounded-incidence copies of $(Z_A,Z_AZ_B)$ do not expand, so these examples isolate failures of Theorem~1 without, by themselves, contradicting its local-expander corollary.
The annular family supplies the additional bounded-weight, bounded-incidence, and expansion hypotheses, and its constant-depth circuit contradicts Corollary~1 of DBT v1 as stated.
Section~\ref{sec:static-2d-barrier} instead derives geometric bounds from an intrinsic condition and distinguishes them from bounds based on code parameters.

The adaptive proof in Appendix~\ref{subsec:adaptive-branch-proof} avoids the invalid surjectivity inference by evaluating the operator Schmidt rank of each syndrome projector.
It uses the exact instrument to constrain every complete Kraus trajectory and requires neither a Clifford normalization nor a comparison of conditional mutual informations.

\section{The uniform Tanner theorem and geometric bounds}
\label{app:tanner-geometry}

The first part proves that every sufficiently small region of the fixed quantum Tanner family has a positive density of shared stabilizer constraints.
The remaining parts turn such intrinsic estimates into geometric depth bounds and distinguish them from the aggregate consequences of rate and distance.
Throughout, construction graphs describe the code, whereas hardware graphs describe the allowed circuit operations.

\subsection{Local representations in quantum Tanner codes}
\label{subsec:quantum-tanner-target-window}

All numbered results of Leverrier and Z\'emor used in this appendix refer to \href{https://arxiv.org/abs/2202.13641v3}{arXiv:2202.13641v3}, the full version of Ref.~\cite{leverrierQuantumTannerCodes2022}.

This subsection proves Theorem~\ref{thm:quantum-tanner-target-window}, and by the correctable-region identity it suffices to bound $\dim\cS_U$ from above.
Bounding it requires control of every stabilizer confined to $U$, including products of generators that cancel outside it.

The argument follows the outline in Section~\ref{sec:separator-barriers}.
To compare dimensions through coefficient representations, we first show that, for sufficiently small $U$, the representatives with few nonzero vertex components form a linear space of the same dimension as the supported stabilizers.
Local distance then bounds the $X$ and $Z$ dimensions separately.
Controlling the overlap of their boundary square sets finally gives the bound on $\dim\cS_U$ without choosing a data region through the hardware geometry.

\paragraph{Construction.}
The construction graphs describe cancellation among $X$ stabilizers and among $Z$ stabilizers, and they are defined on the left-right Cayley complex of the construction.
This complex has edge-label sets $A$ and $B$ of size $\Delta$, and its squares $Q$ are the qubits.
Each square joins two opposite corners in $V_0$ and two in $V_1$.
Within the class $V_i$, we regard the square as an edge of a diagonal multigraph $G_i^\square$, keeping its label and any parallel-edge multiplicity, so that each vertex is incident to $\Delta^2$ such square edges.
The union graph $G^\cup$ instead uses the $A$- and $B$-edges along the sides of the squares, connecting the two classes.
Both construction graphs describe the overlap of generators and do not specify hardware interactions.

The quadripartite variant of the construction is included through the aggregate classes $V_0=V_{00}\cup V_{11}$ and $V_1=V_{10}\cup V_{01}$, whose square-incidence counts are unchanged and whose diagonal graphs are bipartite.
Distinct square corners are needed later in both variants, when an overlapping boundary square is associated with a side edge.
In the bipartite construction, total no-conjugacy ensures this distinctness by requiring $ag\ne gb$ for all $a\in A$, $b\in B$, and group elements $g$.
In the quadripartite variant, different part labels ensure that the corners are distinct \cite{leverrierQuantumTannerCodes2022}.

Each vertex carries a local tensor-code pattern on its incident squares.
The pattern is built from binary component codes $C_A,C_B$ of length $\Delta$ and dimensions $r,\Delta-r$, and we fix $\Delta$ and write $D=\Delta^2$.
For $n=|Q|$, counting the two diagonal endpoints of each square gives $M=2n/D$ vertices in either class.
The local coefficient spaces and their common dimension are
\begin{equation}
\begin{aligned}
L_0&=C_A\otimes C_B,
&L_1&=C_A^\perp\otimes C_B^\perp,\\
q&=\dim L_i=r(\Delta-r).&&
\end{aligned}
\end{equation}
These local patterns embed into the square coordinates through
\begin{equation}
\Phi_i:\bigoplus_{v\in V_i}L_i\longrightarrow\mathbb F_2^Q.
\end{equation}
Each map extends a local tensor word by zero off its incident squares and adds the words modulo two.
A tuple of local words is a coefficient representation, and its image is the resulting stabilizer.
A tuple in $\ker\Phi_i$ is a relation, since its contributions cancel on every square.
Throughout, we count the vertices at which a tuple has a nonzero component, separately from the weight of its image on qubits.

In symplectic coordinates, the CSS space is the direct sum of $\operatorname{im}\Phi_0$ for the $X$ components and $\operatorname{im}\Phi_1$ for the $Z$ components.
This identification interchanges the $X/Z$ convention of Leverrier and Z\'emor globally, which preserves supports and ranks \cite{leverrierQuantumTannerCodes2022}.
A stabilizer therefore lies inside $U$ exactly when both its $X$ and $Z$ components do, so we can bound their dimensions separately before adding them.

\paragraph{Linear representatives.}
The coefficient maps may have nonzero kernels, so arbitrary representatives need not respect addition.
To transfer the supported stabilizer dimension to coefficient space, we compare the large support of a nonzero relation with the small support of each chosen representation.
A large enough gap between the two forces the chosen representation to be unique and linear.

We write $\ell_i=d(L_i)$, let $\bar\lambda_i$ bound the diagonal mixing parameters uniformly over the family, and let $\bar\mu$ do the same for the union graph.
These constants are independent of $n$, and they are absolute eigenvalue bounds for adjacency matrices, excluding the trivial eigenvalues $\pm d$ of a $d$-regular bipartite graph.
The diagonal degree is $D=\Delta^2$, the union degree is $2\Delta$, and $M=|V_i|$ is the size of an aggregate class.

We use the edge-multiplicity and mixing convention of Lemma~3 of Leverrier and Z\'emor, and their Lemma~4 bounds the eigenvalues of the diagonal and union graphs.
For a bipartite diagonal graph, we first apply mixing between its two parts and then bound internal edges, which fixes the density coefficients below and in Eq.~\eqref{eq:qt-two-boundary-overlap}.
The gap between local distance and diagonal mixing is measured by
\begin{equation}
\sigma_i=\frac{\ell_i-\bar\lambda_i}{D}.
\label{eq:qt-sigma}
\end{equation}
To bound the support of a nonzero relation from below, consider such a relation $a\in\ker\Phi_i$, and let $R$ be the set of vertices at which $a$ has a nonzero component.
Every nonzero square coordinate at a vertex in $R$ must be canceled at the other endpoint of the square in the same class, so all such coordinates lie on diagonal edges internal to $R$.
Let $e_i(R)$ count these edges with multiplicity.
Local distance gives at least $\ell_i$ coordinates at each vertex in $R$, while mixing bounds the available internal edges, so
\begin{equation}
\ell_i|R|
\le2e_i(R)
\le\frac{D}{M}|R|^2+\bar\lambda_i|R|.
\end{equation}
For a bipartite diagonal graph, let $x,y$ be the sizes of $R$ in its two parts, each of which has $M/2$ vertices.
Bipartite mixing and the AM--GM inequality give
\begin{equation}
\begin{aligned}
2e_i(R)&\le\frac{4D}{M}xy+2\bar\lambda_i\sqrt{xy}\\
&\le\frac{D}{M}(x+y)^2+\bar\lambda_i(x+y),
\end{aligned}
\end{equation}
which proves the same internal-edge inequality with the stated normalization.
Canceling $|R|>0$ then shows that a nonzero relation must have at least $\sigma_iM$ nonzero vertex components.

We next bound the number of local components needed to represent a supported stabilizer.
For this we use Theorem~1 of Leverrier and Z\'emor, which gives the reduction range, and their Remark~15, which gives a weight decrease proportional to $D$ in each step \cite{leverrierQuantumTannerCodes2022}.
To check that reduction applies in both stabilizer sectors, let $\mathcal C_i=T(G_i^\square,L_i^\perp)$ be the classical Tanner code whose restriction at each vertex lies in $L_i^\perp$.
The construction gives
\begin{equation}
\begin{aligned}
\operatorname{im}\Phi_0&=\mathcal C_0^\perp\subseteq\mathcal C_1,\\
\operatorname{im}\Phi_1&=\mathcal C_1^\perp\subseteq\mathcal C_0.
\end{aligned}
\end{equation}
Theorem~1 reduces words of the first image using local $L_0$ words.
Applying the same theorem with the dual component codes and interchanged vertex classes reduces words of the second image using local $L_1$ words, as in their proof of Theorem~17.
The full hypotheses on the component codes provide the distance and robustness assumptions for both applications.
For $\operatorname{im}\Phi_i$, there are consequently uniform constants $a_i,\kappa_i>0$ such that a nonzero image word of weight below $\kappa_in$ can be reduced by more than $a_iD$ using one local codeword at a vertex.
Iterating this reduction represents each $x\in(\operatorname{im}\Phi_i)_U$ by a tuple with at most
\begin{equation}
T_i=\left\lfloor\frac{|U|}{a_iD}\right\rfloor
\end{equation}
nonzero vertex components.

The two estimates combine to give uniqueness and linearity once $3T_i<\sigma_iM$, which we now suppose.
The difference of two such representations of one word has at most $2T_i$ nonzero components and cannot be a nonzero relation, so the representation is unique, and we denote it by $s_i(x)$.
The relation $s_i(x)+s_i(y)+s_i(x+y)\in\ker\Phi_i$ is a sum of three tuples, each with at most $T_i$ nonzero vertex components, so it must also vanish, and $s_i$ is therefore linear.
Let $E_i$ be its image.
Because $\Phi_i(s_i(x))=x$, we have $E_i\cong(\operatorname{im}\Phi_i)_U$, so these two spaces have the same dimension.

Linearity also permits a bound on the union of the supports of all elements of $E_i$, which an individual tuple bound would not control for an arbitrary set of tuples.
Let $R_i$ be the set of vertices at which some element of $E_i$ has a nonzero component.
At any $v\in R_i$, projection is a nonzero linear map, so at least half of the elements of $E_i$ have a nonzero projection there, whereas each element has at most $T_i$ nonzero vertex components.
Comparing the two expressions for the expected number of these components gives
\begin{equation}
|R_i|\le2T_i.
\label{eq:qt-root-support}
\end{equation}

\paragraph{Local dimension bounds.}
The bounded vertex set $R_i$ restricts the coordinates available to each local component, so the supported stabilizer dimension can be bounded by adding local dimensions.
The requirement that the resulting stabilizer lie in $U$ restricts these coordinates further.
A square incident to only one vertex of $R_i$ cannot have its coefficient canceled, so if that square lies outside $U$, its coefficient must vanish, and a component at a vertex of $R_i$ can therefore use only internal squares or boundary squares belonging to $U$.

Let $\delta_{G_i^\square}(R_i)$ be the set of labeled diagonal edges with exactly one endpoint in $R_i$.
The allowed boundary squares are then $B_i=U\cap\delta_{G_i^\square}(R_i)$, and we write $m_i=\dim(\operatorname{im}\Phi_i)_U$.
For both values of $i$, the boundary square sets lie in the same qubit set $U$.
If $P_v$ is the projection onto the component at $v$, all these projections together determine the whole representative, which gives
\begin{equation}
E_i\hookrightarrow\bigoplus_{v\in R_i}P_v(E_i).
\label{eq:qt-aggregate-injection}
\end{equation}
Only the collection of projections needs to be injective here, not an individual projection.
Each nonzero projected space is a linear code contained in $L_i$ and supported on its allowed coordinates.
Deleting the coordinates forced to vanish preserves the weights of nonzero words, so its distance is at least $\ell_i$, and the Singleton bound limits its dimension to the number of available coordinates minus $\ell_i-1$.
Summing over the vertices in $R_i$ counts each internal square twice and each allowed boundary square once, which yields
\begin{equation}
m_i
\le
2e_i(R_i)+|B_i|-(\ell_i-1)|R_i|.
\label{eq:qt-boundary-singleton}
\end{equation}

\paragraph{Boundary overlap.}
For a nonempty data region, the sum of the two supported CSS dimensions must come out smaller than $|U|$ by a fixed fraction, and the local bounds alone do not give this deficit, because their boundary square sets may overlap.
It therefore remains to control the number of squares counted in both $B_0$ and $B_1$.
A square in $B_0\cap B_1$ has one endpoint in each of $R_0$ and $R_1$ and selects the labeled union-graph edge joining those endpoints.
Total no-conjugacy, or the part labels in the quadripartite construction, ensures that at most $\Delta$ squares select one edge.
Writing $E_{G^\cup}(R_0,R_1)$ for the number of labeled edges between $R_0$ and $R_1$, union-graph mixing gives
\begin{align}
|B_0\cap B_1|
&\le\Delta E_{G^\cup}(R_0,R_1)\nonumber\\
&\le\frac{2D}{M}|R_0||R_1|
+\Delta\bar\mu\sqrt{|R_0||R_1|}.
\label{eq:qt-two-boundary-overlap}
\end{align}
This overlap bound controls the extra contribution that arises when the two Singleton estimates are added.
The union of the boundary square sets lies in $U$, and Eq.~\eqref{eq:qt-two-boundary-overlap} bounds their intersection.
Applying diagonal mixing to the edges internal to each vertex support then bounds the remaining terms in the sum.
With $R=|R_0|+|R_1|$, $m=m_0+m_1$, $\ell_*=\min_i\ell_i$, and $\bar\lambda_*=\max_i\bar\lambda_i$, the combined estimate is
\begin{equation}
m
\le
|U|-
\left[
\ell_*-1-\bar\lambda_*-\frac{\Delta\bar\mu}{2}
-\frac{D}{M}R
\right]R.
\label{eq:qt-closing-deficit}
\end{equation}

A strict bound below $|U|$ requires the expression in square brackets to stay positive.
Its only term that grows with $R$ is $(D/M)R$, which we control by restricting the size of the data region.
To state the restriction, we collect the fixed constants as
\begin{equation}
A_a=a_0^{-1}+a_1^{-1},
\qquad
h=\ell_*-1-\bar\lambda_*-\frac{\Delta\bar\mu}{2}.
\end{equation}
Assuming $\sigma_i>0$ and $h>0$, we choose the size fraction $\eta$ small enough for the reduction range, the linearity of $s_i$, and the positivity of that expression,
\begin{equation}
0<\eta<
\min\left\{
2\min_i\kappa_i,
\frac43\min_i(a_i\sigma_i),
\frac{2h}{DA_a}
\right\}.
\label{eq:qt-general-eta}
\end{equation}
To obtain the corresponding bound on $\chi_U(\cS)$, we also shrink $\eta$ until $\eta/2$ lies strictly below the relative distance of the fixed family, which ensures correctability.
For every $|U|\le\eta n/2$, the chosen constants then give $3T_i<\sigma_iM$, and Eq.~\eqref{eq:qt-root-support} bounds the support-dependent loss by
\begin{equation}
\frac{D}{M}R\le\frac{\eta DA_a}{2}.
\end{equation}
The expression in square brackets in Eq.~\eqref{eq:qt-closing-deficit} is therefore at least $g_\eta$, which gives $m\le|U|-g_\eta R$ with
\begin{equation}
g_\eta=h-\frac{\eta DA_a}{2}>0.
\end{equation}
To finish the dimension bound, we remove its dependence on the number of vertices used by the representatives.
Each component space has dimension at most $q$, so $m\le qR$, and hence $R\ge m/q$ when $m>0$.
Substituting this lower bound for $R$ gives
\begin{equation}
\dim\cS_U=m
\le\frac{q}{q+g_\eta}|U|.
\label{eq:qt-shortening-deficit}
\end{equation}
When $m=0$, the same bound is immediate, including for $U=\varnothing$, so the supported stabilizer dimension is bounded for every $0\le|U|\le\eta n/2$, with no positive lower-size threshold.
The code estimate is thus available before a hardware partition is chosen, and the interval $\eta n/8<|U|\le\eta n/2$ is needed only for that later choice.

\paragraph{Uniform constants.}
It remains to choose the constants independently of block length within the hypotheses of the construction, retaining a positive gap between distance and mixing.
Let $\delta>0$ lower-bound the relative distances of both component codes and both duals, and let $\varepsilon\in(0,1/2)$ be the parameter in Theorem~17 of Leverrier and Z\'emor that sets the robustness and distance scales, with its puncturing parameter in the required range.
The constants that specify the size of the data region in the separator argument are defined separately.

Condition~2 of Theorem~17 bounds the component and dual distances, and hence also their tensor-product distances.
Lemma~4 bounds the eigenvalues in the stated adjacency convention, Theorem~1 gives the reduction radius, and Remark~15 bounds the weight decrease in each reduction step, which yields the constants $a_i$.
These bounds are uniform over the fixed family and give
\begin{equation}
\begin{aligned}
\ell_i&\ge\delta^2D,
&\bar\lambda_i&\le4\Delta,
&\bar\mu&\le4\sqrt\Delta,\\
\kappa_i&=\frac{\delta}{4\Delta^{3/2+\varepsilon}}.&&&&
\end{aligned}
\end{equation}
To bound the remaining gap between distance and mixing, set
\begin{equation}
H=\delta^2\Delta^2-1-4\Delta-2\Delta^{3/2},
\qquad
a_{\min}=\min(a_0,a_1).
\end{equation}
We choose a fixed degree that is admissible for the construction and large enough for $H>0$.
The following size fraction then satisfies the preceding strict inequalities,
\begin{equation}
\begin{split}
\eta_0=\frac12\min\biggl\{&
1,\frac{\delta}{2\Delta^{3/2+\varepsilon}},\\
&\frac{4a_{\min}}3\left(\delta^2-\frac4\Delta\right),
\frac{2H}{D(a_0^{-1}+a_1^{-1})}
\biggr\}.
\end{split}
\label{eq:qt-safe-eta}
\end{equation}
This choice leaves $g_{\eta_0}\ge H/2$, so the supported dimension falls below the size of the data region by a uniform fraction, which we can choose as
\begin{equation}
\tau_0=\frac{H}{2q+H}>0.
\label{eq:qt-safe-tau}
\end{equation}
These formulas depend on the uniform reduction constants $a_i$ of the construction, and numerical values would require quantitative choices of those constants.
The degree is fixed large enough for this positive gap and for all admissible-degree, component-existence, robustness, and reduction hypotheses of the construction.
Floors strengthen the estimates, so no $1/n$ correction is needed.
Theorem~17 of Leverrier and Z\'emor gives $\eta_0 n/2<d$, which allows Lemma~\ref{lem:correctable-cut-entropy} to turn Eq.~\eqref{eq:qt-shortening-deficit} into $\chi_U(\cS)\ge2\tau_0|U|$.
We set $\eta=\eta_0$ and $\tau=\tau_0$, and the entropy identity of Lemma~\ref{lem:correctable-cut-entropy} gives the pure-code-state conclusion as well, which completes the proof of Theorem~\ref{thm:quantum-tanner-target-window}.
For the geometric application, we impose in addition $n\ge\lceil8/\eta_0\rceil$, so that the separator interval is nonempty, along with any eventual threshold of the family.

\subsection{Proof of the intrinsic bound on square patches}
\label{subsec:intrinsic-patch-proof}

To prove Corollary~\ref{cor:intrinsic-patch}, give the patch coordinates $(x,y)$ with $0\le x,y<\sqrt N$, and let $L_u$ contain the sites with $x\le u$.
Moving from $L_u$ to $L_{u+1}$ adds one column of $\sqrt N$ sites, and the data count rises from zero to $n>\eta n/2$, so some $u$ satisfies
\begin{equation}
\frac{\eta n}{2}-\sqrt N\le|Q\cap L_u|\le\frac{\eta n}{2}.
\end{equation}
Only sites in the last $b$ columns of $L_u$ have neighbors outside it, and each has at most $b(2b+1)$ of them, so $|\partial L_u|\le b^2(2b+1)\sqrt N$.
This count rests on an upper bound on the interaction range and does not require the patch graph to be connected.

Suppose first that $\sqrt N\le\eta n/4$.
Then $U=Q\cap L_u$ satisfies $\eta n/4\le|U|\le\eta n/2$, so $\chi_U\ge\gamma\eta n/4>0$.
If $|\partial L_u|=0$, Theorem~\ref{thm:adaptive-instrument-cut-bound} rules out exact extraction.
Otherwise the same theorem gives
\begin{equation}
\depth(C)\ge\frac{\chi_U}{2|\partial L_u|}\ge\frac{\gamma\eta}{8b^2(2b+1)}\frac{n}{\sqrt N}.
\end{equation}
Suppose instead that $\sqrt N>\eta n/4$, so that $(\eta/4)n/\sqrt N<1$.
Since $\eta n\ge2$, the hypothesis applies to a single data qubit $q$, and $\chi_{\{q\}}\ge\gamma>0$, so the cut around the site of $q$ requires a crossing operation by Theorem~\ref{thm:adaptive-instrument-cut-bound}.
Hence $\depth(C)\ge1>(\eta/4)n/\sqrt N$, and both cases give Eq.~\eqref{eq:intrinsic-patch-depth}.

\subsection{Weighted separators}
\label{subsec:weighted-separators}

Separators locate such bottlenecks, namely hardware regions with many data qubits and few edges leaving them.
The argument below applies to any code with the required bound on $\chi_U$, not only to the quantum Tanner family.

The code estimate is indexed by data qubits, whereas a separator acts on all hardware vertices, so we assign weight one to data vertices and weight zero to ancillas.
A separator removes a small set of vertices and leaves two outside sides with no edge between them, each of which may consist of several connected components, and we write this partition as $V(H)\setminus X=A\sqcup B$.

A single small separator, however, does not ensure that the selected data subset lies in the range of the code estimate.
If an outside side still contains too much data, we continue inside that side until its data size enters the required interval.
The separator guarantee must therefore hold on every induced subgraph, so that it remains available at each step.

\begin{definition}[Separator size]
\label{def:weighted-separator-profile}
For a hardware family, let $s_{\hw}(N)$ be a function such that every induced subgraph $H$ with at most $N$ vertices and every nonnegative vertex weight admit a separator $X$ with $|X|\le s_{\hw}(N)$ such that every outside side has at most two thirds of the total weight.
\end{definition}

The recursion will select a data size between $\eta n/8$ and $\eta n/2$.
The upper endpoint allows the code estimate to be applied, while the lower endpoint retains enough data for a bound proportional to $n$.
The hardware argument therefore needs a code estimate only on this interval, and we state this weaker condition separately from the quantum Tanner theorem.

\begin{definition}[Lower bound on $\chi_U(\cS)$]
\label{def:intrinsic-target-window-expansion}
A stabilizer space $\cS$ on $n$ data qubits has parameters $(\alpha,\eta)$, where $\alpha>0$ and $0<\eta\le1$, if
\begin{equation}
\chi_U(\cS)\ge\alpha|U|
\quad\text{whenever}\quad
\frac{\eta n}{8}<|U|\le\frac{\eta n}{2}.
\label{eq:intrinsic-target-window-expansion}
\end{equation}
\end{definition}

A lower bound for all data subsets of size at most $\varepsilon n/2$ implies this condition with $\eta=\min\{\varepsilon,1\}$, and the lower endpoint is needed only by the separator recursion.
For a family with $d\ge\delta n$ and $0<\eta\le\min\{1,\delta\}$, every region in the interval is correctable, so Lemma~\ref{lem:correctable-cut-entropy} expresses the condition as
\begin{equation}
\dim\cS_U
\le
\left(1-\frac{\alpha}{2}\right)|U|
\label{eq:target-window-shortening-rank}
\end{equation}
on that interval.
This formulation separates the role of distance from the additional dimension bound proved by the family theorem.

The recursion selects a region in this interval and controls its boundary through the removed separators.
Comparing the resulting code requirement with the available crossing operations gives the following depth bound.

\begin{theorem}[Depth bound from separators]
\label{thm:weighted-separator-depth}
Let $C$ exactly measure a stabilizer space $\cS$ satisfying Definition~\ref{def:intrinsic-target-window-expansion} with parameters $(\alpha,\eta)$ in the model of Definition~\ref{def:adaptive-local-extraction}.
Suppose its $N$-vertex hardware graph has maximum degree $0<\Delta_{\hw}$ and separator size bound $s_{\hw}(N)$ as in Definition~\ref{def:weighted-separator-profile}.
Put
\begin{equation}
K_\eta=\left\lceil\log_{3/2}\frac{2}{\eta}\right\rceil.
\label{eq:eta-k-definitions}
\end{equation}
If $0<s_{\hw}(N)\le\eta n/4$, then
\begin{equation}
\depth(C)
\ge
\frac{\alpha\eta}{16K_\eta}
\frac{n}{\Delta_{\hw}s_{\hw}(N)}.
\label{eq:weighted-separator-depth-explicit}
\end{equation}
For fixed $\alpha,\eta$ and $s_{\hw}(N)=o(n)$ this is $\Omega(n/[\Delta_{\hw}s_{\hw}(N)])$.
\end{theorem}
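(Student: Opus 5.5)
The plan is to find, by recursive separation, a hardware region $L$ whose data count $|Q\cap L|$ lies in $(\eta n/8,\eta n/2]$ and whose edge boundary is at most about $K_\eta\Delta_{\hw}s_{\hw}(N)$. Then we apply Theorem~\ref{thm:adaptive-instrument-cut-bound} with the code condition of Definition~\ref{def:intrinsic-target-window-expansion}.

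Start with $H_0=G_{\hw}$, with weight one on data and zero on ancillas, and total weight $n$. At step $j$, apply Definition~\ref{def:weighted-separator-profile} to the current induced subgraph $H_j$ to get a separator $X_j$ with $|X_j|\le s_{\hw}(N)$. The remainder splits into sides $A_j\sqcup B_j$, each carrying at most two thirds of the weight of $H_j$. Let $H_{j+1}$ be the heavier side. Its data weight is at most $(2/3)^{j+1}n$, and since $w(A_j)+w(B_j)=w(H_j)-w(X_j)$, it is at least $(w(H_j)-s_{\hw}(N))/2$. Stop at the first $j$ with $w(H_{j})\le\eta n/2$. Since $(2/3)^{K_\eta}\le\eta/2$, this happens after at most $K_\eta$ steps.

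At the stopping step, the previous weight exceeded $\eta n/2$. Using the hypothesis $s_{\hw}(N)\le\eta n/4$, the new weight is at least $(\eta n/2-\eta n/4)/2=\eta n/8$. The endpoint needs care. The bound gives only $\ge\eta n/8$, not strictly greater, so I would either take the lighter side instead when it lands in the window, or note that the inequality is strict when $w(X)>0$ or when the parent weight is strictly larger than $\eta n/2$. I expect this boundary bookkeeping to be the fiddliest point. The fallback is to adjust constants slightly, since the final bound has slack: it uses $16K_\eta$ rather than the sharper value.

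Take $L=V(H_J)$ at the stopping index $J$. Every hardware edge leaving $L$ goes to some removed separator $X_j$ with $j<J$. It cannot go to the other side of a separation, because separated sides have no edges between them, and it cannot go to a vertex outside $H_j$, by induction. Hence
\[
|\partial L|\le\Delta_{\hw}\sum_{j<J}|X_j|\le K_\eta\Delta_{\hw}s_{\hw}(N).
\]
Write $U=Q\cap L$. By Definition~\ref{def:intrinsic-target-window-expansion}, $\chi_U\ge\alpha|U|\ge\alpha\eta n/8>0$. If $|\partial L|=0$, exact extraction is impossible, so the bound holds vacuously. Otherwise Theorem~\ref{thm:adaptive-instrument-cut-bound} gives
\[
\depth(C)\ge\frac{\chi_U}{2|\partial L|}\ge\frac{\alpha\eta n}{16K_\eta\Delta_{\hw}s_{\hw}(N)},
\]
which is Eq.~\eqref{eq:weighted-separator-depth-explicit}. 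The asymptotic statement is immediate, since for $s_{\hw}(N)=o(n)$ the hypothesis $s_{\hw}(N)\le\eta n/4$ holds eventually.
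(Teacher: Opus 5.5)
Your proposal is correct and follows essentially the paper's argument: recursively separate into the heavier side, stop once its data weight is at most $\eta n/2$, charge every boundary edge to one of at most $K_\eta$ removed separators, and apply Theorem~\ref{thm:adaptive-instrument-cut-bound}. The endpoint worry needs no fallback, because your stopping rule gives $w(H_{J-1})>\eta n/2$, so $w(H_J)\ge(w(H_{J-1})-s_{\hw}(N))/2>(\eta n/2-\eta n/4)/2=\eta n/8$ holds strictly, which is exactly how the paper obtains the open lower endpoint.
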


\begin{proof}
It suffices to select a region in which the code estimate applies and to bound the edges leaving it.
To select it, assign data weight one and ancilla weight zero, and set $t=\eta n/2$.
If the current induced subgraph contains more than $t$ data qubits, remove a weighted separator and continue into an outside side with more than $t$ data qubits if one exists.
If no such side exists, stop and take the outside side with larger data weight as $L$.

We first verify that the data size of $L$ lies in the required interval.
At the stopping step, neither side exceeds $t$.
Before that step the subgraph still contained more than $t$ data qubits, and the separator removes at most $s_{\hw}(N)$ of them, so the side with larger data weight contains at least half of the remainder.
The two bounds together give
\begin{equation}
\frac{\eta n}{8}<|Q\cap L|\le\frac{\eta n}{2}
\label{eq:recursive-expansion-window}
\end{equation}
where the lower endpoint follows from $(t-s_{\hw}(N))/2\ge\eta n/8$.

To control the boundary of $L$, we first bound the number of removed separators.
Each continuation reduces the data weight by a factor of at most $2/3$, so at most $K_\eta$ disjoint separators are removed.
Since no edge joins a followed outside side to its discarded counterpart, every edge in the original hardware graph that leaves the final side must meet the union of the removed separators.
The degree bound then gives
\begin{equation}
|\partial_{G_{\hw}}L|
\le
\Delta_{\hw}K_\eta s_{\hw}(N).
\label{eq:weighted-separator-boundary}
\end{equation}
The selected data set lies in the interval of Definition~\ref{def:intrinsic-target-window-expansion}, so $\chi_{Q\cap L}(\cS)>\alpha\eta n/8$.
For a positive boundary, dividing this lower bound by the boundary estimate and applying Theorem~\ref{thm:adaptive-instrument-cut-bound} proves Eq.~\eqref{eq:weighted-separator-depth-explicit}.
If the boundary is zero, Theorem~\ref{thm:adaptive-instrument-cut-bound} instead shows that exact measurement is infeasible, because $\chi_{Q\cap L}(\cS)$ is positive.
\end{proof}

Hardware that excludes a fixed minor, including planar hardware, has small separators, and together with a bound on the vertex degree these separators restrict the number of crossing hardware edges.
Theorem~\ref{thm:weighted-separator-depth} then gives an explicit depth bound.

\begin{corollary}[Conditional fixed-minor-free barrier]
\label{cor:fixed-minor-free-depth}
Under the hypotheses of Theorem~\ref{thm:weighted-separator-depth}, suppose $G_{\hw}$ excludes an $h$-vertex minor and $h^{3/2}\sqrt N\le\eta n/4$.
Then
\begin{equation}
\depth(C)
\ge
\frac{\alpha\eta}{16K_\eta}
\frac{n}{\Delta_{\hw}h^{3/2}\sqrt N}.
\label{eq:fixed-minor-free-depth-explicit}
\end{equation}
For fixed $h,\alpha,\eta$, bounded degree, and $N=\Theta(n)$, this is $\Omega(\sqrt n)$.
\end{corollary}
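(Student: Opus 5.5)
The plan is to derive Corollary~\ref{cor:fixed-minor-free-depth} as a direct specialization of Theorem~\ref{thm:weighted-separator-depth}. The only new input is a weighted separator bound $s_{\hw}(N)$ for graphs excluding a fixed minor, valid for every induced subgraph as Definition~\ref{def:weighted-separator-profile} requires.

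First I invoke the weighted separator theorem of Alon, Seymour, and Thomas~\cite{alonSeparatorTheoremGraphs1990}. It states that every graph on at most $N$ vertices with no $K_h$ minor, equipped with any nonnegative vertex weighting, has a separator of at most $h^{3/2}\sqrt N$ vertices. Removing this separator leaves two sides with no edge between them, each carrying at most two thirds of the total weight.

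I then check heredity. An induced subgraph of a graph excluding an $h$-vertex minor still excludes it, and it has at most $N$ vertices, so the same bound applies. A minor on $h$ vertices is contained in $K_h$, so excluding it also excludes $K_h$ as a minor, and the $K_h$-form of the theorem therefore covers the hypothesis as stated. Hence $s_{\hw}(N)=h^{3/2}\sqrt N$ is admissible in Definition~\ref{def:weighted-separator-profile}. I need to confirm that the cited theorem's two-thirds balance condition matches the one used in the recursion, and that it allows separators that split the outside into arbitrary unions of components. The ASTh statement is exactly of this form: the remaining vertices split into $A\sqcup B$ with no edges between them and each side of weight at most $2/3$.

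Next I verify the numerical hypothesis of Theorem~\ref{thm:weighted-separator-depth}. The assumption $h^{3/2}\sqrt N\le\eta n/4$ gives $s_{\hw}(N)\le\eta n/4$. Positivity holds because $N\ge1$ and $h\ge1$. Substituting $s_{\hw}(N)=h^{3/2}\sqrt N$ into Eq.~\eqref{eq:weighted-separator-depth-explicit} yields Eq.~\eqref{eq:fixed-minor-free-depth-explicit} verbatim.

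For the asymptotic remark, I fix $h$, $\alpha$, $\eta$ and $\Delta_{\hw}$. Then $K_\eta$ is a constant, and $N=\Theta(n)$ makes $n/\sqrt N=\Theta(\sqrt n)$. The side condition $h^{3/2}\sqrt N\le\eta n/4$ holds for all sufficiently large $n$, since $\sqrt N=O(\sqrt n)=o(n)$.

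The only step carrying real content is matching the cited separator theorem's exact constant and weighting conventions. In particular I would check that it is stated for nonnegative vertex weights and not only uniform weights, which it is. With that checked, the rest is substitution.
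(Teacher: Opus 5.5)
Your proposal is correct and follows the paper's proof. The excluded minor is inherited by induced subgraphs, so the Alon--Seymour--Thomas weighted separator supplies $s_{\hw}(N)=h^{3/2}\sqrt N$ throughout the recursion, and the corollary then follows by direct substitution into Theorem~\ref{thm:weighted-separator-depth}. One small correction: the cited theorem is normally stated as ``no component of $G-X$ carries more than half the weight,'' so it is not literally of the two-sided form you describe. The two-thirds form you need follows from this by the standard grouping of components, and the argument is unaffected.
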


\begin{proof}
Theorem~\ref{thm:weighted-separator-depth} requires the weighted separator bound on every induced subgraph.
Since each such subgraph still excludes the same minor, the weighted theorem of Alon, Seymour, and Thomas~\cite{alonSeparatorTheoremGraphs1990} gives $s_{\hw}(N)=h^{3/2}\sqrt N$ throughout the recursion, and Theorem~\ref{thm:weighted-separator-depth} yields the claim.
\end{proof}

For the sufficiently large members of the family in Theorem~\ref{thm:quantum-tanner-target-window}, the code condition holds with $\alpha=2\tau$, which gives the following consequence.

\begin{corollary}[Quantum Tanner fixed-minor-free hardware barrier]
\label{cor:quantum-tanner-fixed-minor-free-depth}
For the family in Theorem~\ref{thm:quantum-tanner-target-window}, every exact hardware-local adaptive measurement on an $N$-vertex hardware graph excluding an $h$-vertex minor obeys
\begin{equation}
\depth(C_j)
\ge
\frac{\tau\eta}{8K_\eta}
\frac{n_j}{\Delta_{\hw}h^{3/2}\sqrt N},
\label{eq:quantum-tanner-fixed-minor-free-depth}
\end{equation}
provided $h^{3/2}\sqrt N\le\eta n_j/4$.
A selected zero-boundary cut makes exact measurement infeasible instead.
For fixed $h,\eta,\tau$, bounded hardware degree, and $N=\Theta(n_j)$, this is $\Omega(\sqrt{n_j})$.
\end{corollary}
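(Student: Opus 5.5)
This corollary is a specialization of Corollary~\ref{cor:fixed-minor-free-depth}, so the plan is to verify its hypotheses for the fixed quantum Tanner family and then substitute constants.

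First, I take the code condition. Theorem~\ref{thm:quantum-tanner-target-window} gives $\chi_U(\cS_j)\ge2\tau|U|$ for every $U$ with $|U|\le\eta n_j/2$, once $j$ is sufficiently large. In particular this holds on the interval $\eta n_j/8<|U|\le\eta n_j/2$. So $\cS_j$ satisfies Definition~\ref{def:intrinsic-target-window-expansion} with parameters $(\alpha,\eta)=(2\tau,\eta)$, and $\eta\le1$ holds because $\eta=\eta_0\le1/2$ by Eq.~\eqref{eq:qt-safe-eta}.

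Second, I check the hardware side. Every induced subgraph of a graph excluding an $h$-vertex minor also excludes that minor. The weighted separator theorem of Alon, Seymour, and Thomas then supplies $s_{\hw}(N)=h^{3/2}\sqrt N$ in the sense of Definition~\ref{def:weighted-separator-profile}. The assumption $h^{3/2}\sqrt N\le\eta n_j/4$ is exactly the condition $0<s_{\hw}(N)\le\eta n/4$ required by Theorem~\ref{thm:weighted-separator-depth}.

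Third, I substitute $\alpha=2\tau$ into Eq.~\eqref{eq:fixed-minor-free-depth-explicit}. This gives $\frac{2\tau\eta}{16K_\eta}=\frac{\tau\eta}{8K_\eta}$, which is Eq.~\eqref{eq:quantum-tanner-fixed-minor-free-depth}. For the zero-boundary alternative, the recursion in the proof of Theorem~\ref{thm:weighted-separator-depth} selects $L$ with $\eta n_j/8<|Q\cap L|$, so $\chi_{Q\cap L}>0$. If $\partial L=\varnothing$, Theorem~\ref{thm:adaptive-instrument-cut-bound} shows exact measurement is impossible. For the asymptotic claim, fix $h,\eta,\tau$, bound $\Delta_{\hw}$, and take $N=\Theta(n_j)$; then $K_\eta$ is constant and the bound is $\Theta(n_j/\sqrt{n_j})=\Omega(\sqrt{n_j})$. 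The hypothesis $h^{3/2}\sqrt N\le\eta n_j/4$ holds for large $j$ because $\sqrt N=O(\sqrt{n_j})=o(n_j)$.

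The only delicate points are bookkeeping. One is that the separator theorem must apply uniformly throughout the recursion, which heredity of minor exclusion guarantees. The other is that "sufficiently large members" must absorb both the family threshold of Theorem~\ref{thm:quantum-tanner-target-window} and the condition $n_j\ge\lceil8/\eta_0\rceil$, so that the selection interval is nonempty. I expect no real obstacle beyond these checks.
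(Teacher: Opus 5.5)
Your proposal is correct and follows the paper's own argument: the paper likewise obtains this corollary by feeding the code estimate of Theorem~\ref{thm:quantum-tanner-target-window}, with $\alpha=2\tau$ on the selection interval, into Corollary~\ref{cor:fixed-minor-free-depth}, where heredity of minor exclusion keeps the Alon--Seymour--Thomas separator bound available throughout the recursion. Your additional checks (that $\eta\le1$, the zero-boundary case, and the large-$j$ thresholds) are consistent with what the paper leaves implicit.
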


\begin{proof}
Theorem~\ref{thm:quantum-tanner-target-window} gives the code estimate with $\alpha=2\tau$ throughout the selection interval, so Corollary~\ref{cor:fixed-minor-free-depth} applies to every region that the recursion may select.
\end{proof}

\subsection{The existing partition method with data-weighted separators}
\label{subsec:bfs-weighted-partitions}

We derive here the depth--space scaling directly from the partition framework of BFS, rather than comparing only with their stated Euclidean formula \cite{baspinLowerBoundOverhead2023}.
Let $C$ implement exact syndrome extraction of an $[[n,k,d]]$ stabilizer code in their adaptive local-channel model, with product ancillas, and partition all hardware sites into sets $\Gamma_i$ with $|\Lambda_i|<d$, where $\Lambda_i=\Gamma_i\cap Q$.
For this argument, $\partial_v\Gamma_i$ includes the vertices on both sides that are incident to an edge crossing $\Gamma_i$, and we write $B_v=\sum_i|\partial_v\Gamma_i|$.
For $B_v>0$, the partition argument gives the conservative bound
\begin{equation}
\depth(C)\ge\frac{k}{3B_v}.
\label{eq:bfs-partition-consequence}
\end{equation}

We give a self-contained derivation for the product input needed here.
Its rank argument in fact improves the denominator from $3B_v$ to $B_v$, but we retain the weaker display for comparison with Lemmas~14 and~22 of BFS as stated.
The derivation bounds the entanglement created from a product input, shows that the corrected output carries total entanglement at least $k$ across the correctable parts, and constructs a suitable partition from separators.

\paragraph{Growth from a product input.}
We measure bipartite entanglement by the relative entropy of entanglement $E_R$, the minimum quantum relative entropy to a separable state.
That is, $E_R(A:B)_\rho=\inf_{\zeta\,\mathrm{separable}}D(\rho\Vert\zeta)$, where $D(\rho\Vert\zeta)=\operatorname{Tr}\rho(\log_2\rho-\log_2\zeta)$ and separable states are mixtures of product states on $A$ and $B$.
Fix a hardware cut $L$, include any private references or purifiers on their respective sides, and suppose the initial state is a product across this cut.
The layers in Definition~12 of BFS \cite{baspinLowerBoundOverhead2023} admit Kraus decompositions into tensor factors on disjoint hardware cliques and a separate factor on the workspace, which we assign to the complementary side.
On a fixed Kraus trajectory, a clique factor acting on $a$ qubits in $L$ and $b$ outside it has operator Schmidt rank at most $2^{2\min(a,b)}\le2^{a+b}$, whereas a factor confined to one side has rank one.
Every vertex of a crossing clique belongs to the vertex boundary, and distinct clique factors have disjoint supports, so with $v_L=|\partial_v L|$ the product of their ranks in one layer is at most $2^{v_L}$.
After $t$ layers, each pure refined trajectory has Schmidt rank at most $2^{t v_L}$ and entanglement entropy at most $t v_L$, and the same estimate holds for mixed product inputs by local purification.
Convexity of the relative entropy of entanglement, followed by its monotonicity under separable corrections and local discards, gives
\begin{equation}
E_R(L:L^c)_{\rm out}\le t v_L.
\label{eq:product-input-boundary-growth}
\end{equation}
The estimate also applies to Definition~\ref{def:adaptive-local-extraction}, where each crossing clique is a two-qubit edge.
It is a bound for a product input, so no assertion about the entanglement increase of an arbitrary incoming state is needed.

\paragraph{Correctable partitions.}
Use a product data input and append a syndrome-dependent Pauli correction that maps each syndrome eigenspace into the code space.
For any partition part, this correction is separable across $\Gamma_i:(\Gamma_i^cX)$, where the syndrome register $X$ is assigned to the complementary side, since its Kraus operators have the form $P_{s,\Gamma_i}\otimes(P_{s,\Gamma_i^c}\otimes|0\rangle\!\langle s|_X)$.
It therefore cannot increase the relative entropy of entanglement across that cut.
After tracing out ancillary registers, let $\rho$ be the corrected data state and $\omega$ the maximally mixed code state.
Correctability gives $\rho_{\Lambda_i}=\omega_{\Lambda_i}$ and decouples $\Lambda_i$ from a purifying reference of $\rho$, so that $S(\rho_{\Lambda_i^c})-S(\rho)=S(\omega_{\Lambda_i})$.
The coherent-information lower bound on the relative entropy of entanglement then gives $E_R(\Lambda_i:\Lambda_i^c)_\rho\ge S(\omega_{\Lambda_i})$.
For completeness, this lower bound follows from $\zeta_{AB}\le I_A\otimes\zeta_B$ for separable $\zeta$, because operator monotonicity of the logarithm implies $D(\rho\Vert\zeta)\ge S(\rho_B)-S(\rho)+D(\rho_B\Vert\zeta_B)$, after which we minimize over $\zeta$, and singular states are handled by a full-rank limit.
Summing and using entropy subadditivity for $\omega$ gives
\begin{equation}
\sum_i E_R(\Lambda_i:\Lambda_i^c)_\rho
\ge\sum_i S(\omega_{\Lambda_i})
\ge S(\omega)=k.
\end{equation}
This is the correctable-partition step of Lemma~22 of BFS.
Equation~\eqref{eq:product-input-boundary-growth} bounds each hardware term by $\depth(C)|\partial_v\Gamma_i|$, so $k\le\depth(C)B_v$, which implies Eq.~\eqref{eq:bfs-partition-consequence}.
The appended correction costs no quantum depth in this entanglement comparison.

It remains to construct a suitable partition from separators.
Suppose the hardware has maximum degree $\Delta_{\hw}$ and the hereditary weighted-separator bound $s_{\hw}(N)$ of Definition~\ref{def:weighted-separator-profile}.
For $d>1$, recursively separate every part containing more than $d-1$ data qubits, giving data sites weight one and all other sites weight zero.
Every child contains at most two thirds of the data of its parent, so there are at most
\begin{equation}
J=\left\lceil\log_{3/2}\frac{n}{d-1}\right\rceil
\end{equation}
levels and at most $2^J-1$ separator sets.
Let $X_{\rm sep}$ be their union, and take the final outside parts together with each individual separator vertex as the hardware partition, so that each part contains fewer than $d$ data qubits.
Every edge between parts meets $X_{\rm sep}$, and each of its two endpoint vertices can be counted in the boundaries of both parts, so
\begin{equation}
B_v\le4\Delta_{\hw}|X_{\rm sep}|
\le4\Delta_{\hw}(2^J-1)s_{\hw}(N).
\end{equation}
Before dividing by a possibly zero boundary, the conservative bound reads $k\le3\depth(C)B_v$, so a zero boundary with $k>0$ is incompatible with extraction in this partition.
Otherwise,
\begin{equation}
\depth(C)\ge
\frac{k}{12\Delta_{\hw}(2^J-1)s_{\hw}(N)}.
\label{eq:bfs-weighted-separator-consequence}
\end{equation}
For $k,d=\Theta(n)$, $J$ is bounded independently of $n$ and $N$, and bounded-degree hardware excluding a fixed minor has $s_{\hw}(N)=O(\sqrt N)$, which gives $\Omega(n/\sqrt N)$ directly from the existing framework.
This derived consequence applies to good stabilizer families without a uniform lower bound on $\chi_U$ for every small subset.

\section{Grid compilation}
\label{app:grid-width-compiler}

This appendix proves Theorem~\ref{thm:adaptive-grid-width-upper} by separating the exact measurement task from its geometric implementation.
We first express the instrument as a constant number of unrestricted CNOT matchings, and we then implement each matching either with prepared Bell connections or with a mesh permutation.
The tiled route counts every resource-preparation step, whereas the mesh route covers the small-workspace case, including $N=n$.
Both routes return every data label to its input site.

\subsection{Measuring disjoint checks without check ancillas}

We first construct a constant-depth circuit with unrestricted pair interactions and no check ancillas.
The graph joining checks with overlapping supports has degree at most $w(\delta-1)$, so its checks can be colored with $c_{\rm chk}=w(\delta-1)+1$ colors, each consisting of disjoint supports.
For one check $S=\epsilon\bigotimes_{j\in A}P_j$, choose a pivot $p\in A$ and a product of one-qubit Clifford gates $B$ that maps every nonidentity factor to $Z$.
With
\begin{equation}
\begin{gathered}
U=\left(\prod_{j\in A\setminus\{p\}}\operatorname{CNOT}_{j\to p}\right)B,\\
U^\dagger Z_pU=\epsilon S,
\end{gathered}
\end{equation}
apply $U$, measure $Z_p$, and apply $U^\dagger$.
If the raw outcome is $t$, the reported bit $s=t\oplus[\epsilon=-1]$ has Kraus operator
\begin{equation}
U^\dagger\frac{I+(-1)^tZ_p}{2}U
=\frac{I+(-1)^sS}{2}.
\end{equation}
For each color, the forward and inverse circuits contain at most $2(w-1)$ CNOT matchings, with at most five additional layers for the one-qubit basis changes and pivot measurement.
To combine the measurement circuits, complete one color before beginning the next.
The commuting check projectors multiply to the joint projector $\Pi_s$, including zero projectors for inconsistent redundant outcomes, so this is the exact joint L\"uders instrument on every input and on inputs entangled with an untouched reference.

\subsection{Fixed terminals and batches of routes}

It remains to implement each CNOT matching on the grid.
Set $q=\lfloor L/4\rfloor$ and first suppose $q^2\ge n$.
Partition a $4q\times4q$ subgrid into $q^2$ tiles of size $4\times4$ and place at most one data qubit per tile, with at most $h=\lceil n/q\rceil$ data terminals in each tile row.
For example, terminal $i$, indexed from zero, can occupy tile $(i\bmod q,\lfloor i/q\rfloor)$.
This placement is fixed for the entire extraction.

For a logical matching, construct the multigraph whose vertices are tile rows and whose edges are matched terminal pairs, allowing a loop for a pair in the same row.
Each row has at most $h$ incident endpoints, so a pair conflicts with at most $2h-2$ other pairs by sharing an endpoint row.
Greedy coloring therefore partitions the matching into at most $2h-1$ batches in which different pairs use disjoint sets of rows.
In one batch, assign every pair in different rows a distinct column, there being at most $\lfloor q/2\rfloor$ such pairs.
Route each pair horizontally from its first terminal to its assigned column, vertically to the other terminal's row, and horizontally to the second terminal, whereas a same-row pair uses the direct horizontal path.
There is at most one horizontal segment in each row and one vertical segment in each assigned column, so no tile-grid edge is shared.
At a tile, at most two paths meet, and their local directional ports are disjoint.
The only two-path intersections are a horizontal passage or terminal attachment together with a vertical passage.
These are thus edge-disjoint routes with a constant amount of switching inside each tile, rather than vertex-disjoint paths on the unexpanded grid.

\subsection{Local Bell preparation and simultaneous fusion}

To implement these routes with local gates, we specify the switching circuit inside each tile.
In each $4\times4$ tile reserve $(0,0)$ for data, and use the five ancilla ports
\begin{equation}
\begin{gathered}
N=(0,2),\quad E=(2,3),\quad S=(3,2),\\
W=(2,0),\quad T=(0,1).
\end{gathered}
\end{equation}
The port $T$ is adjacent to the data, and facing directional ports of adjacent tiles are nearest neighbors.
The ancilla subgraph obtained by deleting $(0,0)$ is connected, and any two named ports can be joined within it by a path of at most four edges.
The local routing pattern specifies a matching of at most two pairs of these ports.
Start the ancillas in $|0\rangle$, and for a prescribed port pair $(a,b)$ apply $H_a$, swap the state at $a$ along a local path until it is next to $b$, apply CNOT to $b$, and reverse the swaps.
This realizes $\operatorname{CNOT}_{a\to b}H_a$ exactly and acts as the identity on the other tile registers, and the identity action holds even when a temporary SWAP path traverses a register entangled with another tile or with a previously prepared port pair.
The second preparation thus preserves the first pair and every spectator state, so preparing two disjoint port pairs serially creates the required product of Bell states, including the crossing pairing, while leaving the data untouched.
All tiles run in parallel, and at most forty layers suffice if each SWAP is decomposed into three CNOTs.
Related constant-depth Bell switching appears in the quadratic-space construction of DBT~\cite{delfosseBoundsStabilizerMeasurement2021}.

To join the local resources into endpoint pairs, Bell-measure the facing ports on every used tile-grid edge.
These measurements have disjoint physical supports, even when the tile-grid edges share a tile, so three elementary layers suffice for their CNOT, Hadamard, and local readout implementations.
For a route visiting $k$ tiles, contracting its $k$ local Bell pairs with $k-1$ Bell measurements leaves a Bell pair between its two terminal ports, with a Pauli determined by the two parities of the Bell outcomes.
Every outcome sequence has probability $4^{-(k-1)}$, and its corrected unnormalized endpoint state is $2^{-(k-1)}|\Phi^+\rangle$ up to a global phase.
This is the usual Bell-contraction identity applied simultaneously to disjoint measured registers.
The parity corrections take one additional one-qubit layer, since classical processing is free.

\subsection{Remote gates, the mesh fallback, and total depth}

For a resulting Bell pair $(a,b)$ beside control $c$ and target $t$, apply $\operatorname{CNOT}_{c\to a}$, measure $Z_a$ with outcome $m$, apply $X_b^m$, apply $\operatorname{CNOT}_{b\to t}$, measure $X_b$ with outcome $k$, and apply $Z_c^k$.
The Kraus operator on the data for each $(m,k)$ is $\tfrac12\operatorname{CNOT}_{c\to t}$.
This implements the desired CNOT in six layers and leaves both data labels at their original sites.
Every pair in the batch runs in parallel, and one batch, including an initial ancilla-reset layer, has depth at most $1+40+3+1+6=51$.
Ancillas are reset and reused between batches, and no Bell state is supplied for free.
The complete matching therefore costs at most $51(2h-1)$ layers.
Every refined remote-gate branch is a scalar multiple of the intended CNOT, so composition preserves the exact joint instrument.
All classical records are finite for finite $n,N$.

The narrow-grid regime $q^2<n$ remains, and there we implement the same ancilla-free logical circuit by ordinary mesh permutations.
Assign the pairs of one logical matching to different adjacent edges of a snake-path matching and complete the assignment to a permutation of all $L^2$ tokens.
The row--column--row routing construction implements this permutation in at most $3L$ parallel SWAP layers, by edge-coloring the regular bipartite graph of source and destination rows to select intermediate columns and then using odd-even transpositions along the rows, columns, and destination rows.
Apply the adjacent CNOT layer and undo the permutation, for at most $18L+1$ layers in the CNOT alphabet.
This returns all data to their fixed sites and needs no additional check register.
It is the same mesh-permutation ingredient as the known linear-space compiler~\cite{baconSparseQuantumCodes2017b}, applied here to a circuit with intermediate pivot measurements.

To combine the two regimes, observe that $q^2<n$ implies $L<4(\sqrt n+1)\le8\sqrt n$, so $L\le64n/L$.
In the tiled regime, $L<4(q+1)\le8q$, so $h\le8n/L+1$.
Together with the constant number of logical CNOT matchings, these estimates prove Eq.~\eqref{eq:adaptive-grid-width-upper} over the entire range $N\ge n$.
For an $A\times B$ rectangle with $A\le B\le\kappa A$, use $q=\lfloor A/4\rfloor$ in the tiled construction and the full-rectangle row--column--row permutation in the remaining regime.
The latter costs at most $2B+A$ SWAP layers per permutation, and this regime has $AB=O_\kappa(n)$, giving the claimed bounded-aspect-ratio extension.

\clearpage
\phantomsection
\addcontentsline{toc}{section}{References}
\begingroup
\interlinepenalty=10000
\bibliographystyle{manuscript-alpha}
\bibliography{manuscript-references}

@inproceedings{alonSeparatorTheoremGraphs1990,
  title = {A Separator Theorem for Graphs with an Excluded Minor and Its Applications},
  booktitle = {Proceedings of the Twenty-Second Annual {{ACM}} Symposium on {{Theory}} of Computing  - {{STOC}} '90},
  author = {Alon, N. and Seymour, P. and Thomas, R.},
  year = 1990,
  pages = {293--299},
  publisher = {ACM Press},
  address = {Baltimore, Maryland, United States},
  doi = {10.1145/100216.100254},
  url = {http://portal.acm.org/citation.cfm?doid=100216.100254}
}

@article{audenaertEntanglementMixedStabilizer2005,
  title = {Entanglement on Mixed Stabilizer States: Normal Forms and Reduction Procedures},
  author = {Audenaert, Koenraad M R and Plenio, Martin B},
  year = 2005,
  month = aug,
  journal = {New Journal of Physics},
  volume = {7},
  number = {1},
  pages = {170},
  doi = {10.1088/1367-2630/7/1/170},
  url = {https://doi.org/10.1088/1367-2630/7/1/170}
}

@article{baconSparseQuantumCodes2017b,
  title = {Sparse Quantum Codes from Quantum Circuits},
  author = {Bacon, Dave and Flammia, Steven T. and Harrow, Aram W. and Shi, Jonathan},
  year = 2017,
  month = apr,
  journal = {IEEE Transactions on Information Theory},
  volume = {63},
  number = {4},
  pages = {2464--2479},
  doi = {10.1109/TIT.2017.2663199},
  url = {https://ieeexplore.ieee.org/document/7839955}
}

@article{bandyopadhyayEntanglementCostNonlocal2009,
  author = {Bandyopadhyay, Somshubhro and Brassard, Gilles and Kimmel, Shelby and Wootters, William K.},
  title = {Entanglement Cost of Nonlocal Measurements},
  journal = {Physical Review A},
  volume = {80},
  number = {1},
  pages = {012313},
  year = {2009},
  doi = {10.1103/PhysRevA.80.012313},
  eprint = {0809.2264},
  archivePrefix = {arXiv},
  url = {https://doi.org/10.1103/PhysRevA.80.012313}
}

@misc{baspinLowerBoundOverhead2023,
  title = {A Lower Bound on the Overhead of Quantum Error Correction in Low Dimensions},
  author = {Baspin, Nou{\'e}dyn and Fawzi, Omar and Shayeghi, Ala},
  year = 2023,
  month = feb,
  number = {arXiv:2302.04317},
  eprint = {2302.04317},
  publisher = {arXiv},
  doi = {10.48550/arXiv.2302.04317},
  url = {http://arxiv.org/abs/2302.04317},
  archiveprefix = {arXiv}
}

@article{bombinSingleshotFaulttolerantQuantum2015,
  title = {Single-Shot Fault-Tolerant Quantum Error Correction},
  author = {Bomb{\'i}n, H{\'e}ctor},
  year = 2015,
  month = sep,
  journal = {Physical Review X},
  volume = {5},
  number = {3},
  pages = {031043},
  publisher = {American Physical Society},
  doi = {10.1103/PhysRevX.5.031043},
  url = {https://link.aps.org/doi/10.1103/PhysRevX.5.031043}
}

@article{bravyiHighthresholdLowoverheadFaulttolerant2024,
  title = {High-Threshold and Low-Overhead Fault-Tolerant Quantum Memory},
  year = 2024,
  month = mar,
  journal = {Nature},
  volume = {627},
  number = {8005},
  pages = {778--782},
  publisher = {Nature Publishing Group},
  doi = {10.1038/s41586-024-07107-7},
  url = {https://www.nature.com/articles/s41586-024-07107-7},
  author = {Bravyi, Sergey and Cross, Andrew W. and Gambetta, Jay M. and Maslov, Dmitri and Rall, Patrick and Yoder, Theodore J.}
}

@article{bravyiTradeoffsReliableQuantum2010,
  title = {Tradeoffs for Reliable Quantum Information Storage in {{2D}} Systems},
  author = {Bravyi, Sergey and Poulin, David and Terhal, Barbara},
  year = 2010,
  month = feb,
  journal = {Physical Review Letters},
  volume = {104},
  number = {5},
  pages = {050503},
  publisher = {American Physical Society},
  doi = {10.1103/PhysRevLett.104.050503},
  url = {https://link.aps.org/doi/10.1103/PhysRevLett.104.050503}
}

@article{campbellTheorySingleshotError2019,
  title = {A Theory of Single-Shot Error Correction for Adversarial Noise},
  author = {Campbell, Earl T},
  year = 2019,
  month = feb,
  journal = {Quantum Science and Technology},
  volume = {4},
  number = {2},
  pages = {025006},
  publisher = {IOP Publishing},
  doi = {10.1088/2058-9565/aafc8f},
  url = {https://doi.org/10.1088/2058-9565/aafc8f}
}

@misc{delfosseBoundsStabilizerMeasurement2021,
  title = {Bounds on Stabilizer Measurement Circuits and Obstructions to Local Implementations of Quantum {{LDPC}} Codes},
  author = {Delfosse, Nicolas and Beverland, Michael E. and Tremblay, Maxime A.},
  year = 2021,
  month = sep,
  number = {arXiv:2109.14599v1},
  eprint = {2109.14599v1},
  publisher = {arXiv},
  doi = {10.48550/arXiv.2109.14599},
  url = {https://arxiv.org/abs/2109.14599v1},
  archiveprefix = {arXiv}
}

@misc{fattalEntanglementStabilizerFormalism2004,
  title = {Entanglement in the Stabilizer Formalism},
  author = {Fattal, David and Cubitt, Toby S. and Yamamoto, Yoshihisa and Bravyi, Sergey and Chuang, Isaac L.},
  year = 2004,
  month = jun,
  number = {arXiv:quant-ph/0406168},
  eprint = {quant-ph/0406168},
  publisher = {arXiv},
  doi = {10.48550/arXiv.quant-ph/0406168},
  url = {http://arxiv.org/abs/quant-ph/0406168},
  archiveprefix = {arXiv}
}

@inproceedings{fawziConstantOverheadQuantum2018,
  title = {Constant Overhead Quantum Fault-Tolerance with Quantum Expander Codes},
  booktitle = {2018 {{IEEE}} 59th {{Annual Symposium}} on {{Foundations}} of {{Computer Science}} ({{FOCS}})},
  author = {Fawzi, Omar and Grospellier, Antoine and Leverrier, Anthony},
  year = 2018,
  month = oct,
  pages = {743--754},
  doi = {10.1109/FOCS.2018.00076},
  url = {https://ieeexplore.ieee.org/document/8555154}
}

@article{gottesmanFaulttolerantQuantumComputation2014,
  title = {Fault-Tolerant Quantum Computation with Constant Overhead},
  author = {Gottesman, Daniel},
  year = 2014,
  month = nov,
  journal = {Quantum Information and Computation},
  volume = {14},
  number = {15\&16},
  pages = {1339--1371},
  doi = {10.26421/QIC14.15-16-5},
  url = {http://www.rintonpress.com/journals/doi/QIC14.15-16-5.html}
}

@article{guSingleshotDecodingGood2024,
  title = {Single-Shot Decoding of Good Quantum {{LDPC}} Codes},
  year = 2024,
  month = mar,
  journal = {Communications in Mathematical Physics},
  volume = {405},
  number = {3},
  pages = {85},
  doi = {10.1007/s00220-024-04951-6},
  url = {https://doi.org/10.1007/s00220-024-04951-6},
  author = {Gu, Shouzhen and Tang, Eugene and Caha, Libor and Choe, Shin Ho and He, Zhiyang and others}
}

@phdthesis{kimConditionalIndependenceQuantum2013,
  title = {Conditional Independence in Quantum Many-Body Systems},
  author = {Kim, Isaac Hyun},
  year = 2013,
  month = may,
  doi = {10.7907/PZJN-A841},
  url = {https://resolver.caltech.edu/CaltechTHESIS:05102013-172241867},
  school = {California Institute of Technology}
}

@article{knillTheoryQuantumErrorcorrecting1997,
  title = {Theory of Quantum Error-Correcting Codes},
  author = {Knill, Emanuel and Laflamme, Raymond},
  year = 1997,
  month = feb,
  journal = {Physical Review A},
  volume = {55},
  number = {2},
  pages = {900--911},
  publisher = {American Physical Society},
  doi = {10.1103/PhysRevA.55.900},
  url = {https://link.aps.org/doi/10.1103/PhysRevA.55.900}
}

@inproceedings{leverrierQuantumTannerCodes2022,
  title = {Quantum {Tanner} codes},
  booktitle = {2022 {{IEEE}} 63rd {{Annual Symposium}} on {{Foundations}} of {{Computer Science}} ({{FOCS}})},
  author = {Leverrier, Anthony and Z{\'e}mor, Gilles},
  year = 2022,
  month = oct,
  pages = {872--883},
  publisher = {IEEE},
  address = {Denver, CO, USA},
  doi = {10.1109/FOCS54457.2022.00117},
  url = {https://ieeexplore.ieee.org/document/9996782/}
}

@article{nielsenQuantumDynamicsPhysical2003,
  title = {Quantum Dynamics as a Physical Resource},
  year = 2003,
  month = may,
  journal = {Physical Review A},
  volume = {67},
  number = {5},
  pages = {052301},
  publisher = {American Physical Society},
  doi = {10.1103/PhysRevA.67.052301},
  url = {https://link.aps.org/doi/10.1103/PhysRevA.67.052301},
  author = {Nielsen, Michael A. and Dawson, Christopher M. and Dodd, Jennifer L. and Gilchrist, Alexei and Mortimer, Duncan and others}
}

@inproceedings{panteleevAsymptoticallyGoodQuantum2022,
  title = {Asymptotically Good Quantum and Locally Testable Classical {{LDPC}} Codes},
  booktitle = {Proceedings of the 54th {{Annual ACM SIGACT Symposium}} on {{Theory}} of {{Computing}}},
  author = {Panteleev, Pavel and Kalachev, Gleb},
  year = 2022,
  month = jun,
  series = {{{STOC}} 2022},
  pages = {375--388},
  publisher = {Association for Computing Machinery},
  address = {New York, NY, USA},
  doi = {10.1145/3519935.3520017},
  url = {https://dl.acm.org/doi/10.1145/3519935.3520017}
}

@misc{zhangOptimalCompilationSyndrome2026,
  title = {Optimal Compilation of Syndrome Extraction Circuits for General Quantum {{LDPC}} Codes},
  year = 2026,
  month = mar,
  journal = {arXiv.org},
  url = {https://arxiv.org/abs/2603.21499v1},
  author = {Zhang, Kai and Gao, Dingchao and Yang, Zhaohui and Zhou, Runshi and Liu, Fangming and others}
}

@article{terhalSchmidtNumberDensity2000,
  author = {Terhal, Barbara M. and Horodecki, Pawe{\l}},
  title = {{Schmidt} number for density matrices},
  journal = {Physical Review A},
  volume = {61},
  number = {4},
  pages = {040301(R)},
  year = {2000},
  month = mar,
  doi = {10.1103/PhysRevA.61.040301},
  eprint = {quant-ph/9911117v4},
  archivePrefix = {arXiv},
  url = {https://doi.org/10.1103/PhysRevA.61.040301}
}

@article{zhaoGraphbasedApproachEntanglement2026,
  title = {Graph-Based Approach to Entanglement Entropy of Quantum Error-Correcting Codes},
  author = {Zhao, Wuxu and Fang, Menglong and Su, Daiqin},
  year = 2026,
  month = may,
  journal = {Physical Review A},
  volume = {113},
  number = {5},
  pages = {052412},
  publisher = {American Physical Society},
  doi = {10.1103/jc6b-txy9},
  url = {https://link.aps.org/doi/10.1103/jc6b-txy9}
}

@article{eisertOptimalLocalImplementation2000,
  title = {Optimal Local Implementation of Nonlocal Quantum Gates},
  author = {Eisert, J. and Jacobs, K. and Papadopoulos, P. and Plenio, M. B.},
  year = 2000,
  month = oct,
  journal = {Physical Review A},
  volume = {62},
  number = {5},
  pages = {052317},
  publisher = {American Physical Society},
  doi = {10.1103/PhysRevA.62.052317},
  url = {https://link.aps.org/doi/10.1103/PhysRevA.62.052317}
}

@article{beckmanCausalLocalizableQuantum2001,
  title = {Causal and Localizable Quantum Operations},
  author = {Beckman, David and Gottesman, Daniel and Nielsen, M. A. and Preskill, John},
  year = 2001,
  month = oct,
  journal = {Physical Review A},
  volume = {64},
  number = {5},
  pages = {052309},
  publisher = {American Physical Society},
  doi = {10.1103/PhysRevA.64.052309},
  url = {https://link.aps.org/doi/10.1103/PhysRevA.64.052309}
}

@article{limTradeoffInformationGain2025,
  title = {Trade-off between Information Gain and Disturbance in Local Discrimination of Entangled Quantum States},
  author = {Lim, Youngrong and Hhan, Minki and Kwon, Hyukjoon},
  year = 2025,
  month = mar,
  journal = {Quantum Science and Technology},
  volume = {10},
  number = {2},
  pages = {025048},
  publisher = {IOP Publishing},
  doi = {10.1088/2058-9565/adc034},
  url = {https://doi.org/10.1088/2058-9565/adc034}
}

@misc{chandraDistributedQuantumError2026,
  title = {Distributed Quantum Error Correction with Bivariate Bicycle Codes in a Modular Architecture},
  author = {Chandra, Nitish Kumar and Kaur, Eneet and Nejabati, Reza and Seshadreesan, Kaushik P.},
  year = 2026,
  month = may,
  number = {arXiv:2605.04663},
  eprint = {2605.04663},
  publisher = {arXiv},
  doi = {10.48550/arXiv.2605.04663},
  url = {http://arxiv.org/abs/2605.04663},
  archiveprefix = {arXiv}
}

@misc{shawNetworkedRealizationQuantum2026,
  title = {Networked Realization of Quantum {{LDPC}} Codes},
  author = {Shaw, Swayangprabha and Rengaswamy, Narayanan},
  year = 2026,
  month = apr,
  number = {arXiv:2604.25026},
  eprint = {2604.25026},
  publisher = {arXiv},
  doi = {10.48550/arXiv.2604.25026},
  url = {http://arxiv.org/abs/2604.25026},
  archiveprefix = {arXiv}
}

@article{akibueOptimizingEntanglementManipulation2026,
  title = {Optimizing Entanglement Manipulation via Algebraic--Geometric Decompositions and Semi-Definite Programming Hierarchies},
  author = {Akibue, Seiseki and Miyazaki, Jisho and Osaka, Hiroyuki},
  year = 2026,
  month = aug,
  journal = {Letters in Mathematical Physics},
  volume = {116},
  number = {5},
  pages = {110},
  doi = {10.1007/s11005-026-02138-9},
  url = {https://doi.org/10.1007/s11005-026-02138-9}
}

@article{yamasakiGraphAssociatedEntanglement2017,
  author = {Yamasaki, Hayata and Soeda, Akihito and Murao, Mio},
  title = {Graph-associated entanglement cost of a multipartite state in exact and finite-block-length approximate constructions},
  journal = {Physical Review A},
  volume = {96},
  pages = {032330},
  year = {2017},
  doi = {10.1103/PhysRevA.96.032330},
  eprint = {1705.00006},
  archivePrefix = {arXiv},
  url = {https://arxiv.org/abs/1705.00006v2}
}
\endgroup
\end{document}